\newcommand{\mcPo}{\mcP^{(1)}}
\newcommand{\mcPt}{\mcP^{(2)}}
\newcommand{\mcPi}{\mcP^{(i)}}
\newcommand{\mcQo}{\mcQ^{(1)}}
\newcommand{\mcQi}{\mcQ^{(i)}}
	\newcommand{\bbF}{\mathbb{F}}
	\newcommand{\bbN}{\mathbb{N}}
	\newcommand{\bbZ}{\mathbb{Z}}
\newcommand{\mcA}{\mathcal{A}}	\newcommand{\mcB}{\mathcal{B}}
\newcommand{\mcC}{\mathcal{C}}	
	\newcommand{\mcF}{\mathcal{F}}
\newcommand{\mcG}{\mathcal{G}}	\newcommand{\mcH}{\mathcal{H}}
\newcommand{\mcM}{\mathcal{M}}	\newcommand{\mcN}{\mathcal{N}}
\newcommand{\mcO}{\mathcal{O}}	\newcommand{\mcP}{\mathcal{C}}
\newcommand{\mcQ}{\mathcal{C}'}
	\newcommand{\bmF}{\boldsymbol{F}}
	\newcommand{\bmL}{\boldsymbol{L}}
\newcommand{\bmM}{\boldsymbol{M}}
	\newcommand{\bmp}{\boldsymbol{p}}
	\newcommand{\bmv}{\boldsymbol{v}}
   \newcommand{\sF}{{\mathscr{F}}}
	\newcommand{\mfR}{\mathfrak{R}}
\newcommand{\wt}{\widetilde}
\newcommand{\vA}{\vec{A}}
\newcommand{\vI}{\vec{I}}	\newcommand{\vJ}{\vec{J}}
\newcommand{\abs}[1]{\left|#1\right|}                  %\abs{x}        is |x|
\renewcommand{\A}{\mcA^{(1)}}
\newcommand{\B}{\mcA^{(2)}}
\newcommand{\N}{\mcN^{(1)}}
\renewcommand{\M}{\mcN^{(2)}}
\newcommand{\aA}{A^{(1)}}
\newcommand{\bB}{A^{(2)}}
\newcommand{\sa}{a^{(1)}}
\newcommand{\sab}{a^{(2)}}
\newcommand{\n}{n^{(1)}}
\newcommand{\m}{n^{(2)}}
\renewcommand{\epsilon}{\varepsilon}
\renewcommand{\epsilon}{\varepsilon}
\newcommand{\ignore}[1]{}
\newtheorem{property}[theorem]{Property}
\title{A parallel algorithm for the odd two-face shortest $k$-disjoint path problem} %TODO Please add
\author{Srijan Chakraborty}{Chennai Mathematical Institute}{srijanc@cmi.ac.in}{}{}
\author{Samir Datta}{Chennai Mathematical Institute}{sdatta@cmi.ac.in}{}{}%TODO mandatory, please use full name; only 1 author per \author macro; first two parameters are mandatory, other parameters can be empty. Please provide at least the name of the affiliation and the country. The full address is optional. Use additional curly braces to indicate the correct name splitting when the last name consists of multiple name parts.
\authorrunning{S. Chakraborty and S. Datta} %TODO mandatory. First: Use abbreviated first/middle names. Second (only in severe cases): Use first author plus 'et al.'
\keywords{disjoint paths, planar graph, parallel algorithm, cycle cover, permanent,
	involution}%TODO mandatory; please add comma-separated list of keywords
\begin{document}

\maketitle

%TODO mandatory: add short abstract of the document
\begin{abstract}
The shortest Disjoint Path problem ($\SDPP$) requires us to find pairwise
vertex disjoint paths between $k$ designated pairs of terminal vertices such that the sum of the path lengths is minimum. 
The focus here is on $\SDPP$ restricted to planar graphs where all terminals 
are arbitrarily partitioned over two distinct faces with the 
additional restriction that each face is required to contain an odd number 
of terminals. We call this problem the Odd two-face planar $\SDPP$.
It is shown that this problem is solvable in randomized 
polynomial time and even in $\RNC$.
This is the first parallel (or even polynomial time) solution for
the problem.

Our algorithm combines ideas from the randomized solution for 
$2$-$\SDPP$ by Bj\"orklund and 
Huslfeldt with its parallelization by Datta and Jaiswal along with 
the deterministic algorithm for One-face planar $\SDPP$ by Datta, Iyer, Kulkarni
and Mukherjee. 

The proof uses a combination of two involutions
to reduce a system of linear equations modulo a power of $2$ to a system of 
triangular form that is, therefore, invertible. This, in turn,
 is proved by showing that the
matrix of the equations, can be interpreted as (the adjacency matrix of)
 a directed acyclic graph (DAG).
While our algorithm is primarily algebraic the proof remains combinatorial. 

We also give a parallel algorithm for the $(A+B)$-$\SDPP$ introduced by Hirai 
and Namba.

\end{abstract}

\section{Introduction}
\label{sec:intro}
% DPP : informal description, distinct from S,T-linkag
Let $G$ be a graph and $\vec{S} = (s_1,\ldots,s_k), \vec{T} = (t_1,\ldots,t_k)$
be two sequences of vertices, all distinct, the Disjoint Path Problem 
% EDPP: will not be touched upon
($\DPP$) is to determine if there are pairwise vertex disjoint paths\footnote{For us $\DPP$ is
synonymous with \emph{vertex} disjoint paths and we will not have occasion to delve 
into the extant, large literature on \emph{edge} disjoint paths.
Also notice that $\DPP$ is more specific than finding
disjoint paths between two \emph{sets} of vertices $S,T$. Our problem is distinct from disjoint shortest paths which has a polynomial time algorithm \cite{Lochet}.}
between the terminals $s_i,t_i$ for each $i \in \{1,\ldots,k\}$. 
\subparagraph*{Motivation}
$\DPP$ and its variants have played an important role in theoretical computer
science with applications ranging from network routing \cite{ORS,SM}, VLSI design
\cite{AKW} to the celebrated Graph Minor project of Robertson and Seymour
\cite{RSXIII}. It was also one of the problems that figured in Karp's classic
paper \cite{Karp} on $\NP$-completeness. 
% undirected planar and directed planar 
The planar restrictions of
$\DPP$ remain of current interest to the parameterized community \cite{LMPSZ,WZ,COO}.
$\DPP$ can also be cast as an optimization problem and approximation algorithms
for it form an active area of research \cite{CKN,CKN2}. 
\subparagraph*{Related Work}
% directed vs. undirected: both NP-hard
$\DPP$ is 
$\NP$-complete for both undirected and directed graphs \cite{Karp,HP}. Somewhat more
restrictive is the $k$-$\DPP$ where the terminal sequences are of length 
precisely $k$.
% k-DPP: directed NP-hard for k=2
The directed problem remains $\NP$-complete even for $k=2$, \cite{FHW}. For
% k-DPP: Robertson Seymour in parameterized cubic time,[KKR] para quadratic time
undirected graphs, 
Robertson and Seymour \cite{RSXIII} proved that the problem is in cubic time with 
a galactic dependence on $k$. More recently, Kawarabayashi et al \cite{KKR}
have reduced the time to quadratic from cubic. %The parameterized dependence 
%on $k$ has also been improved \cite{}. 

% DAGs [Schrijver] and Planar graphs [Schrijver] XP-time (???)
Meanwhile, for certain subclasses of directed graphs viz.
planar directed graphs, polynomial time algorithms (with exponents depending 
on $k$) are known \cite{Schrijver} for $k$-$\DPP$. Based on these
parameterized algorithms have been developed for various variants of planar
problems \cite{Pilipczuk,LMPSZ}.

%\subparagraph*{Our problem}
% shortest undirected k-DPP: unknown 
% shortest undirected 2-DPP: randomized polytime [BH], randomized NC [DJ] 
We will be concerned with the problem of determining if there is a $k$-$\DPP$
solution with sum of weights at most a given value. In other words we will be
interested in solving the shortest $k$-$\DPP$ or $\SDPP$.
 For $k > 2$ it is not known if
the problem tractable in (randomized) polynomial time or not. For $k = 2$,
Bj\"orklund and Husfeldt \cite{BH}  give a randomized polynomial time algorithm
for the problem based on computing a univariate polynomial permanent modulo
$4$. Notice that permanent modulo $2$ is identical to determinant modulo $2$
since the only difference between determinant and permanent is the sign and 
${{-1 \equiv 1}}\bmod{2}$. This can be lifted to yield the value of
the permanent of a univariate polynomial matrix modulo $4$ in polynomial time.
Using the isolation lemma \cite{MVV} it was shown in \cite{BH} that 
finding the least degree monomial in the linear combination of three 
permanents modulo $4$ yields a method to infer the weight of $2$-$\SDPP$. 

% shortest undirected 2-DPP in cubic/subcubic planar graphs: in NC [BH_ISAAC]??

The above technique was based on a thread originating from Valiant \cite{Valiant} who showed that
the permanent of an integer matrix modulo $4$ can be computed in polytime 
which was refined to $\ParityL \subseteq \NC^2$ in \cite{BKR}. This technique
was extended to univariate polynomials in \cite{DJ} to yield the first 
(randomized) parallel algorithm for univariate permanent modulo $2^t$ for a 
constant $t$. Combined with \cite{BH} this yielded the first (randomized)
parallel algorithm for $2$-$\SDPP$.

% shortest planar k-DPP one-face: decision in RNC [DIKM]
In a different thread, \cite{DIKM} showed that $k$-$\SDPP$ in
planar graphs where all terminals lie on a single face can be reduced to
the sum of $4^k$ univariate determinants. The basic idea was to add ``demand''
edges inside the one face containing terminals and compute a generating function for cycle covers where each cycle cover is in correspondence with some $k$ disjoint paths (though not always in the desired pairing). This leads to $4^k$ linear equations with the right hand 
side being the sum of cycle covers compatible with the applied demand edges 
which can therefore can be evaluated as a determinant. The left hand side
consisted of the sum of various types of $k$ disjoint paths. It was noticed
in \cite{DIKM} that the left hand side gave rise to a square matrix that is
upper triangular and so can be inverted to supply the generating function 
of the $k$ disjoint paths of various types.

% shortest planar k-DPP one-face with one deviation terminal: decision in RNC
This was generalised to one-face $k$-$\SDPP$ with one extra arbitrarily placed 
terminal, by  Kobayashi and Terao \cite{KT}. 
They generalised the technique from Hirai and Namba \cite{HiraiNamba} who 
gave an alternate proof of Bj\"orklund and Husfeldt \cite{BH} based on computing Hafnian modulo $4$.

% shortest planar k-DPP one-face sequential demands: polytime [Borradaile et al]
% shortest planar k-DPP source-face and sink-face: decision in P [deVS]
The algorithms from \cite{BH,DIKM} are algebraic while there has been concurrent
combinatorial work on similar shortest path problems. Notably, 
\cite{Borradaile} showed
that the ``sequential'' version of the one-face problem is solvable in 
polytime using a flow based algorithm. Similarly when the sources are on one 
face and sinks are on the other, the shortest disjoint path problem can be
solved in polynomial time using flows \cite{deVerdierSchrijver}.
% shortest planar k-DPP two-face (with  \leq 3 pairs): RNC (Kobayashi Sommer)
Similarly when the $k$ pairs are distributed arbitrarily across two distinct
faces but $k \leq 2$ then the problem is solved in \cite{KobayashiSommer}.
% shortest planar k-DPP two-face (odd terminals/face): RNC (this work)

\subparagraph*{Results}
We focus on the problem of finding shortest $k$-disjoint paths between $2k$ terminals 
distributed arbitrarily amongst two faces under the proviso
 that each face contains an odd 
number of terminals. We call the problem ``Odd two face $k$-$\SDPP$'' and show
the following:
%\begin{restatable}[Main Theorem]{theorem}{main}
\begin{theorem}\label{theorem:1}
	The decision as well as the search version of the Odd Two-Face $k$-$\SDPP$ is in randomized $\NC^2$ for constant $k$.
\end{theorem}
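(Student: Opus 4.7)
The plan is to extend the algebraic cycle-cover framework of \cite{DIKM} for one-face $\SDPP$ to the two-face setting, combined with the parallel univariate-permanent-modulo-$2^t$ algorithm of \cite{DJ} in the spirit of \cite{BH}.

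First, I would introduce a family of \emph{demand edge configurations}. Since each of the two faces $F_1, F_2$ contains an odd number of terminals, the total is even and we can pair up all $2k$ terminals using some edges inside $F_1$, some inside $F_2$, and an odd number of ``cross'' edges linking terminals on different faces. The odd parity hypothesis is exactly what guarantees that at least one cross edge is required, so the problem genuinely couples the two faces rather than decoupling into two instances of the \cite{DIKM} setting. For each demand configuration $D$, I would form an auxiliary graph $G_D$ (for example by cutting the sphere along an arc joining $F_1$ and $F_2$ and routing the cross demand edges through this arc, so that $G_D$ remains of bounded genus) and compute the univariate length-generating polynomial of its cycle covers modulo $4$ in $\RNC^2$ via \cite{DJ}. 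Each such cycle cover decomposes into a collection of vertex-disjoint walks pairing the $2k$ terminals through the demand edges, plus spurious closed cycles whose contribution vanishes modulo $2$ and whose contribution modulo $4$ is controllable.

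Second, I would assemble a linear system whose unknowns $P_\sigma(x)$ are length-generating polynomials of $k$-disjoint path systems realising a prescribed pairing $\sigma$ of the $2k$ terminals; the desired $\SDPP$ pairing $\sigma^\star = \{\{s_i,t_i\}\}_{i=1}^{k}$ is the unknown we wish to extract. Rows are indexed by demand configurations $D$, with the computed cycle-cover polynomials on the right. The key structural claim, following the strategy announced in the abstract, is that after a suitable indexing the coefficient matrix modulo $4$ is \emph{upper triangular}. I would verify this by exhibiting the matrix as the adjacency matrix of a directed acyclic graph built from two sign-reversing involutions---one acting on demand endpoints on $F_1$, one on those on $F_2$---each cancelling off-diagonal contributions outside its fixed-point set, with the common fixed points indexing the diagonal. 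Once triangularity is in hand, back-substitution in $\NC$ recovers $P_{\sigma^\star}(x)$, and the Mulmuley--Vazirani--Vazirani isolation lemma \cite{MVV} then isolates and extracts a minimum-weight solution, giving both the decision and the search version in $\RNC^2$.

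The main obstacle will be the DAG/triangularity verification in the two-face setting: in the one-face case of \cite{DIKM} a single cyclic order on the face induced a linear order on non-crossing pairings, making triangularity almost automatic. Here I will design the two involutions so that their composition strictly decreases an integer potential (for instance, the number of crossings of the pairing with a fixed reference arc joining $F_1$ and $F_2$, or a lexicographic combination of the two face orders), which guarantees that the induced arrows contain no directed cycle. A secondary subtlety is to ensure that the modulo-$4$ error introduced by spurious closed cycles in the cycle covers does not obstruct the triangular inversion; the compatibility with the permanent-modulo-$2^t$ framework of \cite{DJ} is what makes this accounting go through.
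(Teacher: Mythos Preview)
Your high-level outline---permanents modulo a power of $2$, a square linear system with unknowns indexed by pairings, two involutions, triangularity via a DAG, then isolation---matches the paper's architecture. But two concrete ingredients are missing, and without them the scheme does not go through.

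\textbf{The axis-crossing variable.} You propose routing the cross demand edges through a cut arc so that ``$G_D$ remains of bounded genus''. The paper does \emph{not} alter the topology. Instead it keeps $G$ planar, fixes a dual path $P_{\mathrm{axis}}$ joining the two faces, and multiplies each primal edge crossed by it by a second indeterminate $y^{\pm 1}$. The permanent is now \emph{bivariate}; the exponent of $y$ records the signed axis-crossing number of a path system. The crucial fact (Lemmas~\ref{lemma:axise},~\ref{lemma:Lemma axis}) is that two configurations $\mcP\neq\mcQ$ with identical same-face paths are distinguished by their axis-crossing class modulo $q$. This is precisely what lets the rows be indexed by triples $(J_1,J_2,\tau)$ with $\tau\in\bbZ_q$, and what makes the system square (Lemma~\ref{lem:c1}). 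Your demand-edge picture has no mechanism to separate such $f_1,f_2$-equivalent configurations; they would collapse to the same row and the system would be underdetermined.

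\textbf{Modulus $4$ is not enough.} You plan to work modulo $4$ as in \cite{BH}. The paper works modulo $2^{f_0(k)}$ with $f_0(k)=(2k^2\log k)2^{2k}$, because the matrix $\bmM$ has dimension up to $k4^k$ and its determinant can be as large as $(k4^k)!$. To recover $h_{\mcP}$ one computes $\det(\bmM)\cdot\bmv=\operatorname{Adj}(\bmM)\cdot\bmp$, and for this to be meaningful one needs the modulus to exceed $\det(\bmM)$. Modulo $4$ the adjugate identity gives no information once $k\geq 2$. Relatedly, $\bmM$ itself is not triangular (the appendix gives an explicit $3\times 3$ example); one must first left-multiply by a carefully designed $\bmL$ encoding the two involutions, and it is $\bmF=\bmL\bmM$ whose DAG structure yields triangularity after a topological sort. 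Your sketch treats triangularity of the raw coefficient matrix as the target, which is too optimistic.

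Finally, the odd-parity hypothesis is used more deeply than ``at least one cross edge exists'': it makes the order $\prec$ on terminals well-defined, guarantees a free vertex adjacent to the axis, and is invoked in parity calculations inside both cancellation lemmas (the paper flags four such uses).
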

%\end{restatable}
Our result subsumes and  improves all known results on decision and search
 of one-face and two-face\footnote{If all the paths are across the two faces -- the case
 	studied in \cite{deVerdierSchrijver} and even in number then apparently our result doesn't 
 	subsume it. However, see the footnote to \cref{algo:1}.}  planar cases \cite{KobayashiSommer,deVerdierSchrijver,DIKM,KT} and parallelises
all except \cite{DIKM}
which is already parallel.

We also solve a problem $(A+B,q)$-$\SDPP$ that generalises and parallelises the $(A+B)$-$\SDPP$ from \cite{HiraiNamba}. Let $A,B$ be a set of terminals, with $|A|=k_1,|B|=k_2$ and $k_1+k_2=k$. The goal is to find the shortest disjoint paths where exactly $q$ paths have one endpoint in $A$ and one in $B$, and the rest of the paths are among $A$ terminals or $B$ terminals.
\begin{theorem}\label{theorem:2}
	An $(A+B,q)$ $\SDPP$ can be found in randomized $\NC^2$ when $|A|+|B|=k$ is a constant. In particular, when $q=0$ shortest $A+B$ path is in randomized $\NC^2$.
\end{theorem}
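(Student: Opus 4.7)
The plan is to reduce the $(A+B,q)$-$\SDPP$ to $O(1)$ instances of the Odd Two-Face $\SDPP$ solved by \cref{theorem:1}, invoke it in parallel on each, and return the minimum. Since $k = |A| + |B|$ is constant, one enumerates every pairing of the $k$ terminals producing exactly $q$ cross pairs: choose a $q$-subset of $A$, a $q$-subset of $B$, a bijection between them for the cross pairs, and perfect matchings of the remaining $k_1 - q$ terminals of $A$ and $k_2 - q$ of $B$. The number of such pairings depends only on $k$. Each pairing yields a fully specified $(k/2)$-$\SDPP$ with all terminals on the two faces $F_A, F_B$. For any pairing to be realizable, $k_1 - q$ and $k_2 - q$ must both be even, so $k_1$ and $k_2$ must share parity; otherwise the answer is $\infty$.

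If $k_1, k_2$ are both odd (forcing $q$ to be odd), \cref{theorem:1} applies directly to each pairing since $F_A$ contains $k_1$ (odd) and $F_B$ contains $k_2$ (odd) terminals. If $k_1, k_2$ are both even (which includes the highlighted case $q = 0$), the instance is augmented to restore odd parity. Add a new vertex $u$ to $\partial F_A$ by subdividing a boundary edge, similarly $v$ to $\partial F_B$, and a new path $P$ from $u$ to $v$ of fresh intermediate vertices with zero-weight edges, routed planarly along a dual path of faces $F_A = f_0, f_1, \ldots, f_m = F_B$. The augmented graph carries $k_1 + 1$ and $k_2 + 1$ (both odd) terminals on the two faces, and \cref{theorem:1} is invoked on the $(k/2 + 1)$-$\SDPP$ with the extra pair $(u,v)$. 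An exchange argument shows that any optimum routes $(u,v)$ along $P$ at total weight $0$, freeing the rest of the graph for the remaining $k/2$ paths to realize the original $(k/2)$-$\SDPP$ optimum for that pairing.

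Taking the minimum across the $O(1)$ pairings in parallel, each solvable in randomized $\NC^2$ by \cref{theorem:1}, places $(A+B,q)$-$\SDPP$ in randomized $\NC^2$. The main obstacle is the even-even augmentation: the planar routing of $P$ must (i) preserve the two-face structure, so that the original terminals of $F_A$ (resp.\ $F_B$) all remain on a single face containing $u$ (resp.\ $v$) after the subdivisions and additions, and (ii) permit the exchange argument to go through, namely that dedicating $P$ to $(u,v)$ in any optimum does not obstruct the remaining $k/2$ paths from achieving the original optimum. Point (i) is handled by placing $u$ and $v$ in terminal-free boundary segments (subdividing edges beforehand if no such segment exists), and point (ii) follows from the observation that $P$'s internal vertices are new: any alternative $(u,v)$-path through the original graph strictly blocks vertices needed by the other pairs, so swapping it for the zero-weight $P$ can only relax the disjointness constraints. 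The specialization $q = 0$ (shortest $A+B$ disjoint paths) is a direct instance of the even-even case.
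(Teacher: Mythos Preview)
Your reduction rests on a misreading of the problem. The $(A+B,q)$-$\SDPP$ in this paper is \emph{not} a planar two-face problem: $A$ and $B$ are arbitrary terminal sets in a general graph, generalising the Hirai--Namba $(A+B)$ problem (which is likewise non-planar and is attacked there via Hafnians modulo $4$). There are no faces $F_A,F_B$ to speak of, so \cref{theorem:1} cannot be invoked at all, and the entire augmentation step for the even--even case is moot. The paper's proof is self-contained in \cref{sec:abpaths} and makes no reference to \cref{theorem:1}: for each $t$ it sums $\perm(J_1,J_2)$ over all choices of sinks $J_1\subseteq A$ and sources $J_2\subseteq B$ of the appropriate sizes, obtaining (\cref{lemma:ab}) an upper-triangular linear system $\sum_{i\ge t}2^{(k-2i)/2}\mcH_i(x)$ in the aggregate polynomials $\mcH_i$. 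Working modulo $2^{k+1}$ via \cref{lemma:dj} and back-substituting from $t=\min\{k_1,k_2\}$ down to $t=q$ isolates $2^{(k-2q)/2}\mcH_q(x)\bmod 2^{k+1}$, after which the isolation lemma finishes as in \cref{theorem:1}.

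Even granting a hypothetical planar two-face reading, your even--even augmentation is not obviously sound. Routing a fresh path $P$ along a dual walk from $\partial F_A$ to $\partial F_B$ must cross boundary edges of intermediate faces; each such crossing either subdivides an existing edge (introducing a shared vertex with the original graph, which breaks your exchange argument that $P$'s interior is disjoint from everything) or places a new vertex inside a face with two incident path-edges, in which case the embedding of $P$ splits that face. You would need to argue carefully that neither $F_A$ nor $F_B$ is split---in particular that the first step of $P$ leaves $u$ into the face on the \emph{other} side of the subdivided boundary edge, and symmetrically at $v$---and that no original $s_i$--$t_i$ path in the optimum is forced to use one of the subdivided edge-halves now separated by a degree-$3$ vertex of $P$.
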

\subparagraph*{Proof Idea}
In \cite{DIKM}, the One-face $k$-disjoint problem was solved by writing 
a number of linear equations. Each equation corresponds to a configuration
$\mcP$,
that is, a partition of the terminals in $k$ sources and $k$-sinks. They also
need the notion of permutation of sinks $\pi$. Thus an equation coresponds
to the fact that the $k$-disjoint paths $\mcP$ can be partitioned into ones 
where the $i$-th source is mapped to the $\pi(i)$-th sink for various $\pi$'s.
 It is crucial in \cite{DIKM} that all the $k$-disjoint paths occurring in a 
configuration $\mcP$ do so with the same sign. Thus to solve the problem 
it suffices to prove that the system of equations is a square system and
further it is invertible. To prove invertibility it suffices to prove that
it is an upper triangular system. Notice that solving the system allows them
to find the number of $k$-disjoint paths for every $\pi$.

We adapt the above approach to the two face case but need to make significant changes.
Firstly, we are unable to count the number of $k$-disjoint paths corresponding 
to a configuration via a determinant because different paths have different 
signs. Thus we are forced to do the counting via permanents and side-step the
hardness thereof by working modulo a power of $2$ -- notice that from 
\cite{BH,DJ} we know that permanents of univariate polynomials modulo powers of
$2$ are tractable. We extend this result from univariate to bivariate permanents \cref{lemma:dj}. Again we prove that the concerned system of equations forms a
square system \cref{lem:c1}.

However we get a non triangular system for even small cases
(for instance, see \cref{ex:one}). Thus to show invertibility we left multiply
the system $\bmM \bmv = \bmp$ by a carefully  crafted matrix $\bmL$,
 to yield a square matrix $\bmF$. 
Ideally, we would like to ensure that in the matrix $\bmF$ the 
`undesirable' entries of $\bmM$ that is the ones that prevent $\bmM$ from being 
triangular are expunged so that $\bmF$ is triangular. 
The undesirable entries are  further classified into `bad' and `face-equivalent'
ones.  We show that in $\bmF$ the bad entries cancel out in pairs as do the
face-equivalent ones via two involutions both packed in the matrix $\bmL$
(\cref{sec:cancel}).
However the matrix $\bmF$ we obtain is not apparently triangular in the given 
ordering of its rows and columns.

To show that $\bmF$ can be made triangular by permuting rows and columns requires
another combinatorial proof (\cref{sec:inv}). 
Our algorithm and its motivation are algebraic but the proof is combinatorial.

For the proof of \cref{theorem:2} we again invert a square system of linear permanental equations.

\subparagraph*{Organization}
We start with a few preliminaries in \cref{sec:prelims} to recall the
part of previous work relevant for our paper. Next we set up notation
in \cref{sec:3} including a description of our main algorithm \cref{subsec:algo}.
In \cref{sec:proof} we present the proof of correctness of the algorithm.
In \cref{sec:abpaths} we solve the $(A+B,q)$-$\SDPP$.
Finally we conclude with \cref{sec:concl}.

\section{Preliminaries}
\label{sec:prelims}
%\subsection{title}
We will have occasion to  use parallel complexity classes like $\NC, \ParityL,
\Log$ -- we refer the reader to standard texts like \cite{AroraB,Vollmer} for
those. 
For linear algebraic notions like inverse, adjoint and determinant refer to
any standard linear algebra text like \cite{HoffmanK}. Graph theoretic terms like planarity and planar dual
can be looked up in \cite{Diestel}. We need only the definition of the 
permanent of a square matrix -- which is identical to that of the determinant
except the sign of a permutation is omitted. Also see \cite{DJ} for a formal
definition of permanent and cycle cover etc. in a form that we need.

For a polynomial $p\in\bbZ[x,y],$ let $[y^d]_ap$ be the coefficient of $y^d$ in $p\in{\bbZ[x,y]}/{(y^a-1)}$.
\begin{proposition}\label{prop:extract}
	For a fixed $c\ge 1$, $[y^d]_a\perm(A)\bmod 2^c$ can be computed in $\ParityL\in\NC^2$, where $A$ is a bivariate polynomial matrix.
\end{proposition}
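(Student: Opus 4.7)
The plan is to reduce the bivariate problem to the univariate version handled in \cite{DJ} via a Kronecker-type substitution $y \mapsto x^M$ for a suitably large $M$. First I would preprocess $A$ by reducing every entry modulo $y^{a}-1$ to obtain a matrix $A'$ whose entries have $y$-degree strictly less than $a$; this step is trivially in $\NC$ and preserves $\perm$ modulo $y^{a}-1$, so $[y^d]_a \perm(A) \equiv [y^d]_a \perm(A') \pmod{2^c}$.

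Next, let $n$ denote the dimension of $A'$ and let $D_x$ be an a priori upper bound on the $x$-degrees of its entries. I would set $M := nD_x + 1$ and form the univariate matrix $A''(x) := A'(x,\, x^M)$. The point of this choice is that $\perm(A')(x,y)$ has $x$-degree at most $nD_x$ and $y$-degree at most $n(a-1)$, so the encoding map $(i,j) \mapsto i + jM$ is injective on $\{0,\dots,nD_x\} \times \{0,\dots,n(a-1)\}$. Consequently, the coefficient of $x^i y^j$ in $\perm(A')$ equals the coefficient of $x^{i+jM}$ in $\perm(A'')$, and no information is lost in passing to the univariate problem.

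Now I would invoke the univariate result of \cite{DJ} to compute $\perm(A'')(x) \bmod 2^c$ in $\ParityL \subseteq \NC^2$; the $x$-degree of the entries of $A''$ is $D_x + M(a-1) = \mathrm{poly}(n, D_x, a)$, which is polynomially bounded as required. Decoding via the inverse of the injection recovers every coefficient of $\perm(A')(x,y) \bmod 2^c$, and the desired output is then $[y^d]_a \perm(A') = \sum_{k \ge 0} [y^{d+ka}] \perm(A') \bmod 2^c$, a sum of polynomially many terms which is computable in $\NC$. The composition of all steps lives in $\ParityL \subseteq \NC^2$. The main technical dependency is on \cite{DJ}; the Kronecker substitution itself is elementary, and the only point worth checking is that $M$ and the resulting univariate degree remain polynomial in the input size, which is immediate from the bounds above.
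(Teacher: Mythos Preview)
Your argument is correct and is in fact one of the two routes the paper itself sketches: the parenthetical remark ``This can also be done using Kronecker substitution and \cite{DJ}'' in the proof of the supporting lemma is precisely what you carry out in detail. The paper's primary route differs in that it does not pack the two variables into one; instead it generalises the coefficient-extraction identity of \cite{DJ} directly to two variables (summing $a^{q-1-t_1}b^{q-1-t_2}f(a,b)$ over $(a,b)\in(\bbF^*)^2$ for a finite field $\bbF$ of characteristic~$2$, then lifting from $\bmod\,2$ to $\bmod\,2^c$), thereby reducing to many \emph{integer} permanents modulo $2^c$ handled by \cite{BKR} rather than to a single univariate-polynomial permanent handled by \cite{DJ}. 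Your route is the more elementary one since it treats the univariate result of \cite{DJ} as a black box; the field-sum route has the mild advantage of extending uniformly to any constant number of variables without the degree blow-up of iterated Kronecker substitution, though that blow-up is only polynomial and therefore harmless here. The final step---summing the coefficients of $y^j$ over $j\equiv d\pmod a$---is identical in both approaches.
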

The proof is immediate by summing up the coefficients of exponents congruent to $d\bmod a$ using the following Lemma.
\begin{lemma}\label{lemma:dj}
	Let $c\ge1$ be fixed and $A$ be a $n\times n$ matrix of bivariate\footnote{This generalises to constantly many variables.} integer polynomials of degree $poly(n)$. We can compute the coefficients of $\perm(A) \bmod 2^c $ in $ \ParityL\subseteq \NC^2$.
\end{lemma}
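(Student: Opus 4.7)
The plan is to reduce the bivariate case to the univariate permanent result of \cite{DJ} via a Kronecker-style substitution that separates the powers of $x$ and $y$ onto non-overlapping exponents of a single variable.

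First, note that if every entry of $A$ has degree at most $d = \text{poly}(n)$ in each of $x$ and $y$, then $\perm(A)$ has degree at most $nd$ in each variable. Choose $D = nd+1$, which is still $\text{poly}(n)$, and form the univariate matrix $A'(x) \eqdef A(x, x^D)$ by substituting $y \leftarrow x^D$ in every entry of $A$. The degree of each entry of $A'$ in $x$ is at most $d + D\cdot d = \text{poly}(n)$, and the degree of $\perm(A')$ is at most $n(d + Dd) = \text{poly}(n)$.

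Second, by Lemma~3.1/Theorem of \cite{DJ} (the univariate permanent modulo $2^c$ result), the coefficients of $\perm(A') \bmod 2^c$ can be computed in $\ParityL \subseteq \NC^2$. The key observation is that the bijection $(i,j) \mapsto i + jD$ is injective on the set $\set{(i,j) : 0 \le i, j \le nd}$, because $i < D$. Hence
\[
\perm(A)(x,y) \;=\; \sum_{0 \le i,j \le nd} c_{i,j}\, x^i y^j \quad\text{and}\quad \perm(A')(x) \;=\; \sum_{0 \le i,j \le nd} c_{i,j}\, x^{i + jD},
\]
so each coefficient $c_{i,j} \bmod 2^c$ is read off directly as the coefficient of $x^{i+jD}$ in $\perm(A') \bmod 2^c$. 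This recovery step is a trivial re-indexing, computable in $\AC^0 \subseteq \NC^2$.

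The only point requiring care is that \cite{DJ}'s univariate result must apply to matrices whose entries have polynomially bounded degree (since after substitution the entries of $A'$ themselves have degree $\Theta(Dd)$, not just $\Theta(d)$), and that the output size (namely, the coefficients of $\perm(A') \bmod 2^c$ up to degree $\text{poly}(n)$) remains polynomially bounded. Both are satisfied with the choice $D = nd+1$, so no hard step arises beyond invoking \cite{DJ}. The generalisation to $O(1)$ variables follows by iterating the substitution: if $A$ is in $r$ variables $x_1,\dots,x_r$ of individual degree $\text{poly}(n)$, substitute $x_i \leftarrow x_1^{D^{i-1}}$ with $D$ strictly exceeding the total degree of $\perm(A)$ in any single $x_i$; the resulting univariate polynomial still has $\text{poly}(n)$ degree whenever $r = O(1)$.
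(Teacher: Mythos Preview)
Your proof is correct. In fact the paper explicitly lists Kronecker substitution combined with \cite{DJ} as one of the two ways to establish this lemma (see the parenthetical remark in the paper's proof), so your argument matches one of the intended routes.

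The paper's \emph{primary} proof, however, goes a slightly different way: rather than collapsing to the univariate case, it extends the coefficient-extraction machinery of \cite{DJ} directly to two variables (see \cref{lemma:extract} in the appendix), obtaining identities of the form
\[
\sum_{\vec a,\vec b\in (\bbF^*)^{2^{c-1}}}\Bigl(\prod_{a}a\Bigr)^{q-1-t_1}\Bigl(\prod_{b}b\Bigr)^{q-1-t_2} f\Bigl(\prod_a a,\prod_b b\Bigr)\equiv c_{t_1,t_2}\pmod{2^c},
\]
and then appeals to \cite{BKR} for computing the permanent of \emph{integer} matrices modulo $2^c$ in $\ParityL$. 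Your Kronecker route is arguably more elementary---it treats \cite{DJ} as a black box and avoids reproving the extraction formulas---while the paper's route keeps the degree of the intermediate univariate object smaller (evaluations at ring points rather than a single polynomial of degree $\Theta(n^2d^2)$) and makes the generalisation to more variables structurally cleaner. Both are equally valid here since $c$ and the number of variables are constants.
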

\begin{proof}
	This follows from \cref{lemma:extract} since we can compute the permanent of integer matrices in $\ParityL$ using \cite{BKR}. (This can also be done using Kronecker substitution and \cite{DJ}.)
\end{proof}
\begin{lemma}[Robertson-Seymour\cite{RSvi} Rephrased]\label{lemma:rs}
	We represent the surface on which two circles $C_1,C_2$ are drawn by $\sigma=\{(r,\theta): 1\le r\le 2,0\le \theta \le 2\pi\}$. Let $f:[0,1]\rightarrow \sigma$ be a continuous path on $\sigma$ with $f(0)\in C_1,f(1)\in C_2$. Then it has finite winding number $\wnd(f)$ defined intuitively as the number of times $f$ winds around in the clockwise sense. Let $L$ be a set of disjoint paths across $C_1$ and $C_2$ drawn on $\sigma$. Then clearly, for all $p\in L$, $\wnd(p)$ is a constant and we denote this by the common value $\wnd(L)$.
\end{lemma}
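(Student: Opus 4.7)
My plan is to lift everything to the universal cover $\tilde\sigma=[1,2]\times\R$ of the annulus $\sigma$ (covering map $\pi(r,t)=(r,t\bmod 2\pi)$, deck group generated by $T:(r,t)\mapsto(r,t+2\pi)$) and close the argument via a Jordan separation in the strip.

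To make $\wnd(f)$ precise, I lift $f$ to the unique continuous $\tilde f:[0,1]\to\tilde\sigma$ with $\tilde f(0)=(1,\theta_0)$, where $\theta_0\in[0,2\pi)$ is the angular coordinate of $f(0)$. Writing $f(1)=(2,\theta_1)$ with $\theta_1\in[0,2\pi)$, the endpoint satisfies $\tilde f(1)=(2,\theta_1+2\pi w)$ for a unique integer $w$, and I set $\wnd(f):=w$. Compactness of $[0,1]$ and continuity of $\tilde f$ make $w$ finite.

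For the common-winding claim, I argue by contradiction: suppose $p,q\in L$ are disjoint but $\wnd(p)\neq\wnd(q)$. Disjointness in $\sigma$ lifts to disjointness of $\tilde p$ from every deck translate $\tilde q_n:=T^n\tilde q$ in $\tilde\sigma$. Extending $\tilde p$ to a bi-infinite simple curve $\hat p$ by attaching downward boundary rays from its two endpoints, the Jordan curve theorem applied to the strip separates $\tilde\sigma$ into two regions $R^\pm$. The signed $\theta$-gaps $g(n):=(\theta_0^q-\theta_0^p)+2\pi n$ (at the $C_1$-boundary) and $h(n):=(\theta_1^q-\theta_1^p)+2\pi(n+\wnd(q)-\wnd(p))$ (at the $C_2$-boundary) have difference $h(n)-g(n)$ equal to a constant containing the nonzero summand $2\pi(\wnd(q)-\wnd(p))$. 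Using this misalignment together with the fact that disjointness forces $\theta_0^q\neq\theta_0^p$ and $\theta_1^q\neq\theta_1^p$, I find an $n^*\in\Z$ for which $g(n^*)$ and $h(n^*)$ have strictly opposite signs. The two endpoints of $\tilde q_{n^*}$ then sit in opposite regions $R^\pm$; connectivity of $\tilde q_{n^*}$ (which meets the boundary lines only at its endpoints) forces a crossing with $\tilde p$, contradicting disjointness of $p$ and $q$ in $\sigma$.

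The main obstacle is locating $n^*$: when $|\wnd(q)-\wnd(p)|\geq 2$ the sign-transition points of $g$ and $h$ differ by more than $1$, so an integer lies between them automatically; when $|\wnd(q)-\wnd(p)|=1$ one must use $\theta_0^q\neq\theta_0^p$ and $\theta_1^q\neq\theta_1^p$ to exclude the degenerate alignment in which the two transition points fall strictly between the same consecutive integers. With $n^*$ in hand, the Jordan-curve step closes the proof and the common value $\wnd(L)$ is well-defined.
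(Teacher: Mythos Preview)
First, note that the paper does not supply its own proof of this lemma: it is stated in the preliminaries as a rephrasing of a Robertson--Seymour result, with the winding number ``defined intuitively'' and the conclusion flagged as ``clearly'' true. There is no proof in the paper to compare your attempt against.

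More importantly, with your formalisation of $\wnd$ the stated conclusion is false, and your argument breaks exactly where you anticipate. Take
\[
p(r)=(r,\;\pi(r-1)),\qquad q(r)=(r,\;\pi r \bmod 2\pi),\qquad r\in[1,2].
\]
These are disjoint arcs from $C_1$ to $C_2$ in $\sigma$ (their angular coordinates differ by $\pi$ for every $r$). Under your convention $\theta_0^p=0,\ \theta_1^p=\pi$, so the lift of $p$ ends at $(2,\pi)$ and $\wnd(p)=0$; whereas $\theta_0^q=\pi,\ \theta_1^q=0$, the lift of $q$ ends at $(2,2\pi)$ and $\wnd(q)=1$. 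Plugging into your gap functions gives $\alpha=\pi$, $\beta=-\pi$, $d=1$ and hence
\[
g(n)=\pi+2\pi n = h(n)\quad\text{for every }n\in\Z,
\]
so no $n^*$ with strictly opposite signs exists. Your proposed remedy for the $|d|=1$ case---invoking $\theta_0^q\neq\theta_0^p$ and $\theta_1^q\neq\theta_1^p$---does not help, since both inequalities hold in this example.

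The root cause is that the winding number of an open arc between the two boundary circles is not a canonical integer; pinning the endpoint angles to the fundamental domain $[0,2\pi)$ introduces an unavoidable $\pm1$ jump whenever an endpoint sits near $\theta=0$. Your covering-space/Jordan-separation machinery is sound and does establish the $|d|\ge 2$ case correctly (two disjoint cross-arcs cannot differ in winding by $2$ or more under your definition). But the failure at $|d|=1$ is not a gap in technique---it is a genuine counterexample to the statement as you have made it precise. If you want a true statement to prove, you should either define $\wnd$ relative to a reference arc chosen disjoint from all of $L$, or replace the conclusion by the assertion that any two arcs in $L$ are isotopic rel boundary components.
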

%\begin{lemma}[Restated from \cite{DJ}]\label{lemma:dj}
%	Let $p\ge1$ be fixed and $A$ be a $n\times n$ matrix of integer polynomials of degree $poly(n)$. We can compute $\perm(A)(\bmod 2^p)$ in $ \ParityL\subseteq \NC^2$.
%\end{lemma}
\begin{lemma}[Isolation Lemma \cite{MVV}]\label{lemma:iso}
	Given a non-empty $\sF\subseteq2^{[m]}$, if one assigns for each $i\in[m]$, $w_i\in[4m]$ uniformly at random then with probability at least $3/4$, the minimum weight subset in $\sF$ is unique; where the weight of a subset $S$ is $\sum\limits_{i\in S}w_i$.
\end{lemma}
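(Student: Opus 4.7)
The plan is to prove uniqueness via the standard ``ambiguity'' argument and a union bound. Call an element $i \in [m]$ \emph{ambiguous} (for a given weight assignment) if $\mathcal{F}$ contains two minimum-weight subsets $S, S' \in \mathcal{F}$ with $i \in S$ and $i \notin S'$. The first observation is that if no element is ambiguous, the minimum-weight subset is unique: indeed, any two minimum-weight subsets $S, S'$ must agree on membership of every $i$, hence $S = S'$. So it suffices to bound the probability that some element is ambiguous.

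Next I would analyse a single element $i$ while conditioning on the values of $w_j$ for all $j \neq i$. Define
\[
\alpha_i \;=\; \min\!\Bigl\{\sum_{j \in S} w_j \;:\; S \in \mathcal{F},\ i \notin S \Bigr\}, \qquad
\beta_i \;=\; \min\!\Bigl\{\sum_{j \in S,\ j \neq i} w_j \;:\; S \in \mathcal{F},\ i \in S \Bigr\},
\]
with the convention that either quantity is $+\infty$ if the corresponding collection is empty. Crucially, both $\alpha_i$ and $\beta_i$ are determined by $\{w_j : j \neq i\}$ and do not depend on $w_i$. A quick case analysis shows that $i$ is ambiguous if and only if there is a minimum-weight set containing $i$ \emph{and} a minimum-weight set avoiding $i$, which forces
\[
\alpha_i \;=\; w_i + \beta_i, \qquad \text{i.e.,} \qquad w_i \;=\; \alpha_i - \beta_i.
\]
Since $w_i$ is drawn uniformly from $[4m]$ independently of $\alpha_i - \beta_i$, the conditional probability of ambiguity of $i$ is at most $\tfrac{1}{4m}$, and this bound also holds unconditionally.

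Finally I would apply a union bound over $i \in [m]$: the probability that some element is ambiguous is at most $m \cdot \tfrac{1}{4m} = \tfrac{1}{4}$. Hence with probability at least $3/4$ no element is ambiguous, so the minimum-weight subset of $\mathcal{F}$ is unique, as claimed. There is no real obstacle here; the only subtle point to state carefully is the independence argument, namely that the ``bad value'' $\alpha_i - \beta_i$ is a function only of the other coordinates and hence fixed before $w_i$ is sampled, which is what makes the $\tfrac{1}{4m}$ bound per element legitimate.
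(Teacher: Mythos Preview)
Your argument is correct and is exactly the classical Mulmuley--Vazirani--Vazirani proof. The paper does not give its own proof of this lemma; it simply quotes the result from \cite{MVV}, so there is nothing further to compare.
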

%\subsection{Disjoint Paths and Permanents }
Here we describe a preprocessing step using which we can connect cycle covers with the Shortest $k$-$\DPP$.
\subparagraph*{Preprocessing:} Let $G$ be an undirected graph with $2k$ terminal vertices $s_1,\ldots,s_k$ and $t_1,\ldots t_k$, which are the sources and sinks respectively. We do the following operations on $G$ to get the new digraph graph $H$:
%\begin{itemize}
%\item 1. 
(1) Make all edges incident to $s_i$'s outgoing, and all edges incident to $t_i$'s incoming. 
%\item 2. 
(2) If the weight of an edge $e$ was $w_e$ in $G$, the weight of $e$ in $H$ is $x^{w_e}$. Here $x$ is an indeterminate.
%\item 
(3) Add directed edges of unit weights $=x^0$ from $t_i$ to $s_i$ for all $i$ ; we call these \textit{demand edges}.
%\item 
(4) Add self-loops of unit length $=x^0$ on all the non-terminal vertices.
%\end{itemize}
We call the adjacency matrix of $H$ to be $A_H$. We know there is a bijection between monomials in the Permanent of $A_H$, and cycle covers in $H$. Also, note that a cycle cover in $H$ contains cycles of three types:
(1) a cycle consisting alternately of paths between some two terminals, and demand edges (2) a non-trivial cycle not containing any terminal vertex (3) a non-terminal self-loop.

%\begin{itemize}
%\item[1.] a cycle consisting alternately of paths between some two terminals, and demand edges.
%\item[2.] a non-trivial cycle not containing any terminal vertex.
%\item[3.] a non-terminal self-loop.
%\end{itemize}
Thus, every cycle cover of $H$ contains a set of $k-$disjoint paths between the terminals (not necessarily in the desired order). Further, any set of $k-$disjoint paths between the terminals can be extended to cycle covers, using the demand edges and self-loops on the unused vertices. Also, note that weights of $H$ are multiplicative, and the exponent is the sum of the corresponding weights of $G$. Thus we have the following Lemma.
\begin{lemma}
The non-zero monomials in $\perm(A_H)$ are in bijection with the cycle covers in $H$, and the cycle covers in $H$ are extensions of $k-$disjoint paths in $G$. Moreover, the bijection preserves the degree of the monomial as the length of the cycle cover it is mapped to.
\end{lemma}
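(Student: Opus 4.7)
The plan is to unpack the two halves of the statement separately and then observe that degree preservation is automatic from the weight assignment in the preprocessing.

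First, I would establish the bijection between non-zero monomials of $\perm(A_H)$ and cycle covers of $H$. Recall $\perm(A_H)=\sum_{\sigma\in S_n}\prod_{i=1}^{n}A_H[i,\sigma(i)]$. A term is non-zero precisely when $(i,\sigma(i))$ is a directed edge of $H$ for every $i$. Thinking of $\sigma$ as a functional digraph, its decomposition into disjoint cycles (including fixed points, which correspond to self-loops) gives a cycle cover of $H$, and conversely every cycle cover of $H$ prescribes such a permutation. Thus non-zero monomials are in one-to-one correspondence with cycle covers of $H$, and the contribution of such a cover is the product of the weights of its edges.

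Second, I would analyse the structure of these cycle covers using the three edge types introduced in the preprocessing: original oriented edges of $G$, demand edges $(t_i,s_i)$, and self-loops on non-terminal vertices. Because every edge incident to a source $s_i$ is outgoing and every edge incident to a sink $t_i$ is incoming (Step 1 of preprocessing), any cycle through a terminal must enter at some $s_j$ via a demand edge $(t_j,s_j)$ and leave the terminals again by another demand edge. Hence each cycle of the cover falls into exactly one of the three types listed in the preceding discussion: alternating $G$-paths and demand edges, a terminal-free cycle in $G$, or a single self-loop. Deleting the demand edges from the cycles of the first type leaves a set of pairwise vertex-disjoint paths in $G$ whose endpoints are terminals, and one checks from the source/sink orientation that these paths form a set of $k$ disjoint $s$-$t$ paths (though not necessarily matching $s_i$ with $t_i$). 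Conversely, given any $k$-disjoint path system in $G$, I can close each path into a cycle using the corresponding demand edge and cover the remaining vertices by self-loops; every choice of terminal-free cycles on the unused non-terminal vertices (together with the self-loops) yields a distinct cycle cover, so the cycle covers of $H$ are exactly the extensions of $k$-disjoint path systems of $G$.

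Finally, for the degree-length correspondence I would use that the weights assigned in Step 2 are monomials $x^{w_e}$ while demand edges and self-loops have weight $x^0$. The monomial produced by a cycle cover is therefore $x^{\sum_{e}w_e}$ where the sum is over the $G$-edges appearing in the cover, which is precisely the total length of the cover (and equals the sum of the lengths of the corresponding $k$ disjoint paths in $G$). Since the preprocessing does not merge distinct cycle covers, and the weight map is injective on monomial exponents only up to length, the bijection respects the degree exactly as claimed. I do not see a real obstacle here; the only care required is the case analysis of cycle types, which is forced by the orientation convention of Step 1.
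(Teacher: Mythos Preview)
Your proposal is correct and mirrors the paper's own justification, which is the short discussion immediately preceding the lemma (the paper does not give a separate formal proof): the monomial/cycle-cover bijection, the three cycle types forced by the terminal orientations, and the multiplicative-to-additive passage from weights $x^{w_e}$ to exponents. One small imprecision worth tightening: in the converse direction you write ``close each path into a cycle using the corresponding demand edge,'' but a single path $s_j\to t_i$ with $i\neq j$ is not closed by one demand edge; rather, the $k$ paths together with the $k$ demand edges $(t_i,s_i)$ decompose into one or more alternating cycles, exactly as in the paper's type~(1).
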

%We are concerned with the minimum weight disjoint paths in the specified source-sink pairing. Note that the bijection above also holds for these kinds of paths. Also note, that to uniquely identify $H$, we need the ordering of sources and sinks. But the permanent of $H$ is independent of the ordering. So, henceforth we only care about the sources and sinks.
%\begin{definition}

    Define a \emph{path configuration} as a matching among the $2k$ terminals. Let $\mcP$ be a path configuration. Call $P$ to be an instance of $\mcP$ or equivalently $P\in\mcP$, if the $k$ edges in $\mcP$ correspond to $k$ disjoint paths among the terminals in $P$. Denote $w(P)$ as the weight of $P$, and for a cycle cover $C$ in $H\setminus P$, let $w(C)$ be the weight of the cycle cover. Let $\mcC$ be the set of cycle covers in $H\setminus P$. Then, define $h_P(x)=\sum\limits_{C\in\mcC}x^{w(P)+w(C)}$. Finally, define $h_\mcP(x)=\sum\limits_{P\in\mcP}h_P(x)$
%\end{definition}
Let $S\sqcup T,|S|=|T|$ be a partition of the set of terminals in $G$ set to sources and sinks, and $\mcM$ be the set of all \emph{Path Configurations} where every matched edge has one terminal in $S$, and the other in $T$. Then, note that
\begin{proposition}\label{fact:mat}
	 $\perm(A_H)=\sum\limits_{\mcP\in\mcM}h_{\mcP}(x)$.
	 \end{proposition}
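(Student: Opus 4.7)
The plan is to show the equality by unpacking both sides via the standard bijection between monomials in $\perm(A_H)$ and cycle covers of the digraph $H$, and then grouping cycle covers according to the matching they induce among the $2k$ terminals.

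First, recall that for any $n\times n$ matrix $A$, $\perm(A)=\sum_\sigma\prod_i A_{i,\sigma(i)}$, and for an adjacency matrix of a weighted digraph this sum ranges over cycle covers of the digraph, with each cover contributing the product of the weights of its edges. Since in $H$ the edge weights are either powers of $x$ (original edges) or $1=x^0$ (demand edges, and self-loops on non-terminals), every cycle cover contributes a monomial in $x$ whose exponent is the total weight.

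Next, I would analyse the structure of an arbitrary cycle cover $\Gamma$ of $H$. By construction of $H$, every $s_i$ has only outgoing edges and every $t_i$ has only incoming edges. Hence in $\Gamma$ each $s_i$ lies on a cycle that leaves $s_i$ along an original edge of $G$, and each $t_i$ lies on a cycle that enters $t_i$ along an original edge of $G$. The only way to close such a cycle at a terminal is via a demand edge $(t_j,s_j)$. Therefore each cycle containing terminals alternates (in the sense of \cite{DIKM}) between non-trivial directed paths in $G$ running from some $s_i$ to some $t_j$ and demand edges $(t_j,s_j)$. Since demand edges are the only edges incident to sources as inputs and to sinks as outputs, each terminal appears on exactly one such cycle. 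Contracting the demand edges, the union of the terminal-containing cycles of $\Gamma$ becomes exactly a vertex-disjoint collection $P$ of $k$ directed paths, whose endpoints define a perfect matching $\mcP(\Gamma)$ between $S$ and $T$; hence $\mcP(\Gamma)\in\mcM$. The remaining cycles of $\Gamma$ use only non-terminal vertices and consist of non-trivial cycles together with the unit-weight self-loops; together they form a cycle cover $C$ of $H\setminus P$.

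Conversely, given any $\mcP\in\mcM$, any instance $P\in\mcP$, and any cycle cover $C$ of $H\setminus P$, we can reconstruct a unique cycle cover $\Gamma$ of $H$ by adding the demand edges $(t_i,s_i)$ to close $P$ into cycles and taking the disjoint union with $C$. This gives a bijection
\[
\{\text{cycle covers }\Gamma\text{ of }H\}\ \longleftrightarrow\ \{(\mcP,P,C): \mcP\in\mcM,\ P\in\mcP,\ C\text{ cycle cover of }H\setminus P\}.
\]
Finally I would check that weights match: the demand edges and self-loops contribute $x^0$, so the total weight of $\Gamma$ equals $x^{w(P)+w(C)}$, which is precisely the monomial summed in $h_P(x)$. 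Summing over $C$ gives $h_P(x)$, summing over $P\in\mcP$ gives $h_\mcP(x)$, and summing over $\mcP\in\mcM$ gives the right hand side, matching $\perm(A_H)$.

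There is no real obstacle; the only point that requires care is verifying that at each terminal the cycle of $\Gamma$ is forced to use exactly one incoming demand edge (at sources) and one outgoing demand edge (at sinks), which is what guarantees that $\mcP(\Gamma)$ is a well-defined matching between $S$ and $T$ and that the bijection is exact. This follows directly from preprocessing steps (1) and (3).
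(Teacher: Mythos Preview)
Your proposal is correct and follows exactly the reasoning the paper intends: the paper states \cref{fact:mat} as an immediate observation from the preceding lemma (bijection between permanent monomials and cycle covers, together with the three-type decomposition of cycles in $H$), and you have simply spelled out that bijection and the weight bookkeeping explicitly. One small wording fix: in $H$ the sources $s_i$ do have the incoming demand edge $(t_i,s_i)$ (and sinks the outgoing one), so the sentence ``every $s_i$ has only outgoing edges'' should instead say that the only \emph{non-demand} edges at $s_i$ are outgoing --- your subsequent analysis already uses the correct fact.
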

	  Ultimately, we want $S=(s_1,\ldots,s_k),T=(t_1,\ldots,t_k)$ and the polynomial $h_{\mcP}(x)$ where $\mcP=\{(s_1,t_1),\ldots,(s_k,t_k)\}$. To achieve this we shall vary $S,T$ in our intermediate computations.

\section{Disjoint Paths on Two Faces: The Odd Case}
\label{sec:3}
In this section, we solve a restricted version of the shortest $k$-$\DPP$ on planar graphs such that all the terminals lie on two faces.
\subsection{Notation and Modification:}\label{sec:3.1}{
	 We refer to $f_1$ as the outer face and $f_2$ as the inner face. Let the set of terminals on $f_i$ be $K_i$, and $\abs {K_i}=k_i$. We label the terminals as $K_1=\{z_1^{(1)},\ldots,z_{k_1}^{(1)}\},K_2=\{z_1^{(2)},\ldots,z_{k_2}^{(2)}\}$ in the clockwise order, and use the notation $z_j^{(i)}+1=z_{(j\bmod {k_i})+1}^{(i)}$ for $i=1,2$. We further assume that each face contains an odd number of terminals.% We denote $q$ to be the number of paths going across the two faces for the $k$-disjoint path instance in question (we will vary $q$). Then there are $\ell_i=\frac{k_i-q}2$ paths with both endpoints on the face $f_i$.
	 
	 Terminals can be classified as sinks and sources. The number of sinks and sources on $f_1$ is $\ell_1$ and $k_1-\ell_1$ respectively. The number of sinks and sources on $f_2$ are $k_2-\ell_2$ and $\ell_2$ respectively. Set $q=k-\ell_1-\ell_2=k_1-2\ell_1=k_2-2\ell_2$, we get this equality by equating $\#$sinks$=\#$sources. Observe that the paths with both endpoints on $f_1$ must have one endpoint in a sink. Hence there are at most $\ell_1$ paths with both endpoints on $f_1$. Therefore \begin{proposition}\label{prop:q}
	 	When there are exactly $\ell_1$ sinks and $\ell_2$ sources on $f_1$ and $f_2$ respectively,
	 	the number of paths that go across the two faces is at least $q$.
	 	\end{proposition}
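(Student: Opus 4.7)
The plan is to bound, separately for each face, the number of paths whose \emph{both} endpoints lie on that face, and then subtract these counts from the total number of paths $k$. The only tool I need is the parity/matching observation that an intra-face path must pair a source on that face with a sink on the same face, together with the numerology fixed in \cref{sec:3.1}.

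First I would handle $f_1$. By definition $f_1$ contains $\ell_1$ sinks and $k_1-\ell_1$ sources, and from $q=k_1-2\ell_1\ge 0$ we have $\ell_1\le k_1-\ell_1$. Any path with both endpoints on $f_1$ must use exactly one of the $\ell_1$ sinks, hence there are at most $\ell_1$ such paths (this is the remark made just before the statement). The same argument applied to $f_2$ uses $\ell_2\le k_2-\ell_2$ (again from $q=k_2-2\ell_2\ge 0$): an intra-$f_2$ path uses exactly one of the $\ell_2$ sources on $f_2$, so there are at most $\ell_2$ such paths. Since the $k$ disjoint paths split into three disjoint classes (contained in $f_1$, contained in $f_2$, or crossing between $f_1$ and $f_2$), the number of cross-face paths is at least $k-\ell_1-\ell_2=q$, which is exactly the claim. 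No step here seems to present any real obstacle; the argument is a one-line double count, and the three-way equality $k-\ell_1-\ell_2=k_1-2\ell_1=k_2-2\ell_2$ is just the source/sink balance already verified in the paragraph preceding the proposition.
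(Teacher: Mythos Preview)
Your argument is correct and is exactly the one the paper gives in the sentence immediately preceding the proposition: bound the intra-$f_1$ paths by the number $\ell_1$ of sinks on $f_1$, the intra-$f_2$ paths by the number $\ell_2$ of sources on $f_2$, and subtract from $k$. The side remark that $q\ge 0$ (equivalently $\ell_i\le k_i-\ell_i$) is not actually needed for the bound ``at most $\ell_1$'' --- each intra-$f_1$ path consumes one sink regardless --- but it does no harm.
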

	 
	  For terminals $t_1\ne t_2$ on the same face ($f_1$ or $f_2$), we say $t_1\prec t_2$ when there are even number of terminals between $t_1$ and $t_2$ in the clockwise sense. Observe that it is a well-defined order since the number of terminals on either face is odd\footnote{This is the first place we assume both faces have an odd number of terminals}. We emphasise the fact that $\prec$ is not transitive.
	  
	   For a \textit{path configuration} $\mcP$, define $\mcPi$ to be the set of paths (or matchings), with both endpoints on $f_i$. For a path $p\in\mcPi$ for $i=1,2$, define $\alpha_p\prec \beta_p$ as the two endpoints of $p$. For two terminals $t_1\prec t_2$ on the same face, define the \textit{even interval} of $t_1,t_2$ by the set of terminals $\{t_1,t_1+1,\ldots,t_2\}$ in the clockwise direction.
	Call a terminal \textit{free} w.r.t a path configuration $\mcP$, if it is matched to some terminal on the other face. Define the \emph{alpha sequence} of $\mcP$ on either face for a configuration $\mcP$ as $\alpha^{(1)}(\mcP)=\{\alpha_p|p\in\mcPo\}$, and $\alpha^{(2)}(\mcP)=\{\alpha_p|p\in\mcPt\}$. Observe that there cannot be any \emph{free} vertices in the even interval of $\alpha_p,\beta_p$ for any $p\in\mcPo$ or $\mcPt$, and the paths on $\mcPi$ cannot interlace\footnote{paths $p$ and $q$ interlace if $\alpha_p,\alpha_q,\beta_p,\beta_q$ are in the clockwise order.} for $i=1,2$. Hence,  $\alpha^{(1)}(\mcP),\alpha^{(2)}(\mcP)$ uniquely determine $\beta_p$ for all the paths $p$ in $\mcPi$ for $i=1,2$. Thus $\alpha^{(1)}(\mcP),\alpha^{(2)}(\mcP)$ uniquely determine $\mcPo,\mcPt$.%$XXX$
\begin{figure}[hbt!]
	
	\begin{minipage}[c]{\linewidth}
		\centering
		\includegraphics[width=3.5cm]{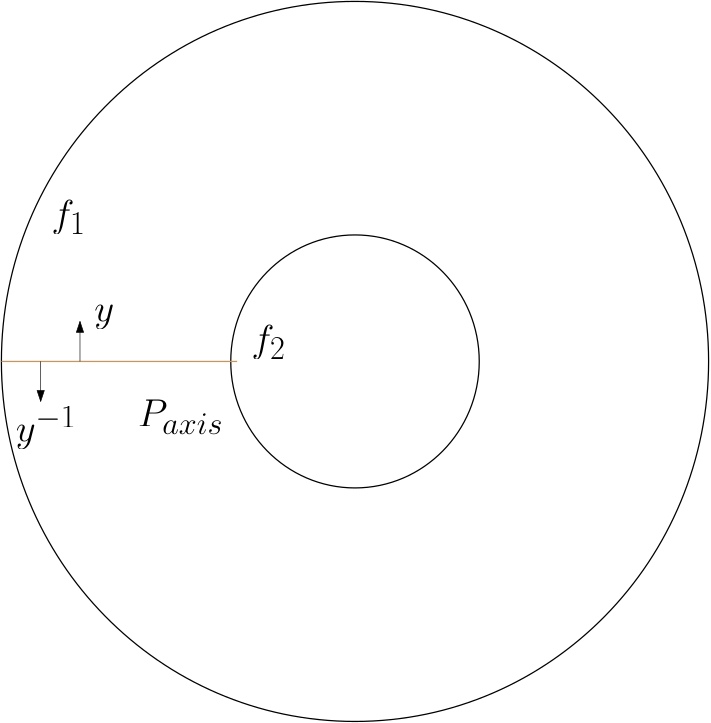}
		\caption{Clockwise edges get multiplied by $y$, and anti-clockwise edges by $y^{-1}$.}
		\label{fig:1}
	\end{minipage}
\end{figure}

Consider the bi-directed dual $G^{\ast}$, where $f_1,f_2$ are vertices. We connect $f_1$ and $f_2$ by a path  $P_{axis}$ in $G^{\ast}$. We consider the corresponding primal arcs of $P_{axis}$ which are directed in the clockwise order and multiply their weights by $y$. Similarly, multiply the primal arcs of $P_{axis}$ directed in the counter-clockwise order by $y^{-1}$ (Figure \ref{fig:1}).	
}
\subsection{Axis-Crossing Number}
In this section we show how to distinguish between path configurations on the basis of
the number of crossings with the axis. We also show that repositioning the
axis has no utility.

We follow the notation in \cite{RSvi}. Throughout this subsection, assume that the sources and sinks are fixed on the two faces $f_1,f_2$.  Let $C_1,C_2$ to be the two circles denoting the two faces $f_1$ and $f_2$ respectively. Hence $C_1$ is outside $C_2$ in the embedding of $G$. Let $P$ be a set of $k-$disjoint paths; then define the Axis-Crossing$(P)$ as the number of times the paths in $P$ cut the $P_{axis}$ in the clockwise direction minus the number of times they cut $P_{axis}$ in the anti-clockwise direction. Here, a path is always directed from a source to a sink. Denote $\AC(P)=\operatorname{Axis-Crossing}(P)$.
We generalise \cref{lemma:rs} slightly for arbitrary \emph{path configutrations}.
\begin{lemma}\label{lemma:axise}
    For a fixed set of sources and sinks with exactly $\ell_1$ sinks on $f_1$ and $\ell_2$ sources on $f_2$, for any path configuration $\mcP$ where every matching is between a source and a sink, and $P,Q\in\mcP$, $\AC(P)\equiv \AC(Q)\equiv O_{\mcP}\bmod q$, where $O_{\mcP}\in\bbZ_q$ is a constant only dependent on $\mcP$.
\end{lemma}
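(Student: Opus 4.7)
The plan is to show that any realization $P$ of $\mcP$ satisfies $\AC(P) = q\,w_P + c(\mcP)$, where $w_P \in \bbZ$ is the common topological winding of the across-paths of $P$ around $f_2$ (which exists by \cref{lemma:rs}) and $c(\mcP) \in \bbZ$ depends only on $\mcP$. Reducing modulo $q$ kills the $q\,w_P$ term and yields $\AC(P) \equiv c(\mcP) \equiv \AC(Q) \pmod{q}$, establishing the claim with $O_\mcP \equiv c(\mcP) \pmod{q}$.

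First I would split $P = P_{\mathrm{ac}} \sqcup P_{\mathrm{wf}}$ into across-paths and within-face paths, and tally signed directions. Let $q_+$ (resp.\ $q_-$) be the number of across-paths directed $f_1 \to f_2$ (resp.\ $f_2 \to f_1$). A source-sink count on $f_1$, using $|\mcPo| + q_+ = k_1 - \ell_1$ and $|\mcPo| + q_- = \ell_1$, gives $q_+ - q_- = k_1 - 2\ell_1 = q$. By \cref{lemma:rs} all across-paths share a common topological winding $w_P$; in their source-to-sink orientation an across-path contributes $+w_P$ to $\AC$ when directed $f_1 \to f_2$ and $-w_P$ when directed $f_2 \to f_1$, up to an integer offset determined solely by the angular positions of its endpoints relative to the axis. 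Summing over $P_{\mathrm{ac}}$ produces the across-contribution $(q_+ - q_-) w_P + o_{\mathrm{ac}}(\mcP) = q\,w_P + o_{\mathrm{ac}}(\mcP)$, and $o_{\mathrm{ac}}(\mcP)$ depends only on $\mcP$ because the across-path endpoints on each face are exactly the \emph{free} terminals of $\mcP$ (the complements of $\alpha^{(i)}(\mcP) \cup \{\beta_p : p \in \mcPi\}$ in $K_i$), and their cyclic positions on $C_i$ are fixed.

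Next I would analyse $P_{\mathrm{wf}}$ via the strip decomposition induced by $P_{\mathrm{ac}}$. Cutting the annulus bounded by $C_1$ and $C_2$ along the $q_+ + q_-$ disjoint across-paths creates $q_+ + q_-$ topological discs (``strips''), each bounded by two consecutive across-paths together with arcs of $C_1$ and $C_2$. Every within-face path is disjoint from $P_{\mathrm{ac}}$ and hence trapped in a single strip; since that strip is a disc, the homotopy class of a path with fixed endpoints is unique, and $\AC$ is a homotopy invariant rel endpoints. I can therefore replace each within-face path by the boundary arc on $C_1$ (or $C_2$) joining its endpoints through the strip, so that its contribution to $\AC$ becomes the signed intersection number of this arc with the axis. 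The cyclic arrangement of across-path endpoints on each $C_i$ is prescribed by $\mcP$ alone, so the strip membership of every within-face endpoint, and hence the corresponding boundary arcs and their axis-intersection counts, are functions only of $\mcP$; call the total $o_{\mathrm{wf}}(\mcP)$.

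Combining the two pieces gives $\AC(P) = q\,w_P + o_{\mathrm{ac}}(\mcP) + o_{\mathrm{wf}}(\mcP)$, so $\AC(P) \equiv o_{\mathrm{ac}}(\mcP) + o_{\mathrm{wf}}(\mcP) \pmod{q}$ and the lemma follows with $c(\mcP) := o_{\mathrm{ac}}(\mcP) + o_{\mathrm{wf}}(\mcP)$. The main obstacle is the rigorous treatment of the strip decomposition and the homotopy rigidity of within-face paths inside their strips: although different realizations may exhibit very different values of $w_P$ and geometrically distinct across-paths, one must verify that the combinatorial incidence between within-face terminals and strips (namely, which arc of $C_i$ contains each endpoint) is an invariant of $\mcP$, not of $P$. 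This reduces to the fact that the free terminals and their cyclic order on each face are read off from $\mcP$, so the partition of $C_1 \cup C_2$ into strip-boundary arcs is itself a function of $\mcP$.
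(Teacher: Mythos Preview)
Your proposal is correct. The across-path analysis is essentially identical to the paper's: both hinge on the count $q_+-q_-=q$ (the paper writes it as $(\ell_1-u_1+q)-(\ell_1-u_1)=q$ with $u_1=|\mcPo|$) together with the common winding from \cref{lemma:rs}, though the paper phrases the conclusion comparatively as $\AC(Q)-\AC(P)=cq$ for $c=\wnd(L_Q)-\wnd(L_P)$ rather than via your absolute decomposition $qw_P+o_{\mathrm{ac}}(\mcP)$.

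Where you diverge is the within-face analysis. You cut the annulus along the across-paths into strips and use simple connectivity of each strip to homotope a within-face path to a sub-arc of $C_i$. The paper's argument is more direct and never references the across-paths at all: a within-face path $p$ on $f_1$, together with the even-interval arc of $C_1$ between its endpoints, bounds a disc not containing $f_2$; since $P_{axis}$ runs from $f_1$ to $f_2$, its net signed crossing with $p$ is $\pm1$ or $0$ according to whether the $C_1$-endpoint of the axis lies in that even interval, with the sign fixed by the source/sink orientation of $p$. Both of these data are read off from $\mcP$ alone. This sidesteps the ``main obstacle'' you flag at the end: because the paper never invokes the across-paths of a particular realization when analysing a within-face path, there is no strip bookkeeping to verify is realization-independent.
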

Note that the above Lemma also holds when $\mcP\in\mcM_{q'}$ even when $q'>q$. The proof is pushed to \cref{appendix:1}. For a path configuration $\mcP$, define its Axis-Crossing to be $O_{\mcP}$, as above.
\begin{lemma}\label{lemma:Lemma axis}
	 For a fixed set of sources and sinks with exactly $\ell_1$ sinks on $f_1$ and $\ell_2$ sources on $f_2$, and for any two configurations $\mcP\ne\mcQ$ where every matching is between a source and a sink such that $\mcPi=\mcQi$ and $|\mcPi|=\ell_i$ for $i=1,2$, we have $O_{\mcP}\ne O_{\mcQ}$.
\end{lemma}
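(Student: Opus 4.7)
My plan is to compare $O_\mcP$ and $O_\mcQ$ by examining the oriented symmetric difference of two conveniently chosen path instances. Since $\mcPo=\mcQo$ and $\mcPt=\mcQt$, the two configurations agree on all intra-face matchings and differ only in how the $q$ free sources on $f_1$ are paired with the $q$ free sinks on $f_2$. The only non-crossing matchings between two cyclically ordered $q$-element sets that are realizable on the annular region between $C_1$ and $C_2$ are the $q$ cyclic shifts $\pi_0,\pi_1,\dots,\pi_{q-1}$; let $r$ (respectively $r'$) in $\bbZ_q$ index $\mcP$ (respectively $\mcQ$), so $r\neq r'$.

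First I would fix shared intra routings by drawing each intra path as a non-winding ``petal'' close to its face boundary. This carves the annulus into some topological disks plus a sub-annulus whose two boundary arcs carry all the free terminals, inside of which both shifts $r$ and $r'$ are realizable as non-crossing matchings. Choosing $P\in\mcP$ and $Q\in\mcQ$ with these common intra routings, the signed edge difference $P-Q$ (edges of $Q$ counted with reversed orientation) involves only across edges, is balanced at every vertex, and therefore decomposes into a disjoint union of oriented closed curves in the annulus. Combinatorially, these curves trace the cycle decomposition of the permutation $\pi_r\pi_{r'}^{-1}$, a cyclic shift by $r-r'$ on $\{1,\dots,q\}$, giving $d:=\gcd(q,r-r')$ cycles.

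The key computation is the total winding of these closed curves around the central hole. Lifting each across path to the universal cover of the annulus, I would record for the $P$-edge out of $u_i$ a lift number $k_{P,i}\in\bbZ$ so that the target $v_{\pi_r(i)}$ lifts to $\phi_{\pi_r(i)}+2\pi k_{P,i}$; \cref{lemma:rs} together with the non-crossing condition in the lift pins these numbers down up to a global integer shift. Telescoping the angular displacement around a single alternating cycle, the $\phi$-contributions cancel using $\pi_r(i_m)=\pi_{r'}(i_{m+1})$ while the $\theta$-contributions telescope to zero, leaving winding $=\sum_m(k_{P,i_m}-k_{Q,i_{m+1}})$. Summing over all $d$ cycles (which together visit each $i\in\{1,\dots,q\}$ exactly once) produces total winding $\sum_i k_{P,i}-\sum_i k_{Q,i}$. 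A direct count shows that in the minimum-winding realization of shift $r$ exactly $r$ of the lifts must be incremented by $2\pi$ to maintain the ascending lifted order, so $\sum_i k_{P,i}=r$; any overall winding change shifts this sum by a multiple of $q$, yielding $\sum_i k_{P,i}\equiv r\pmod q$ and likewise $\sum_i k_{Q,i}\equiv r'\pmod q$.

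Combining, the total winding of $P-Q$ is $\equiv r-r'\pmod q$. Since the signed axis-crossing of an oriented closed curve in the annulus equals its winding around the central hole, $\AC(P-Q)\equiv r-r'\pmod q$; and since $\AC(P)-\AC(Q)=\AC(P-Q)$, we obtain $O_\mcP-O_\mcQ\equiv r-r'\not\equiv 0\pmod q$, so $O_\mcP\neq O_\mcQ$. The main obstacle I anticipate is making the universal-cover lift bookkeeping precise, especially for non-equally-spaced terminals where individual per-path displacements $\phi_{\pi_r(i)}-\theta_i$ need not coincide but the aggregate $\sum_i k_{P,i}\bmod q$ remains pinned to the shift $r$ by the combinatorial forcing of the non-crossing condition.
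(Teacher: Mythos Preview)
Your argument is correct, and it is more self-contained than the paper's. The paper's proof is two lines: since $\mcP^{(i)}=\mcQ^{(i)}$ the intra-face contributions to the axis-crossing coincide, and for the $q$ across-face paths it simply invokes \cite[Lemma~21]{DIKM} to conclude that distinct cyclic-shift matchings have distinct axis-crossing modulo~$q$. You instead unpack that citation: you identify the two across-face matchings as shifts $\pi_r,\pi_{r'}$, form the oriented $1$-cycle $P-Q$ for instances sharing the intra-face routings, decompose it along the cycles of $\pi_{r'}^{-1}\pi_r$, and compute its total winding via lifts to the universal cover, obtaining $\sum_i k_{P,i}-\sum_i k_{Q,i}\equiv r-r'\pmod q$. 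This is exactly the content hidden behind the citation.

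What each route buys: the paper's is short and modular; yours is self-contained and makes the cyclic-shift structure and the winding-number bookkeeping explicit. The one place in your write-up that still deserves a careful sentence is the claim $\sum_i k_{P,i}\equiv r\pmod q$: with the $\theta_i$'s and $\phi_j$'s lifted into $[0,2\pi)$ and the common winding denoted $w$, exactly the $r$ indices $i>q-r$ wrap around and pick up $k_{P,i}=w+1$ while the remaining $q-r$ get $k_{P,i}=w$, so $\sum_i k_{P,i}=qw+r$; stating this removes the residual vagueness you flagged.
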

The proof is deferred to \cref{appendix:1}. For the following Lemma, assume there are two axes $P_{axis}^1,P_{axis}^2$, and the axis-crossing of a configuration $\mcP$ w.r.t the $i^{th}$ axis is $O_{\mcP}^i$ for $i=1,2$.
\begin{lemma}\label{lemma:Lemma 5}
	 For a fixed set of sources and sinks with exactly $\ell_1$ sinks on $f_1$ and $\ell_2$ sources on $f_2$, and for any two configurations $\mcP,\mcQ$ where every matching is between a source and a sink, we have $O_{\mcP}^1-O_{\mcP}^2=O_{\mcQ}^1-O_{\mcQ}^2$ i.e. axis-crossing of all configurations change by a fixed constant if we change the position of the axis.
\end{lemma}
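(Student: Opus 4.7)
We aim to show that the integer $\AC^1(P) - \AC^2(P)$ depends only on the (fixed) set of sources and sinks, not on the matching $\mcP$, for any instance $P$ of any configuration $\mcP$. Reducing modulo $q$ via \cref{lemma:axise} then yields the claim.

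\emph{Step 1: $\AC^1(p) - \AC^2(p)$ depends only on the endpoints of $p$.} For any two paths $p, p'$ from the same source $s$ to the same sink $t$, the formal difference $p - p'$ is a closed loop in the annulus $\sigma$. Both axes are oriented arcs from $C_1$ to $C_2$, and the signed intersection of such an arc with any closed loop of $\sigma$ equals the winding number of the loop around the annulus (which is axis-independent). Hence $\AC^1(p) - \AC^1(p') = \AC^2(p) - \AC^2(p')$, so we may define $D(s,t) := \AC^1(p) - \AC^2(p)$ for any $s$-to-$t$ path $p$.

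\emph{Step 2: $D(s,t)$ has the additive form $g(t) - g(s)$.} Since signed intersection numbers are invariant under homotopy rel.\ boundary, I may assume $P_{axis}^1, P_{axis}^2$ are disjoint simple curves from $C_1$ to $C_2$. They cut $\sigma$ into two open disk regions; let $R$ be either one. Its boundary $\partial R$ is a closed curve consisting of $P_{axis}^1$, (the reverse of) $P_{axis}^2$, an arc $\alpha \subset C_1$, and an arc $\beta \subset C_2$, and by a small perturbation I may assume $\alpha, \beta$ contain no terminals. For a path $p$ from $s$ to $t$ transversal to $\partial R$, the signed count of crossings of $\partial R$ equals $\chi_R(t) - \chi_R(s)$: each signed crossing flips the ``inside $R$'' status of the traveller and the net flip depends only on the two endpoints. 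Since $p$ is disjoint from $\alpha \cup \beta$, this signed count equals $\AC^1(p) - \AC^2(p) = D(s,t)$. Thus $D(s,t) = g(t) - g(s)$ with $g(u) := \chi_R(u)$.

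\emph{Step 3: Summing over the configuration.} For any instance $P$ of $\mcP$,
\[
\AC^1(P) - \AC^2(P) \;=\; \sum_{(s,t) \in \mcP} D(s,t) \;=\; \sum_{t\,\text{sink}} g(t) \;-\; \sum_{s\,\text{source}} g(s),
\]
and the right-hand side depends only on the fixed sources and sinks, not on the matching. Applying the same computation to any other configuration $\mcQ$ (with the same source/sink partition) yields the same integer, and reducing modulo $q$ via \cref{lemma:axise} gives $O^1_\mcP - O^2_\mcP \equiv O^1_\mcQ - O^2_\mcQ \pmod q$.

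\emph{Main difficulty.} The delicate point is Step 2: producing the region $R$ and justifying the intersection-count formula $p \cdot \partial R = \chi_R(t) - \chi_R(s)$. Homotoping to disjoint simple axes avoids crossings between $P_{axis}^1$ and $P_{axis}^2$, and perturbing $\alpha, \beta$ off the terminals ensures that the only contributions to $p \cdot \partial R$ come from $P_{axis}^1$ and $P_{axis}^2$ themselves, so the formula cleanly isolates $\AC^1(p) - \AC^2(p)$.
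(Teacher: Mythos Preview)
Your argument is correct and takes a genuinely different route from the paper. The paper's proof is purely combinatorial: it first treats the ``adjacent'' case where exactly one terminal lies between $P^1_{axis}$ and $P^2_{axis}$ on one face (none on the other), and checks by picture that $O^1_\mcP - O^2_\mcP = \pm 1$ according to whether that terminal is a source or a sink --- independent of $\mcP$. The general case follows by threading a chain of such adjacent axes. Your proof instead uses intersection theory in the annulus to show the stronger integer identity $\AC^1(P) - \AC^2(P) = \sum_{\text{sinks }t} g(t) - \sum_{\text{sources }s} g(s)$ for an explicit $g$, which makes the independence from the matching $\mcP$ immediate. Your approach is more conceptual and yields a closed-form expression for the shift; the paper's is more elementary and avoids any appeal to homotopy or transversality.

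One small slip in your Step~2: the sentence ``by a small perturbation I may assume $\alpha, \beta$ contain no terminals'' cannot hold as written, since terminals lie on $C_1 \cup C_2$ and $\alpha,\beta$ are arcs of $C_1,C_2$. What you actually need is (i) that no terminal sits at an axis endpoint, so that $\chi_R$ is well-defined on terminals, and (ii) that the \emph{interior} of each graph path can be pushed off $C_1 \cup C_2$, so that $p$ meets $\alpha \cup \beta$ only at its endpoints, which are not transversal crossings and hence do not contribute to $p \cdot \partial R$. With these two adjustments (and the obvious check that the induced orientation of $\partial R$ matches the sign conventions in $\AC^1 - \AC^2$), your formula $p \cdot \partial R = \chi_R(t) - \chi_R(s) = \AC^1(p) - \AC^2(p)$ goes through cleanly.
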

\begin{proof}
	First, assume there is only one terminal (wlog $z^{(1)}_1$) on $f_1$ between $P^2_{axis}$ and $P^1_{axis}$ (in the clockwise sense), and no such terminal on $f_2$ between the 2 axis, as in Fig \ref{fig:2(a)}. Fix a $\mcP$, let $p$ be the path of $\mcP$ with one of the endpoints in $z^{(1)}_1$.
	\begin{itemize}
		\begin{figure}[hbt!]
			
			\begin{minipage}[c]{0.5\linewidth}
				\centering
				\includegraphics[width=7cm]{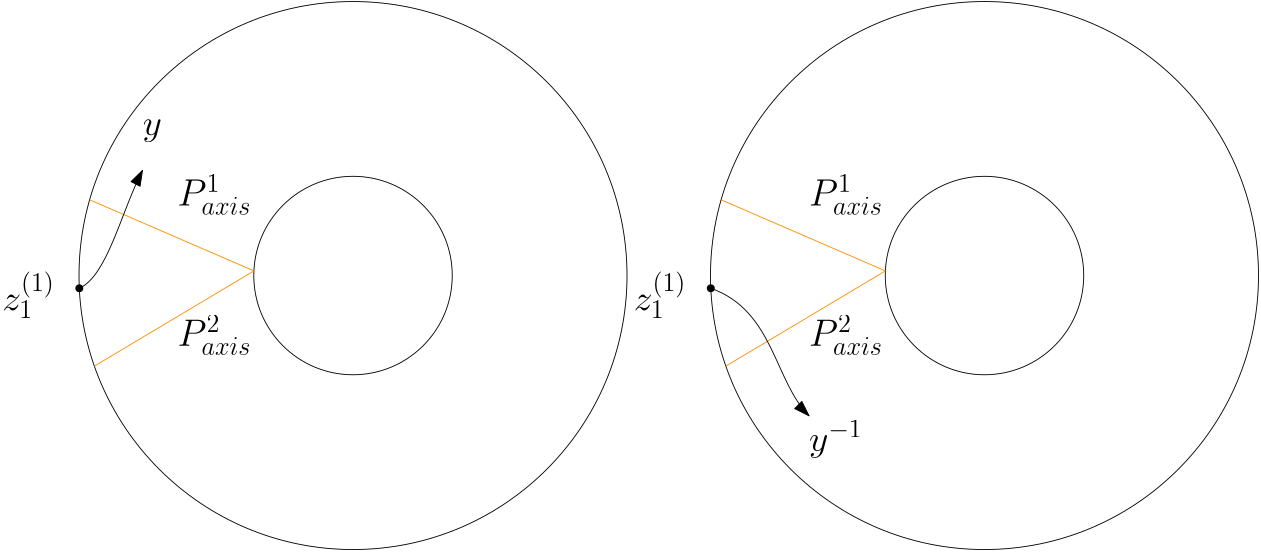}
				\caption{$z^{(1)}_1$ is a source.}
				\label{fig:2(a)}
			\end{minipage}
				\begin{minipage}[c]{0.5\linewidth}
				\centering
				\includegraphics[width=7cm]{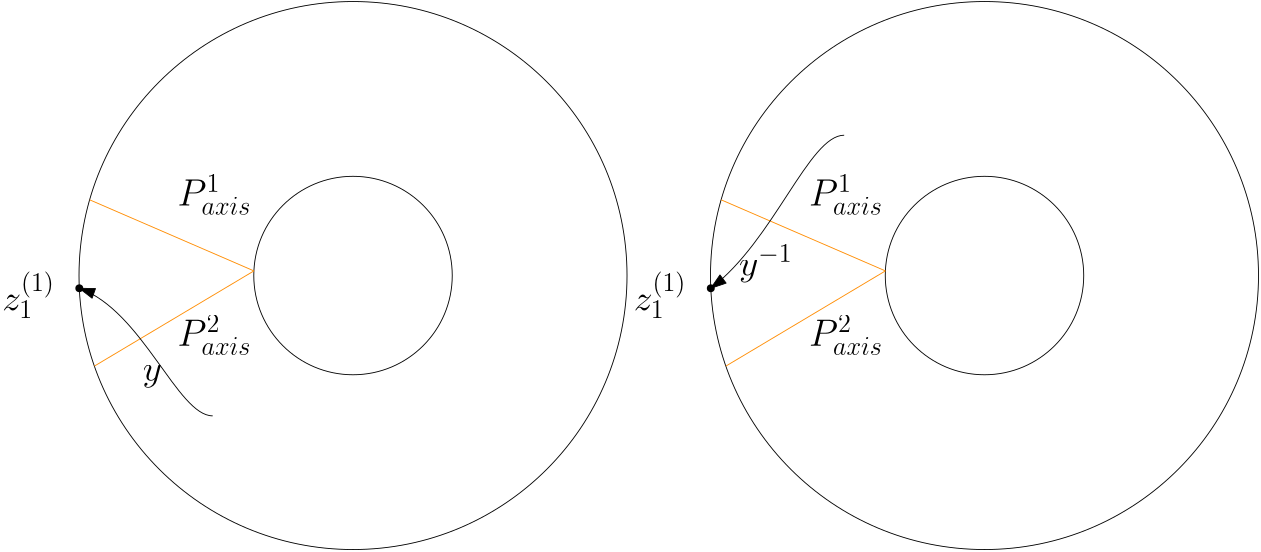}
				\caption{$z^{(1)}_1$ is a sink.}
				\label{fig:2(b)}
			\end{minipage}
		\end{figure}
		\item [1.] $z^{(1)}_1$ is a source. Then $p$ cuts $P^1_{axis}$ $1+$ the number of times it cuts $P^2_{axis}$ (Fig \ref{fig:2(a)}). Hence $O^1_{\mcP}-O^2_{\mcP}=1$.
		\item [2.] $z^{(1)}_1$ is a sink. Then $p$ cuts $P^1_{axis}$ $-1+$ the number of times it cuts $P^2_{axis}$ (Fig \ref{fig:2(b)}). Hence $O^1_{\mcP}-O^2_{\mcP}=-1$.
	\end{itemize}
	Hence, $O^1_{\mcP}-O^2_{\mcP}$ only depends on whether $z^{(1)}_1$ is a sink or source, and not $\mcP$. The same proof holds when there is exactly one terminal between the two axis on $f_2$. Now, it is easy to inductively generalise this for two arbitrary axis $P^1_{axis},P^2_{axis}$ since we can consider a chain of axis such that every two consequitive axis have exactly one terminal between them on a face, and each time the change is $O_{\mcP}$ is the same. Hence, for any $\mcP,\mcQ$ we have $O_{\mcP}^1-O_{\mcP}^2=O_{\mcQ}^1-O_{\mcQ}^2$.
\end{proof}

\subsection{The Algorithm}
\label{subsec:algo}
In this section, we describe the algorithm  for the ``odd\footnote{For $k$ terminals on both faces then putting all sinks on one face, sources on the other and setting axis-crossing to $0$ solves the case when all the paths are across the two faces, irrespective of parity of $k$.} two-face'' $k$-$\SDPP$. Recall that $K_i$ is the set of terminals on $f_i$.

%\begin{definition}
	Let $J_1,J_2$ be a set of distinct terminals on $f_1,f_2$ respectively, such that $|J_i|=\ell_i$. Here, $\ell_1,\ell_2$ are parameters dependent on $q$. Define $G(J_1,J_2)$ to be the preprocessing step on $G$ such that \begin{itemize}
		\item $J_1$ is the set of sinks on $f_1$ and $K_1\setminus J_1$ is the set of sources on $f_1$.
		\item $J_2$ is the set of sources on $f_2$ and $K_2\setminus J_2$ is the set of sinks on $f_2$.
		
	\end{itemize} Also, $\perm(J_1,J_2)$ is defined as $\perm(A_{G(J_1,J_2)})$ where each entry of $A_{G(J_1,J_2)}$ is multiplied with $y^q$ so that monomials in the permanent are in $\bbZ[x,y]$ instead of $\bbZ[x](y)$ and modulo $q$ the exponents of $y$ are the same. We shall vary $J_1,J_2$ in our algorithm.
%\end{definition}

Call a \textit{path configuration} $\mcP$ compatible with $G(J_1,J_2)$ if the matchings in $\mcP$ have one endpoint in a sink and the other in a source in $G(J_1,J_2)$. The following follows from \cref{fact:mat}.
\begin{proposition}\label{prop:comp}
	 $\perm(J_1,J_2)=\sum\limits_{\mcP}h_{\mcP}(x)$ where $\mcP$ is compatible with $G(J_1,J_2$).
	\end{proposition}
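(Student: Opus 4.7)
The plan is to recognize this statement as a direct specialization of \cref{fact:mat} applied to $H = G(J_1, J_2)$, with a small bookkeeping step to account for the $y$-variable decoration. First I would verify that the source--sink partition is well defined and matches what \cref{fact:mat} requires: by the definition of $G(J_1, J_2)$, the source set is $(K_1 \setminus J_1) \cup J_2$ and the sink set is $J_1 \cup (K_2 \setminus J_2)$, and their sizes agree since $k_1 - 2\ell_1 = k_2 - 2\ell_2 = q$, so $(k_1 - \ell_1) + \ell_2 = \ell_1 + (k_2 - \ell_2)$. Hence the collection $\mcM$ appearing in \cref{fact:mat} (bipartite matchings from sources to sinks) is exactly the set of path configurations compatible with $G(J_1, J_2)$ in the sense of the present section.

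Next I would handle the two modifications that distinguish $\perm(J_1, J_2)$ from the unadorned $\perm(A_H)$ of \cref{fact:mat}: the substitution of $y$ and $y^{-1}$ on the primal arcs crossing $P_{\mathrm{axis}}$, and the global multiplication of each entry of $A_{G(J_1, J_2)}$ by $y^{q}$. The former simply decorates each monomial of the permanent with a $y$-factor recording the signed number of axis crossings of the associated cycle cover, which by \cref{lemma:axise} depends (mod $q$) only on the induced path configuration $\mcP$; by the preprocessing construction the non-terminal self-loops and demand edges do not cross $P_{\mathrm{axis}}$, so this factor is determined purely by the non-trivial part of the cycle cover. The latter adds a uniform shift of $nq$ to the $y$-exponent of every monomial, which is a multiple of $q$ and therefore preserves the residue modulo $q$, and guarantees non-negative $y$-exponents so that $\perm(J_1, J_2) \in \bbZ[x, y]$. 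Neither modification alters the underlying set of monomials or the way they partition by path configuration.

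With these identifications in place, the claim is obtained simply by expanding $\perm(A_{G(J_1, J_2)})$ through the monomial-to-cycle-cover bijection of the preprocessing lemma and then aggregating cycle covers according to the matching $\mcP$ on terminals that they induce. By definition, the sum over cycle covers sharing a common induced configuration $\mcP$ is $h_\mcP(x)$ (absorbing the $y^{nq}$ shift and the axis-crossing factor into $h_\mcP$, consistent with the ambient ring $\bbZ[x,y]$), and summing over all compatible $\mcP$ gives the right-hand side. I do not foresee any real obstacle here: the statement is essentially a convenient restatement of \cref{fact:mat} in the notation tailored to the two-face setting, and its verification amounts to matching up the definitions.
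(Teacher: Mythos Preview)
Your proposal is correct and follows exactly the paper's approach: the paper's entire proof is the one-line remark ``The following follows from \cref{fact:mat},'' and you have correctly identified this and fleshed out why the hypotheses of \cref{fact:mat} are met for $H=G(J_1,J_2)$. Your additional bookkeeping about the $y$-variable and the global $y^{nq}$ shift is more care than the paper itself takes (the paper silently treats $h_\mcP$ as living in the larger ring once the axis weights are introduced), but it is accurate and harmless; one small caveat is that type-(2) non-terminal cycles that enclose $f_2$ can also pick up a net $y$-factor, so the axis-crossing contribution is not literally confined to the $k$ disjoint paths --- but since those contributions are absorbed into $h_\mcP$ anyway, this does not affect the decomposition asserted in the proposition.
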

	 For given $J_1,J_2$ as above, call $\mcM_q(J_1,J_2,\tau)$ as the set of \textit{path configurations} $\mcP$ with $O_{\mcP}=\tau(\operatorname{mod}q)$ that are compatible with $G(J_1,J_2)$. From \cref{lemma:axise}, we know all disjoint path instances of the same configuration come with the same $\AC\mod q$. Hence%\footnote{\text{This holds since } $\tau=O_{\mcP}$}:
 \[[y^\tau]_q\perm(J_1,J_2)=\sum\limits_{\mcP\in\mcM_q(J_1,J_2,\tau)}h_{\mcP}(x).\label{eq:E1}\tag{E1}\]

Let $\{\mcP_i\}_i$ be the set of all \textit{path configurations}. Let $\bmM$ be the matrix whose rows are indexed\footnote{ Note that we also vary $q$ i.e. the sizes of $J_1,J_2$.} by the tuples $(J_1,J_2,\tau)$ where $|J_i|=\ell_i\le k_i/2,q=(k_i-\ell_i)/2,\tau\in\bbZ_q$ and the columns of $\bmM$ are indexed by $\{\mcP_i\}_i$. Define
\[
\bmM[(J_1,J_2,\tau),\mcP_i]=
\begin{cases}
	1 &\text{ if } \mcP_i\in \mcM_q(J_1,J_2,\tau) \nonumber \\
	0 &\text{ otherwise}
\end{cases}
\]
Define the \textit{path vector} $\bmv=(h_{\mcP_1}(x),h_{\mcP_2}(x),\ldots)^T$, and the \textit{permanent vector} $\bmp$ where the $(J_1,J_2,\tau)^{\text{th}}$ coordinate is $[y^\tau]_qP(J_1,J_2)$. Hence, restating \eqref{eq:E1} we have 
\[\bmM\cdot\bmv=\bmp\label{eq:E2}\tag{E2}\]
This brings us to the following two lemmas whose proofs are pushed to \cref{appendix:1}.
\begin{lemma}\label{lem:c1}
	$\bmM$ is a square matrix with dimensions less than $k'\times k'$ where $k'\le  k4^k$.
\end{lemma}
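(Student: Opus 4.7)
The plan is to count the rows and the columns of $\bmM$ within each ``slice'' indexed by the common value $q=k_{1}-2\ell_{1}=k_{2}-2\ell_{2}$, and to show that the two counts coincide slice-by-slice. Note that since each $k_{i}$ is odd, $q$ is odd and in particular positive, so $\bbZ_{q}$ is nontrivial and the indexing is well-defined.

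The row count in the $q$-slice is immediate from the definition of $\bmM$: triples $(J_{1},J_{2},\tau)$ with $|J_{i}|=\ell_{i}=(k_{i}-q)/2$ and $\tau\in\bbZ_{q}$ number exactly $\binom{k_{1}}{\ell_{1}}\binom{k_{2}}{\ell_{2}}\cdot q$.

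For the column count I would parameterize each configuration $\mcP$ with $|\mcPi|=\ell_{i}$ by the triple $(\alpha^{(1)}(\mcP),\alpha^{(2)}(\mcP),O_{\mcP})$ consisting of a size-$\ell_{1}$ subset of $K_{1}$, a size-$\ell_{2}$ subset of $K_{2}$, and an element of $\bbZ_{q}$. Injectivity is immediate: the observation in \cref{sec:3.1} recovers $\mcPi$ from $\alpha^{(i)}(\mcP)$, and \cref{lemma:Lemma axis} recovers the cross matching from $O_{\mcP}$ once the face matchings are fixed. For surjectivity, I would argue two points: (i) every size-$\ell_{i}$ subset of $K_{i}$ arises as an alpha-sequence of some valid face matching, provable by an inductive peeling argument that matches the outermost alpha to its uniquely-forced beta and recurses on the remaining terminals, using the odd parity of $k_{i}$ and the no-free-vertex-in-even-interval rule to ensure consistency; (ii) every residue $\tau\in\bbZ_{q}$ is realized by some cross matching, since rerouting any single cross path once around $f_{2}$ in the annulus between $f_{1}$ and $f_{2}$ shifts $O_{\mcP}$ by $\pm 1\pmod q$, so starting from any one configuration all $q$ residues are reached. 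This yields $\binom{k_{1}}{\ell_{1}}\binom{k_{2}}{\ell_{2}}\cdot q$ columns in the $q$-slice, matching the row count and establishing squareness.

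Finally, summing over the at most $k$ valid values of $q$ and using $k_{1}+k_{2}=2k$,
\[
\dim(\bmM)\;\le\;k\cdot\sum_{\ell_{1},\ell_{2}}\binom{k_{1}}{\ell_{1}}\binom{k_{2}}{\ell_{2}}\;=\;k\cdot 2^{k_{1}+k_{2}}\;=\;k\cdot 4^{k}.
\]
The main obstacle is the surjectivity step of the column parameterization. The alpha-sequence part requires careful book-keeping to show that the inductive construction of a face matching with prescribed alpha-sequence never traps a free vertex in an even interval; the odd parity of $|K_{i}|$ and the non-transitivity of $\prec$ must both be used. The axis-crossing part is lighter but still needs the winding operation to be compatible with the fixed face matchings, i.e.\ the rerouted cross path should remain disjoint from $\mcPo\cup\mcPt$, which can be arranged by choosing the cross path with the ``outermost'' free endpoint.
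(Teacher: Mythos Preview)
Your approach is the same as the paper's: both set up the bijection $\mcP \mapsto (\alpha^{(1)}(\mcP),\alpha^{(2)}(\mcP),O_{\mcP})$ between path configurations and row-tuples $(J_{1},J_{2},\tau)$, argue injectivity via \cref{sec:3.1} and \cref{lemma:Lemma axis}, and bound the total dimension by $k\cdot 2^{k_{1}}\cdot 2^{k_{2}}=k\cdot 4^{k}$. The paper's proof is terser and simply asserts the bijection after recording injectivity; you are more explicit about surjectivity, which is the right instinct since the paper leaves that direction implicit.

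That said, your surjectivity argument for the axis-crossing component is not correct as written. Rerouting a single cross path once around $f_{2}$ while keeping its endpoints fixed does not change the matching $\mcP$ at all, so it cannot change $O_{\mcP}$; and in any case such a rerouted path must cross every other cross path, because by \cref{lemma:rs} all cross paths in a disjoint family on the annulus share a common winding number. (If you reroute \emph{all} cross paths together, the instance stays valid but $\AC$ shifts by a multiple of $q$, as in the proof of \cref{lemma:axise}, so $O_{\mcP}$ is again unchanged.) The clean fix is counting plus pigeonhole: once $\mcPo,\mcPt$ are fixed there are exactly $q$ non-crossing bijections between the $q$ free vertices on each face (the $q$ cyclic shifts on the annulus), and \cref{lemma:Lemma axis} says these have pairwise distinct values of $O_{\mcP}$ in $\bbZ_{q}$, so all residues are hit. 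Your part (i) via peeling is sound in outline, though on the circle you must choose the starting point of the walk carefully (a cycle-lemma type argument) so that every prescribed $\alpha$ is matched before termination.
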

\begin{lemma}\label{lem:comp}
	$\bmM$ is an invertible matrix. $\operatorname{Det}(\bmM)< 2^{f_0(k)}$ where $f_0(k)=(2k^2\operatorname{log}k)2^{2k}$. $\operatorname{Adj}(\bmM)$ can be computed in a constant size, depth circuit.
\end{lemma}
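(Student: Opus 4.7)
The plan is to construct an explicit matrix $\bmL$ so that $\bmF := \bmL\bmM$ is triangular (with non-vanishing diagonal) after a suitable reordering of rows and columns. Since $\bmM$ is square by Lemma \ref{lem:c1}, this would immediately yield invertibility; the determinant bound and the parallelism claim would then follow quickly.

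First I would classify the $1$-entries of $\bmM$ into \emph{desirable} entries (those compatible with a candidate partial order on configurations) and \emph{undesirable} ones obstructing triangularity, and following the proof-idea section further split the undesirable entries into ``bad'' and ``face-equivalent'' entries. The design of $\bmL$ is then driven by two involutions: one defined on bad entries and one on face-equivalent entries, each of which fixes the row-index tuple $(J_1,J_2,\tau)$ and swaps the associated configuration with a suitably modified partner configuration that is again compatible with the same tuple. Both involutions are packaged as $\pm 1$ contributions in $\bmL$, with the identity in the remaining positions, so that each undesirable entry in a row cancels against its involution-mate in the product $\bmL\bmM$. The first technical checkpoint is to analyse fixed points of both involutions and to confirm $\bmL$ is invertible; since $\bmL$ is essentially a signed permutation perturbation of the identity, this should reduce to tracking the involutive blocks.

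Next I would order rows of $\bmF$ by tuples $(J_1,J_2,\tau)$ and columns by configurations $\mcP$ using a refinement of the natural order induced by $(\alpha^{(1)}(\mcP),\alpha^{(2)}(\mcP),O_\mcP)$, and prove that the surviving desirable entries of $\bmF$ fit strictly above the diagonal, with a non-zero entry on every diagonal position. The intended diagonal entry for $(J_1,J_2,\tau)$ corresponds to a canonical configuration whose $f_i$-internal matchings realise exactly the prescribed sinks/sources on each face and whose cross-paths realise axis-crossing $\tau$ in the simplest (least-winding) way; Lemma \ref{lemma:Lemma axis} guarantees this canonical configuration is uniquely determined inside its $\mcPi$-class once $\tau$ is fixed, which is what drives non-vanishing of the diagonal. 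Invertibility of $\bmM$ then follows from $\det(\bmF) = \det(\bmL)\det(\bmM) \neq 0$.

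For the quantitative bound, Lemma \ref{lem:c1} caps the dimension of $\bmM$ by $k' \le k\cdot 4^k$ and the entries are in $\{0,1\}$, so Hadamard's inequality gives $|\det(\bmM)| \le (k')^{k'/2} \le 2^{f_0(k)}$ for the stated $f_0(k) = 2k^2\log k \cdot 2^{2k}$. Since $k$ is treated as a constant throughout, both $\bmM$ and $\operatorname{Adj}(\bmM)$ are objects of constant size, and can therefore be written down by a constant-size, constant-depth circuit. The main obstacle will be the triangularisation step of the middle paragraph: both designing the correct row/column ordering and combinatorially verifying that, after the two involutions have acted, every surviving desirable entry lies above the diagonal while the canonical configuration at each diagonal is actually compatible with its tuple. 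This is the combinatorial core that the proof-idea section defers to \cref{sec:inv}.
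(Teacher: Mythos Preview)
Your global plan coincides with the paper's: left-multiply $\bmM$ by a matrix $\bmL$ and show that $\bmF=\bmL\bmM$ is triangular with nonzero diagonal after reordering. But several of the concrete mechanisms you describe are inverted, and those details are where the proof actually lives. The two cancellations in the paper do \emph{not} fix the tuple $(J_1,J_2,\tau)$ and swap configurations; they do the opposite. For a fixed column configuration $\mcQ$, the bad-path involution replaces one sink $j_r\in J_1$ by the other endpoint $\tilde j_r$ of the offending path (so $J_1$ changes), and the face-equivalent involution simultaneously alters $j_r$ and the auxiliary index $i_r$ inside the sum defining $\bmL$; in both cases $\mcQ$ is held fixed and the sign $(-1)^\sigma$ flips. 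Correspondingly, $\bmL$ is not a signed-permutation perturbation of the identity: by \eqref{eq:E5} each entry $\bmL[\mcP,(J_1,J_2,\tau')]$ is itself a signed sum over auxiliary tuples $(I_1,I_2)\in\bbZ_q^{\ell_1}\times\bbZ_q^{\ell_2}$, and it is precisely this internal summation that gives the involutions room to act. (Also: both rows and columns of $\bmF$ are indexed by configurations, not by tuples $(J_1,J_2,\tau)$; that indexing belongs to $\bmM$ and disappears after multiplying by $\bmL$.) If you build $\bmL$ as you describe, the cancellation simply will not occur.

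The triangularisation step is also more delicate than ordering by $(\alpha^{(1)}(\mcP),\alpha^{(2)}(\mcP),O_\mcP)$. The paper does not produce an explicit total order; instead it proves that the digraph with an edge $\mcP\to\mcQ$ whenever $\bmF[\mcP,\mcQ]\ne 0$ is acyclic (\cref{lemma:inv}) and then takes any topological sort. Acyclicity is established through a hierarchy of invariants that are monotone along edges: the pivot sets $\A,\B$, then the rightmost-pivot sets $\Rp^{(1)},\Rp^{(2)}$, then the multiplicities $\n_i,\m_i$ at rightmost pivots (\cref{lemma:rp}), followed by a structural induction that peels off the outermost nested paths. Your $\alpha$-sequence order is too coarse to detect these distinctions. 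The determinant bound and the constant-size-circuit claim are fine as you wrote them; the paper uses $(k4^k)!$ rather than Hadamard, but either bound fits under $2^{f_0(k)}$.
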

%\paragraph*{A randomized search algorithm.} %To find an instance of the shortest $k-$DPP, we use the following two results.
%This brings us to the following: %Algorithm for the``Odd Two-Fac'' $k$-$\SDPP$.
%\begin{comment}
%\lipsum[1-3]
\begin{algorithm}[t]%\tag{Algorithm 1}
	\caption{Odd Two-Face $k$-$\SDPP(\mcP)$}\label{algo:1}
	\begin{algorithmic}[1]
		\scriptsize
		
		\item[1.] Give random weights $r_e \in [4n^2]$ to the lower order bits of the original graph $G$; i.e. edge $e$ gets weight $4w_en^2+r_e$ where $r_e$ are the random weights.
		\item[2.] $\forall q,\ell_i=(k_i-q)/2,\tau\in\bbZ_q$ and $J_i\in \binom{K_i}{\ell_i}$, compute $[y^\tau]_q\perm(J_1,J_2)\mod 2^{f_0(k)}$ i.e entries of $\bmp\bmod 2^{f_0(k)}$.
		\item[3.] Compute $H_{\mcP}=\det(\bmM) h_{\mcP}\bmod 2^{f_0(k)}$ using the identity $\det(\bmM)\bmv=\operatorname{Adj}(\bmM)\cdot \bmp$ and step 2.
		\item[4.] The least degree monomial $x^w$ in $H_{\mcP}(x)$ corrsponds to the unique shortest disjoint path instance under isolation.
		\item[5.] Perform standard decision to search in parallel. See for example \cite{DIKM}.
		
	\end{algorithmic}
\end{algorithm}
%\end{comment}
Thus we reach the proof of the main theorem with the aid of \cref{algo:1}:
\begin{proof}[Proof of \cref{theorem:1}]%(of \cref{theorem:1})
	Let $\sF\subset 2^{[m]}$ be the set of all disjoint path instances of $\mcP$. Then by our choice of random weights $r_e$ and Isolation Lemma (\cref{lemma:iso}), the shortest disjoint path instances of $\mcP$ w.r.t the random weights is unique with probability at least $3/4$. Further the higher order bits given by $4w_en^2$ assure that this instance is indeed a shortest disjoint path instance on $G$.
	
	We find $\bmp\bmod {2^{f_0(k)}}$ by computing the entries $[y^{\tau}]_q\perm(J_1,J_2)\bmod {2^{f_0(k)}}$ using \cref{prop:extract}. Hence, step 2 and 3 of the Algorithm can be computed in $\NC^2$. We can compute from \cref{lem:comp} the $\operatorname{Adj}(\bmM)$ in $\AC^1$. Also $\det(\bmM)<2^{f_0(k)}$ from our choice of $f_0$ justifying the correctness of step 4. This completes the proof.
\end{proof}

\section{Proof of Correctness}
\label{sec:proof}
In this section, we prove that $\bmM$ is invertible. We massage the equations \eqref{eq:E2} so that they become triangular with non-zero diagonal. We do this by left multiplying $\bmM$ with a square matrix $\bmL$ to cancel out certain unwanted \emph{configurations}. 
%
\begin{comment}
 Now, with this new $P_{axis}$, for any $J_1,J_2,\tau\in\bbZ_q$ where $|J_i|=|\ell_i|$ are parameters, we have  $$[y^\tau]_q\perm(J_1,J_2)=\sum\limits_{\mcP\in\mcM_q(J_1,J_2,\tau)}h_{\mcP}(x)$$ Let the fixed axis used in the Algorithm be $P_{axis}'$. Then, from \cref{lemma:Lemma 5}, we have $$\sum\limits_{\mcP\in\mcM_q(J_1,J_2,\tau)}h_{\mcP}(x)=\sum\limits_{\mcP\in\mcM_q(J_1,J_2,\tau+c)}h_{\mcP}(x)$$ where the first sum is w.r.t $P_{axis}$, the second sum is w.r.t $P'_{axis}$, and $c$ is a constant depending on the position of the endpoints of the axis. Hence, w.r.t $P_{axis}$ we have
$$[y^\tau]_q\perm(J_1,J_2)=\sum\limits_{\mcP\in\mcM_q(J_1,J_2,\tau+c)}h_{\mcP}(x)$$
 %$$[y^\tau]_qP(J_1,J_2)=\sum\limits_{\mcP\in\mcM_q(J_1,J_2,\tau+c)}h_{\mcP}(x)\label{eq:0}\tag{0}$$
where the second sum is w.r.t $P_{axis}'$. Hence, changing the position of the axis in intermediate steps does not affect the invertibility of $\bmM$. \[\label{eq:0}\tag{0}\]
Say, there are two different axes $P^1_{axis},P^2_{axis}$. Denote
\begin{lemma}\label{lemma:change}
	
\end{lemma}

\end{comment}
%\begin{definition}
\paragraph*{Setting up Notation}
We set up the machinery to solve the equations given by \eqref{eq:E2}. Let $\mcM_q$ be the set of all path configurations with exactly $q$ paths going across $f_1$ and $f_2$. We also number the terminals on the face $f_i$ by $\{0,1,\ldots,k_i-1\}$ and always draw the $P_{axis}$ is the following way: On $f_1$ it is between $k_1-1$ and $0$, and on $f_2$ it is between $k_2-1$ and $0$. Realise that the labels ($z_i^{(1)},z_i^{(2)}$), but we vary this numbering only to specify the positioning of the $P_{axis}$ i.e. we vary the $P_{axis}$. Recall from \cref{lemma:Lemma 5} that varying the $P_{axis}$ will not generate new information. But in this part of the proof, we vary the position of the $P_{axis}$ to reduce bookkeeping. 
		\begin{figure}[H]
		
		\begin{minipage}[c]{\linewidth}
			\centering
			\includegraphics[width=10cm]{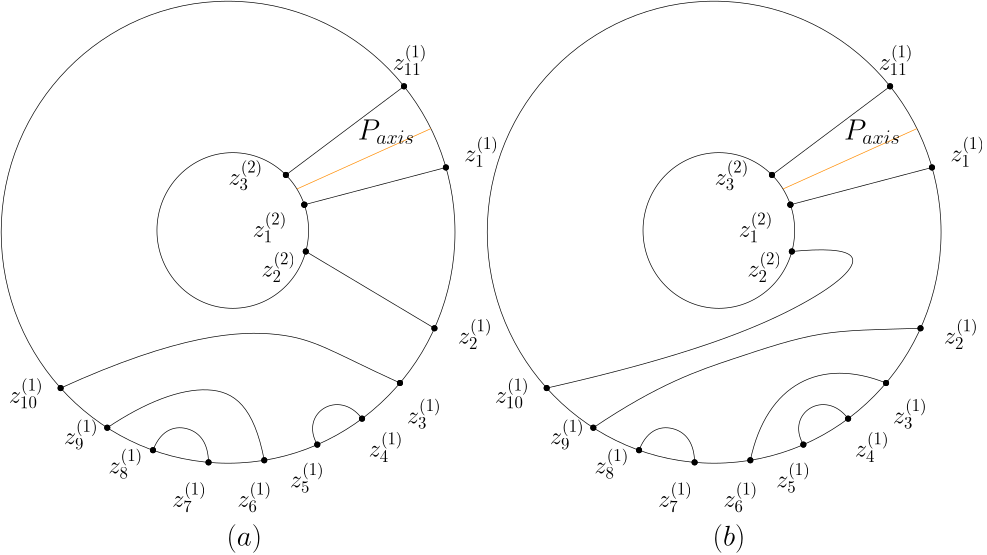}
			\caption{$\A=(z^{(1)}_5,z^{(1)}_8)$ in both $(a),(b)$. In $(a)$, $\N=(2,2)$, whereas in $(b)$, $\N=(3,1)$. $\alpha^{(1)}(\mcP)=\{z^{(1)}_3,z^{(1)}_4,z^{(1)}_6,z^{(1)}_7\}$ and $\{z^{(1)}_2,z^{(1)}_3,z^{(1)}_4,z^{(1)}_7\}$ in $(a)$ and $(b)$ respectively. $\Rp^{(1)}$ in both figures is $\{z^{(1)}_8\}$.}
			\label{fig:4(a)}
		\end{minipage}
		
	\end{figure}
Recall from \cref{sec:3.1}, for a path between $\alpha_p$ and $\beta_p$, they are respectively the clockwise first and last terminals in their \emph{even interval}.
Define $\Phi:\mcM_q \to \binom{K_1}{t\le \ell_1}\times \bbN^{t}\times \binom{K_2}{t'\le\ell_2}\times \bbN^{t'}$ as follows: %-TODO%XXX

$\Phi(\mcP)=((\A,\N),(\B,\M))$ where $\A$ be the set of \emph{pivots} on $f_1$ given by $\{\beta_p|p\in\mcPo,\alpha_p+1=\beta_p\}$ and $\sa_1<\ldots<\sa_t$ in the order of the labels i.e. $z^{(1)}_1,\ldots ,z^{(1)}_{k_1}$.
$\N=(\n_1,\ldots,\n_t)$ such that $\alpha^{(1)}(\mcP)=\cup_{i\in[t]}\{\sa_i-1,\sa_i-2,\ldots,\sa_i-\n_i\}$. In other words, $n^{(1)}_i$ is the maximum number of $\alpha_p$'s that preceed $\sa_i$
We say that the $\n_i$ paths that have an endpoint in $\{\sa_i-1,\ldots,\sa_i-\n_i\}$ \emph{contribute} to $n_i$ and \emph{correspond} to $\sa_i$.\label{contribute}
Similarly define $(\B,\M)$ such that $\alpha^{(2)}(\mcP)=\bigcup_{i\in[t']}\{\sab_i-1,\sab_i-2,\ldots,\sab_i-\m_i\}$.
Refer to Fig \ref*{fig:4(a)} for an example. Observe that $(\A,\N),(\B,\M)$ uniquely determine $\alpha^{(1)}(\mcP),\alpha^{(2)}(\mcP)$, hence $\Phi$ restricted to $\mcPi$ for $i=1,2$ is an injection.

For a \textit{path configuration} $\mcP$, define $\wt \Phi(\mcP)=(\aA,\bB)$ as follows:\\ If $\Phi(\mcP)=((\A,\N),(\B,\M))$, where $\A=(\sa_1,\ldots,\sa_t),\N=(\n_1,\ldots,\n_t)$, then $\aA$ is the sequence where $\sa_i$ occurs $\n_i$ times i.e $\aA=(\underbrace{\sa_1,\ldots,\sa_1}_{\n_1},\ldots,\underbrace{\sa_t,\ldots,\sa_t}_{\n_t})$. Define $\bB$ similarly w.r.t $(\B,\M)$. Observe that if $\mcP\in \mcM_q$, then $|\aA|=\ell_1=(k_1-q)/2,|\bB|=\ell_2=(k_2-q)/2$. For a fixed $q$, let $J_1=(j_1,\ldots,j_{\ell_1})\in \binom {K_1}{\ell_1}$ be the sequence of terminals designated to be sinks on $f_1$, and let $J_2=(j_1',\ldots,j_{\ell_2}')\in \binom {K_2}{\ell_2}$ be the sequence of terminals designated to be the sources on $f_2$. Note that $q$ is not fixed. We say that $\aA_r$ is the pivot of $j_r$, and let $(\aA_r-1,\aA_r)$ be its \emph{pivot pair}. Then define $\delta$ as follows:
\[
\text{For $j_r\in J_1$, } \delta(j_r)=\begin{cases}0 &  j_r\in \{\aA_r,\aA_r+1,\ldots,k_1-1\}\\
	1 & j_r\in \{0,\ldots,\aA_r-1\}\end{cases}\text{, here $r$ varies in $[\ell_1]$.} \label{eq:delta1}\tag{1}
\]
\[\text{For $j_r'\in J_2$, } \delta(j_r')=\begin{cases}0 & j_r'\in \{\bB_r,\bB_r+1,\ldots,k_2-1\}\\
	1 & j_r'\in \{0,\ldots,\bB_r-1\}\end{cases}\text{, here $r$ varies in $[\ell_2]$.} \label{eq:delta2}\tag{2}
\]
In other words, for $j_r\in J_1$, $\delta(j_r)=1$ \emph{iff} $j_r$ is strictly between the axis and its pivot $\aA_r$ in the clockwise sense. And $ \delta(J_i)=\sum\limits_{j\in J_i} \delta(j) \text{ for $i=\{1,2\}$}$ is the number of sinks(resp. sources) on $f_1$(resp. $f_2$) which are between the axis their pivot in the clockwise order. Also, assume that $0$ is a \textit{free} vertex\footnote{This the second place where we assume that there are odd number of terminals on both faces.} in $\mcP$ on either face. If not, we can wlog renumber the vertices accordingly. Observe that renumber only changes the position of the $P_{axis}$. Since the $P_{axis}$ is adjacent to a \emph{free} vertex, only the paths across the two faces in $\mcP$ contribute to axis-crossing.

Now we define the following polynomial:
\[
\mcF_{\mcP}(x)=\sum\limits_{I_1,I_2,J_1,J_2}(-1)^\sigma\left ([y^{O_{\mcP}+\tau}]_q\perm(J_1,J_2)\right ) \text{ where we vary $I_i\in \bbZ_q^{\ell_i},J_i\in \binom {K_i}{\ell_i}$}\label{eq:E3}\tag{E3}\]\[
\text{where } \tau(I_1,I_2,J_1,J_2,\aA,\bB)=2\Sigma_{I_1}+2\delta(J_1)-2\Sigma_{I_2} -2\delta(J_2)\text{, and}\]\[
\sigma(I_1,I_2,J_1,J_2,\aA,\bB)=\Sigma_{I_1}+\Sigma_{I_2}+\Sigma_{J_1}+\Sigma_{J_2}+\delta(J_1)+\delta(J_2) 
\]
Here we use the convention $\Sigma_S=\Sigma_{s\in S}s$.
\begin{comment}
	\begin{definition}
		
		Let $\wt f(\mcP)=(A,B,\mcO_{\mcP}),A=(a_1,\ldots,a_{l_1})\in K_1^{l_1},B=(b_1,\ldots,b_{l_2})\in K_2^{l_2}$ be fixed. Let $J_1=(j_1,\ldots,j_{\ell_1})\in \binom {K_1}{\ell_1}, J_2=(j_1',\ldots,j_{\ell_2}')\in \binom {K_2}{\ell_2}$. Then define $\delta$ as follows:
		\[
		\text{For $j_r\in J_1$, } \delta(j_r)=\begin{cases}0 &  j_r\in \{a_r,\ldots,k_1-1\}\\
			1 & j_r\in \{0,\ldots,a_r-1\}\end{cases}\text{, here $r$ varies in $[\ell_1]$.}
		\]
		\[\text{For $j_r'\in J_2$, } \delta(j_r')=\begin{cases}0 & j_r'\in \{b_r,\ldots,k_2-1\}\\
			1 & j_r'\in \{0,\ldots,b_r-1\}\end{cases}\text{, here $r$ varies in $[\ell_2]$.}
		\]
		And $ \delta(J_i)=\sum\limits_{j\in J_i} \delta(j) \text{ for $i=\{1,2\}$}$. Also, assume that $0$ is a \textit{free} vertex in $\mcP$ on either face. If not, we can relabel the vertices accordingly.
		
		Now we define the following polynomial:
		\[
		\mcF_{\mcP}(x)=\sum\limits_{I_1,I_2,J_1,J_2}(-1)^\sigma\left ([y^{O_{\mcP}+e}]_qP(J_1;J_2)\right ) \text{ where we vary $I_i\in \bbZ_q^{\ell_i},J_i\in \binom {K_i}{\ell_i}$}\]\[
		\text{where } e(I_1,I_2,J_1,J_2,A,B)=2\Sigma_{I_1}+2\delta(J_1)-2\Sigma_{I_2} -2\delta(J_2)\text{, and}\]\[
		\sigma(I_1,I_2,J_1,J_2,A,B)=\Sigma_{I_1}+\Sigma_{I_2}+\Sigma_{J_1}+\Sigma_{J_2}+\delta(J_1)+\delta(J_2) \mod 2
		\]
	\end{definition}
\end{comment}
Observe that equations \eqref{eq:E1} and \eqref{eq:E3} implie
\[\mcF_{\mcP}(x)=\sum\limits_{I_1,I_2,J_1,J_2}(-1)^\sigma\sum\limits_{\mcQ\in\mcM_q(J_1,J_2,\tau+O_{\mcP})}h_{\mcQ}(x)\label{eq:E4}\tag{E4}\] 
Now, consider the following matrix $\bmL$ whose rows are indexed by $\{\mcP_i\}_i$ and columns by the tuples $(J_1,J_2,\tau)$.
\[
\bmL[\mcP_i,(J_1,J_2,O_{\mcP_i}+\tau)]=	\sum\limits_{\vI,\vJ,\vA:\tau(\vI,\vJ,\vA)=\tau}(-1)^\sigma  \label{eq:E5}\tag{E5}
\] where $(\vI,\vJ,\vA)=(I_1,I_2,J_1,J_2,\aA,\bB)$. Clearly, $\bmL$ is motivated by the equation \eqref{eq:E3}. Now, consider the matrix $\bmF=\bmL\cdot\bmM$ whose rows and columns are both indexed by $\{\mcP_i\}_i$. This matrix is motivated by the equation \eqref{eq:E4}. Realise that \eqref{eq:E2} is now transformed to $\bmF\cdot\bmv=\bmL\cdot\bmp$. We prove that $\bmF$ is triangular with nonzero diagonal, hence invertible; this implies that $\bmM$ is invertible. 
\begin{comment}
\begin{remark}
If $h_{\mcQ}(x)$ appears with nonzero coefficient in $\mcF_{\mcP}(x)$, then we say that $\mcQ\in\mcF_{\mcP}$. Note that in a $\mcQ\in\mcF_{\mcP}(x)$, if we get paths with both endpoints on $f_1$ or on $f_2$, then each path on $f_1$(resp. $f_2$) has one of its endpoints in a sink(resp. source) in $K_1$(resp. $K_2$). $J_1$ and $J_2$ are the set of these $\ell_1,\ell_2$ sinks and sources on $K_1$ and $K_2$ that we vary in the above sum.
\end{remark}
\end{comment}
\subsection{Cancelling Unwanted Path Configurations}\label{sec:cancel}
In this section, we show that unwanted path configurations are cancelled out in pairs in $\bmF=\bmL\cdot\bmM$. These path configurations are divided into two categories, namely \emph{bad} configurations and $f_1,f_2-$equivalent. If $h_{\mcQ}(x)$ appears with nonzero coefficient in $\mcF_{\mcP}(x)$, then we say that $\mcQ\in\mcF_{\mcP}$.

\subparagraph*{Cancelling Bad Configurations}
Fix a \textit{path configuration} $\mcP$. Say, $\wt \Phi(\mcP)=(\aA,\bB)$. Let $\mcQ\in\mcM_q(J_1,J_2,\tau+O_{\mcP})$ for some $J_1,J_2,\tau$. Let $J_1=(j_1,\ldots,j_{l_1})$ be the set of sinks on $f_1$, and $p$ be a path in $\mcQ$ with both its endpoints on $f_1$ with one of the endpoints at $j_r$ i.e. $p$ is the path on $f_1$ ending at $j_r$. Then we say $p$ is a $good$ path if the \emph{pivot pair} $(\aA_r-1,\aA_r)$ of $j_r$ lies in the even interval of $\alpha_p,\beta_p$. Otherwise, we say $p_r$ to be a $bad$ path. We make an analogous definition for $f_2$.Note that a \textit{bad} path always has both its endpoints on the same face. We call a path configuration $\mcP$ to be \textit{bad} if it contains some \textit{bad} path. Note that badness is w.r.t some fixed \textit{path configuration} $\mcP$.
\begin{comment}
	Let $J_2=(j_1',\ldots,j_{l_2}')$ be the set of sources on $f_2$. Let the path starting at $j_r'$ be $p'$. Suppose that $p'$ was a path with both endpoints on $f_2$. Then we say $p'$ is a $good$ path if the \emph{pivot pair} of $j_r'$ lies in the even interval of $\alpha_{p'},\beta_{p'}$. Otherwise, we say $p'$ to be a $bad$ path....
\end{comment}
	\begin{figure}[hbt!]
	\begin{minipage}[c]{0.5\linewidth}
		\centering
		\includegraphics[width=6.5cm]{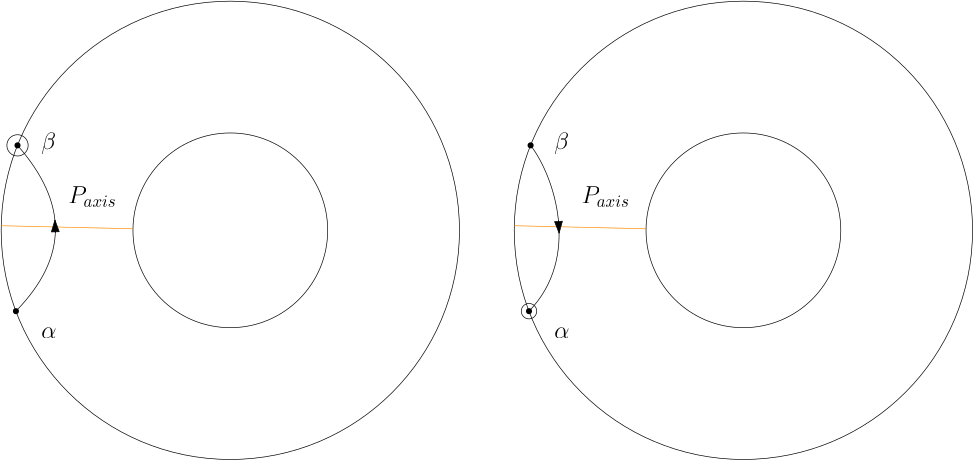}
		\caption{The circled terminal is the sink which is switched from $J_1$ to $\wt J_1$.}
		\label{fig:3(a)}
	\end{minipage}
	\begin{minipage}[c]{0.5\linewidth}
		\centering
		\includegraphics[width=6.5cm]{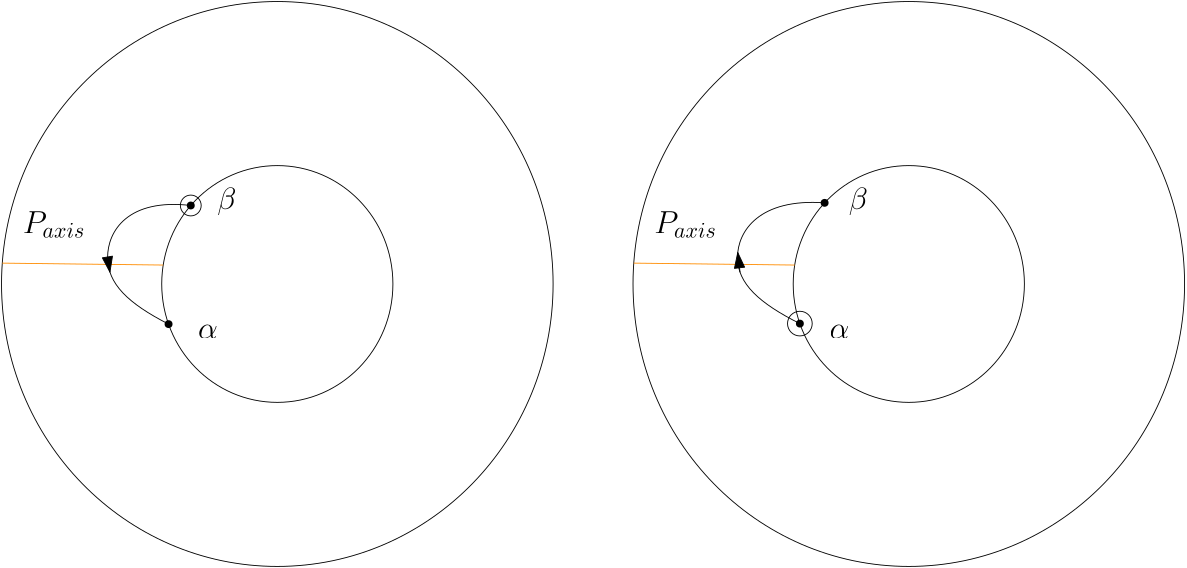}
		\caption{The circled terminal is the source switched from $J_2$ to $\wt J_2$.}
		\label{fig:3(b)}
	\end{minipage}
\end{figure}
%Let $\mcP$ be a \textit{path configuration} in $\mcM_q(J_1,J_2,\tau)$ for some $J_1,J_2$ and $\tau$. Suppose a path $p$ with endpoints $\alpha\prec \beta$ on $f_1$ cuts the $P_{axis}$. Exactly one of the sinks in $J_1=(j_1,j_1,\ldots,j_{l_1})$ has to be either $\alpha$ or $\beta$. Wlog, assume, $j_1=\beta$; let $\wt J_1=(\alpha,j_2,\ldots,j_{l_1})$.
%
%  Similarly, if a path with both endpoints on $f_2$ cuts the axis, then on switching its endpoints in $J_2$, the axis-crossing decreases by 2 (here it decreases since $J_2$ is a set of sources as shown in Fig \ref{fig:3(b)}).
The following Cancellation Lemma allows us to cancel out all \emph{bad} configurations in pairs.
\begin{lemma}[Cancellation Lemma 1]\label{lemma:bcancel}
	For all $\mcP$, all bad path configurations $\mcQ$ cancel out in pairs in $\mcF_{\mcP}(x)$ i.e. $\bmF[\mcP,\mcQ]=0$ for all bad configurations $\mcQ$.
\end{lemma}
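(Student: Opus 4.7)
The plan is to show $\bmF[\mcP,\mcQ]=0$ by constructing a sign-reversing, fixed-point-free involution $\iota$ on the tuples $T=(I_1,I_2,J_1,J_2)$ whose contributions sum to the $h_\mcQ(x)$-coefficient of $\mcF_\mcP(x)$. Expanding via \eqref{eq:E4}, this coefficient equals $\sum_T (-1)^{\sigma(T)}$, where $T$ ranges over tuples satisfying $\mcQ\in\mcM_q(J_1,J_2,O_\mcP+\tau(T))$; such an involution collapses the sum to zero.

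To build $\iota$, for each $T$ in the summation range I will pick a canonical bad path $p=p(T)$ of $\mcQ$: the bad path on $f_1$ with the smallest sink-endpoint label, or, failing that, the bad path on $f_2$ with the smallest source-endpoint label. Such a $p$ exists because $\mcQ$ is bad. Assume without loss of generality that $p$ sits on $f_1$, with sink $\beta_p=j_r\in J_1$ and source $\alpha_p\notin J_1$ (the $f_2$ case is symmetric with $I_2, J_2$ in place of $I_1, J_1$). Define $\iota(T)=(I_1',I_2,J_1',J_2)$, where $J_1'=(J_1\setminus\{\beta_p\})\cup\{\alpha_p\}$ (the sink role slides from $\beta_p$ to $\alpha_p$), and $I_1'$ differs from $I_1$ only in its $r$-th coordinate, shifted by a $\Delta\in\bbZ_q$ calibrated so that the change in $2\delta(J_1)$ is offset by the change in $2\Sigma_{I_1}$, preserving $\tau$ modulo $q$.

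The verification breaks into four checks. First, $\mcQ$ is preserved, since the matching $\mcQ$ is independent of sink/source designations. Second, $\tau$ is preserved by the choice of $\Delta$. Third, $(-1)^{\sigma}$ flips: we have $\Sigma_{J_1'}-\Sigma_{J_1}=\alpha_p-\beta_p$, which is odd (because $\alpha_p\prec\beta_p$ on a face with an odd number of terminals forces the gap to be odd), so $\Sigma_{J_1}$ alone flips a single bit of $\sigma$; the remaining contributions from $\delta(J_1)$ and $\Sigma_{I_1}$ cancel each other modulo $2$ by the construction of $\Delta$. Fourth, $\iota\circ\iota=\mathrm{id}$: the canonical bad path of $\iota(T)$ must still be $p$, which requires that (a) $p$ remains bad after the swap, and (b) no other path changed its bad/good status. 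Both follow from the structural fact that because the pivot pair $(\aA_r-1,\aA_r)$ lies outside the even interval $[\alpha_p,\beta_p]$, swapping $\beta_p$ for $\alpha_p$ in $J_1$ does not shift the sort position of any other sink, so every other pivot-to-sink assignment is left intact.

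The main obstacle is step (iv), ensuring canonicity is preserved, which is where the combinatorics of the alpha-sequence, the non-interlacing of same-face paths, and the odd cardinality of $K_1, K_2$ all come together. A careful case analysis should confirm that the ``outside the even interval'' condition on the pivot pair is precisely what prevents a local sink-swap from globally reshuffling the pivot assignments, and thereby makes $\iota$ a well-defined involution. Once these structural facts are in place, the sign-reversing pairing collapses every contribution of a bad $\mcQ$, yielding $\bmF[\mcP,\mcQ]=0$.
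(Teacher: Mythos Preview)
Your sign-reversing involution is the right plan and is essentially what the paper does, but the calibration of $\Delta$ contains a genuine error.

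The membership condition ``$\mcQ\in\mcM_q(J_1,J_2,O_\mcP+\tau)$'' unpacks to $O_\mcQ(J_1,J_2)\equiv O_\mcP+\tau\pmod q$, where $O_\mcQ(J_1,J_2)$ is the axis-crossing of $\mcQ$ under the sink/source designation $(J_1,J_2)$. This quantity is \emph{not} invariant under your swap: if the bad path $p$ straddles the axis, flipping which endpoint is the sink reverses the direction in which $p$ crosses the axis and changes $O_\mcQ$ by $\pm 2$ (this is exactly Claim~\ref{claim:ax2}). Consequently, preserving $\tau$ is the wrong target --- what must be preserved is $\tau - O_\mcQ$ --- and with your choice of $\Delta$ the image $\iota(T)$ drops out of the summation range in precisely this case. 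Your parity assertion ``$\alpha_p-\beta_p$ is odd'' is the same oversight in disguise: when the even interval of $p$ contains both labels $k_1-1$ and $0$ (which have the same parity since $k_1$ is odd), the endpoints $\alpha_p,\beta_p$ have the \emph{same} parity and the difference is even.

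The paper's remedy is simpler: take $\Delta=0$, leave $I_1,I_2$ untouched, and split on whether the bad path crosses the axis. If it does, $\delta(J_1)$ changes by $1$, so $\tau$ shifts by $2$, which exactly matches the shift in $O_\mcQ$; the sign flips because $\delta(J_1)$ contributes an odd change to $\sigma$ while $\Sigma_{J_1}$ contributes an even one. If it does not, both $\tau$ and $O_\mcQ$ stay put, and the sign flips because $\Sigma_{J_1}$ changes by an odd amount while $\delta(J_1)$ is unchanged. With $I_1$ fixed and $J_1$ treated as a tuple (so the swapped endpoint stays in position $r$), involutivity is immediate and your step~(iv) --- the claim that sort positions of other sinks are unaffected --- is not needed at all.
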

\begin{proof}
	Suppose we get a path configuration $\mcQ\in\mcM_q(J_1,J_2,\tau+O_{\mcP})$ with some \textit{bad path}, and $I_1=(i_1,\ldots,i_{l_1}),J_1=(j_1,\ldots,j_{l_1})$, $I_2=(i_1',\ldots,i_{l_2}'),J_2=(j_1',\ldots,j_{l_2}')$. Say that $\mcP$ has the paths $(p_1,\ldots,p_{l_1})$, $(p_1',\ldots,p_{l_2}')$ ending at and starting at $J_1,J_2$ respectively in the same order, such that $p_r$ is the first path that is \textit{bad}. Say the endpoints of $p_r$ are $j_r$ and $\wt j_r$. Wlog assume $j_r\prec \wt j_r$. Consider $\wt J_1=(j_1,\ldots,j_{r-1},\wt j_r,j_{r+1},\ldots,j_{l_1}),\wt I_1=I_1,\wt I_2=I_2,\wt J_2=J_2$ where we replace $j_r$ with $\wt j_r$ in $J_1$ to yield $J_2$.
	\begin{itemize}
		\item \texttt{Case 1:} $p_r$ cuts the $P_{axis}$. In this case, $k_1-1,0$ lies in the even interval of $j_r$ and $\wt j_r$ i.e. parity of $j_r$ is the same as that of $\wt j_r$ (this is because $k_1-1$ and $0$ have the same parity\footnote{This is the third place we assume both face have an odd number of terminals}). Since $p_r$ is \textit{bad}, $j_r\in \{\aA_r,\ldots,k_1-1\}$ and $\wt j_r\in \{0,\ldots,\aA_r-1\} $, therefore $\delta(\wt J_1)=\delta(J_1)+1$ from definition \eqref{eq:delta1}. This implies that the new values\footnote{Here $\wt \tau,\wt \sigma$ are used as shorthand for $\tau(\wt I_1,\wt I_2,\wt J_1,\wt J_2),\sigma(\wt I_1,\wt I_2,\wt J_1,\wt J_2)$.} $\wt \tau=\tau+2,\wt \sigma=\sigma+1$. Since the path $p_r$ cuts the axis, the axis-crossing of $\mcQ$ at $(I_1,J_1,I_2,J_2)=2+\mbox{ axis-crossing at }(\wt I_1,\wt J_1,\wt I_2,\wt J_2)$ (using \cref{claim:ax2}). Hence, $ \mcQ\in\mcM_q( J_1, J_2, \tau)$ and $\mcM_q(\wt J_1,\wt J_2,\wt \tau)$ with opposite signs $(-1)^{\sigma}$ and $(-1)^{\wt\sigma}$.% occurs at $(I_1,J_1,I_2,J_2)$ and $(\wt I_1,\wt J_1,\wt I_2,\wt J_2)$ with opposite signs.
		\item \texttt{Case 2:} $p_r$ does not cut the $P_{axis}$. In this case, $k_1-1,0$ does not lie in the even region of $j_r$ and $\wt j_r$ i.e. parity of $j_r$ is not equal to that of $\wt j_r$. Also either both $j_r,\wt j_r\in \{\aA_r,\ldots,k_1-1\},\text{ or } \{0,\ldots,\aA_r-1\}$ i.e. $\delta(J_1)=\delta(\wt J_1)$. This implies that $\wt \tau=\tau,\wt \sigma=\sigma+1$. Since $p_r$ does not cut $P_{axis}$, this means that $ \mcQ\in\mcM_q( J_1, J_2, \tau)$ and $\mcM_q(\wt J_1,\wt J_2,\wt \tau)$ with opposite signs $(-1)^{\sigma}$ and $(-1)^{\wt\sigma}$.
	\end{itemize}
	Note that the above shows an involution between the monomials in $\mcF_{\mcP_0}(x)$ with some \emph{bad} paths with a positive sign and those with a negative sign.
	If all the paths $p_r$ were $good$, then take the first $p_r'$ which is a $bad$ path. In which case take $\wt I_1=I_1,$ $\wt J_1=J_1,\wt I_2=I_2,\wt J_2=(j_1',\ldots,j_{r-1}',\wt j_r',j_{r+1}',\ldots,j_{l_2}')$ where $p_r$ has endpoints $j_r'\prec \wt j_r'$. Note that a \textit{bad} path always has both its endpoints on a single face. Here, we can argue in the case analogous to case 1 above, $\wt \tau=\tau-2,\wt\sigma=\sigma-1$ using a result similar to \cref{claim:ax2} with the aid of \cref{fig:3(b)}, we establish an involution. The case 2 is virtually identical in both.
	\begin{comment}
	\begin{itemize}
		\item \texttt{Case 1:} $p_r'$ cuts the $P_{axis}$. In this case, $k_2-1,0$ lies in the even interval of $j_r'$ and $\wt j_r'$ i.e. parity of $j_r'$ is the same as that of $\wt j_r$. Observe that $j_r'\in \{b_r,\ldots,k_2-1\},$ and $\wt j_r'\in \{0,\ldots,b_r-1\} $, therefore $\delta(\wt J_2)=\delta(J_2)+1$ from definition \eqref{eq:delta2}. This implies that $\wt \tau=\tau-2,\wt \sigma=\sigma+1$. Since the path $p_r'$ cuts the axis, the axis-crossing of $\mcP$ at $(I_1,J_1,I_2,J_2)=$ axis-crossing at $(\wt I_1,\wt J_1,\wt I_2,\wt J_2)-2$. Hence, $ \mcP$ occurs at $(I_1,J_1,I_2,J_2)$ and $(\wt I_1,\wt J_1,\wt I_2,\wt J_2)$ with opposite signs.
		\item \texttt{Case 2:} $p_r'$ does not cut the $P_{axis}$. In this case, $k_2-1,0$ does not lie in the even region of $j_r'$ and $\wt j_r'$ i.e. parity of $j_r'$ is not equal to that of $\wt j_r'$. Also either both $j_r',\wt j_r'\in \{b_r,\ldots,k_2-1\},\text{ or } \{0,\ldots,b_r-1\}$ i.e. $\delta(J_1)=\delta(\wt J_1)$. This implies that $\wt \tau=\tau,\wt \sigma=\sigma+1$. Since $p_r$ does not cut $P_{axis}$, this means that $\mcP$ occurs at $(I_1,J_1,I_2,J_2)$ and $(\wt I_1,\wt J_1,\wt I_2,\wt J_2)$ with opposite signs.
	\end{itemize}
	
	\end{comment}
	
\end{proof}
Thus we are done modulo the following claim.
\begin{claim}\label{claim:ax2}
	$\mcQ\in\mcM_q(\wt J_1,J_2,\tau+2)$.
\end{claim}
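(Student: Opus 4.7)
We must verify two things: (i) that $\mcQ$ is compatible with $G(\wt J_1, J_2)$, and (ii) that its axis-crossing residue shifts by exactly $2$ modulo $q$ when we pass from $(J_1, J_2)$ to $(\wt J_1, J_2)$. Compatibility is immediate: $\wt J_1$ differs from $J_1$ only by swapping $j_r$ and $\wt j_r$, which are precisely the two endpoints of the path $p_r \in \mcQ$, so $p_r$ continues to join a sink to a source (with roles reversed) while every other path of $\mcQ$ is unaffected.

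For the axis-crossing shift, note that only $p_r$ is reoriented by the swap; every other path of $\mcQ$ keeps its orientation. Since signed axis-crossings are additive over oriented paths,
\[
\operatorname{AC}_{\wt J_1, J_2}(\mcQ) - \operatorname{AC}_{J_1, J_2}(\mcQ) \;=\; -2\,c(p_r),
\]
where $c(p_r)$ denotes the signed axis-crossing contribution of $p_r$ under the original orientation $\wt j_r \to j_r$. It therefore suffices to prove $c(p_r) = -1$.

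The argument for $c(p_r) = -1$ is topological and takes place in the annular surface $\sigma$ between $C_1 = f_1$ and $C_2 = f_2$. We first claim that $p_r$ has winding number zero around $C_2$: were it to wind even once, concatenating $p_r$ with a short arc of $C_1$ would yield a simple closed curve enclosing $C_2$, forcing every one of the $q \geq 1$ across paths of $\mcQ$ to meet $p_r$ and violating disjointness. Consequently $p_r$ is freely homotopic rel endpoints to one of the two $C_1$-boundary arcs joining $\wt j_r$ and $j_r$. Of these two arcs, exactly the one passing through the unique $C_1$-axis edge (located between terminals $k_1 - 1$ and $0$) cuts the axis; Case 1 asserts that $p_r$ does cut the axis, so $p_r$ must be homotopic to that arc $\gamma$. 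Oriented $\wt j_r \to j_r$, $\gamma$ traverses the $C_1$-axis edge in the counter-clockwise sense, which under the sign convention from \cref{sec:3.1} contributes $-1$. Hence $c(p_r) = -1$, giving the required shift of $+2$.

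The principal obstacle is the no-winding step, which crucially uses planarity of $G$ together with the presence of at least one across path in $\mcQ$; once this topological fact is in hand, the precise value $c(p_r) = -1$ follows directly from the boundary-arc structure and the orientation convention fixed in \cref{sec:3.1}.
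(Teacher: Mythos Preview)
Your proof is correct and follows the same line as the paper's: both isolate the single path $p_r$ whose orientation is reversed by the swap $J_1 \mapsto \wt J_1$ and compute its signed axis-crossing contribution directly. The paper dispatches this in one line by pointing to Figure~3(a), implicitly relying on the fact (already used in the proof of \cref{lemma:axise}) that a same-face path's contribution to the axis-crossing is $\pm 1$ or $0$, determined entirely by whether the axis lies in the even interval of its endpoints and by its orientation. You instead make this topological content explicit: you supply the no-winding argument (via disjointness from an across path, which exists since $q$ is odd and hence $\ge 1$) and then read off the sign from the boundary-arc homotopy. The two arguments are equivalent; yours is simply more self-contained where the paper leans on the figure.
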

\begin{claimproof}
	Observe from \cref{fig:3(a)} that when the sink is on $\alpha$, $p$ cuts the axis $+1$ times, whereas when the sink is on $\beta$, $p$ cuts the axis $-1$ times.. Hence, $\mcQ$ is also compatible with $G(\wt J_1,J_2)$, but with axis-crossing = $2+$ that in $G(J_1,J_2)$ (i.e increases by 2).
\end{claimproof}
We will use the following corollary in the next subsection. 
\begin{corollary}\label{cor:bad}
	For configurations $\mcP,\mcQ$ such that $\bmF[\mcP,\mcQ]\ne 0$, let $\Phi(\mcP)=((\A,\N),(\B,\M))$, $\Phi(\mcQ)=((\wt\A,\wt\N),(\wt\B,\wt\M))$. Then we have $\wt\A\subseteq\A,\wt\B\subseteq\B$.
\end{corollary}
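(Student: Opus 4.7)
The plan is to read Corollary \ref{cor:bad} as an immediate structural consequence of the Cancellation Lemma \ref{lemma:bcancel}. My first step would be to invoke that lemma to conclude that if $\bmF[\mcP,\mcQ]\ne 0$, then the sum \eqref{eq:E4} defining this entry has at least one summand that is not killed by the involution on $(I_1,I_2,J_1,J_2)$ constructed in the proof of that lemma. Since the involution pairs up precisely the summands in which $\mcQ$ carries a bad path, any surviving summand must make $\mcQ$ \emph{good}, meaning every path $p\in\mcQ^{(1)}\cup\mcQ^{(2)}$ has the pivot pair of $\mcP$ associated with its $J_i$-endpoint lying inside the even interval of $p$. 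This is the only content of Lemma \ref{lemma:bcancel} that I need.

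Next, I would read off what goodness forces at each pivot of $\mcQ$. Fix $\wt\sa\in\wt\A$; by the definition of $\wt\Phi(\mcQ)$ there is a path $p\in\mcQ^{(1)}$ with $\alpha_p=\wt\sa-1$ and $\beta_p=\wt\sa$, so the even interval of $p$ is the two-element set $\{\wt\sa-1,\wt\sa\}$. By compatibility of $\mcQ$ with $G(J_1,J_2)$, exactly one of these two endpoints is a sink $j_r\in J_1$, and so goodness of $p$ demands the pivot pair $(\aA_r-1,\aA_r)$ of $j_r$ in $\mcP$ to satisfy $\{\aA_r-1,\aA_r\}\subseteq\{\wt\sa-1,\wt\sa\}$. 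The only way two consecutive terminals can fit inside a two-element interval of consecutive terminals is $\aA_r=\wt\sa$, so $\wt\sa\in\A$. Since $\wt\sa\in\wt\A$ was arbitrary, this yields $\wt\A\subseteq\A$. Running the symmetric argument on $f_2$ delivers $\wt\B\subseteq\B$.

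The one place where I would pause is the wrap-around case $\wt\sa=0$, in which $\alpha_p=k_1-1$ and the path $p$ would cross the $P_{axis}$; here the two-element even interval $\{k_1-1,0\}$ would need to contain a pivot pair of $\mcP$, and so we would need $0\in\A$. But the axis for $\mcF_\mcP$ is positioned precisely so that $0$ is free in $\mcP$ on both faces, which forces $0\notin\A$, so no good summand can include such a $p$. Thus this wraparound case is already absorbed into the Cancellation Lemma and does not require a separate argument. Apart from this bookkeeping, the corollary is a one-line consequence of goodness: each pivot of $\mcQ$ pins down an equal pivot of $\mcP$ at the same location. I do not expect any step here to be an obstacle; the real combinatorial work was already packed into \cref{lemma:bcancel}.
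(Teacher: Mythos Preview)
Your proposal is correct and follows the same idea as the paper: a pivot $\wt\sa\in\wt\A$ gives a length-two path in $\mcQ^{(1)}$ whose even interval $\{\wt\sa-1,\wt\sa\}$ can only contain a pivot pair of $\mcP$ if $\wt\sa\in\A$, so otherwise the path is bad and \cref{lemma:bcancel} forces $\bmF[\mcP,\mcQ]=0$. The paper states this contrapositively (and with a small typo, writing $\A\setminus\wt\A$ where $\wt\A\setminus\A$ is meant), while you phrase it via a surviving ``good'' summand and add an explicit check of the wrap-around $\wt\sa=0$; both are the same argument.
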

\begin{proof}
	 Suppose $\exists z\in\A\setminus\wt\A$. Therefore is a path in $\wt \mcP$ with endpoints in $z-1$ and $z$.
	The even interval of $z-1,z$ does not contain any other terminal. Since $z\notin \A$, or equivalently, $z\notin\aA$, this even interval does not contain any \emph{pivot pair}. Hence such a path is bad by definition. A similar proof follows for $\B$.
	%Therefore, such a path configurations are \emph{bad} by definition.
\end{proof}

\subparagraph*{Cancelling $f_1,f_2$-equivalent Configurations}
Call two \emph{path configurations} $\mcP,\mcQ$ to be $f_1,f_2-$\emph{equivalent} if $\mcPi=\mcQi$ for $i=1,2$ i.e. the set of paths with both endpoints on the same face is the same in $\mcP$ and $\mcQ$.
\begin{lemma}[Cancellation Lemma 2]\label{lemma:faceeq}
	Let $\mcP\ne\mcQ$ be $f_1,f_2-$equivalent configurations. Then $\mcQ$ cancels out in $\mcF_{\mcP}(x)$ i.e. $\mcQ\notin\mcF_{\mcP}(x)$ or equivalently $\bmF[\mcP,\mcQ]=0$. Also, $\bmF[\mcP,\mcP]\ne 0$.
\end{lemma}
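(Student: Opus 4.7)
The plan is to establish both halves of the claim via a sign-reversing involution on the summation indices defining $\bmF[\mcP,\mcQ]$, together with an identification of the unique fixed point that produces the nonzero diagonal.

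First, I would exploit the structural consequence of $f_1,f_2$-equivalence: since $\mcPi=\mcQi$ for $i=1,2$, the pivot data coincides, $\wt\Phi(\mcP)=\wt\Phi(\mcQ)=(\aA,\bB)$. In particular every face path of $\mcQ$ is literally a face path of $\mcP$, so its pivot pair $(\aA_r-1,\aA_r)$ lies in its own even interval and the path is automatically \emph{good} with respect to $\mcP$. Hence \cref{lemma:bcancel} does not a priori eliminate these contributions, and the only free variables in the sum defining $\bmF[\mcP,\mcQ]$ are (a) the orientations of the $q$ cross paths of $\mcQ$ (equivalently, the part of $(J_1,J_2)$ supported on cross-path endpoints), and (b) the auxiliary vectors $I_1,I_2\in\bbZ_q^{\ell_1}\times\bbZ_q^{\ell_2}$. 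Unfolding (E3)/(E5) gives
\[
\bmF[\mcP,\mcQ]=\sum_{(J_1,J_2)\text{ comp. }\mcQ}\sum_{I_1,I_2}(-1)^{\sigma}\bigl[\tau(\vI,\vJ,\vA)\equiv O_{\mcQ}(J_1,J_2)-O_{\mcP}\pmod q\bigr].
\]

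For the off-diagonal case $\mcP\neq\mcQ$, \cref{lemma:Lemma axis} gives $O_{\mcQ}\not\equiv O_{\mcP}\pmod q$. I would construct a sign-reversing involution based on flipping the orientation of a distinguished cross path of $\mcQ$ (say the clockwise-first cross path of $f_1$): the flip toggles one element of $J_1$ and one of $J_2$, altering $\Sigma_{J_1}+\Sigma_{J_2}$ by the parity of the swapped endpoints and shifting $\delta(J_1)+\delta(J_2)$ and $O_{\mcQ}$ by controlled amounts, which can be compensated by a shift in a single component of $I_1$ or $I_2$ to keep the $\tau$-constraint satisfied. Careful bookkeeping, modelled on the proof of \cref{lemma:bcancel}, shows the composite move flips $(-1)^{\sigma}$; thus contributions cancel in pairs and $\bmF[\mcP,\mcQ]=0$.

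For the diagonal $\bmF[\mcP,\mcP]$, the analogous target becomes $O_{\mcQ}(J_1,J_2)-O_{\mcP}=0$; this holds precisely when $(J_1,J_2)$ equals the canonical designation $(J_1^{\mcP},J_2^{\mcP})$ determined by $\mcP$, and otherwise the off-diagonal involution pairs off the non-canonical $(J_1,J_2)$ with opposite signs. The remaining summation over $(I_1,I_2)$ at the canonical $(J_1^{\mcP},J_2^{\mcP})$ has the fixed point $(I_1,I_2)=(\vec 0,\vec 0)$, and evaluating $(-1)^{\sigma}$ there yields a specific sign $\pm 1$, so $\bmF[\mcP,\mcP]\neq 0$.

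The hard part will be verifying that the composite involution (cross-path flip paired with a $\bbZ_q$-shift in $I_1$ or $I_2$) always flips $(-1)^{\sigma}$ by exactly one and respects the $\tau$-constraint; the delicacy is that the shift in $O_{\mcQ}$ under a cross-path flip is $\pm 2\pmod q$, which must be matched by $2\Sigma_{I_1}-2\Sigma_{I_2}+2\delta(J_1)-2\delta(J_2)$, and the parity of the compensating shift interacts with $q\bmod 4$. I expect a short case analysis in $q$ odd, $q\equiv 0\pmod 4$, and $q\equiv 2\pmod 4$ (perhaps rephrased as a discrete Fourier argument on the $(I_1,I_2)$ marginal), combined with the analogue of \cref{claim:ax2} for cross paths, to close all cases and conclude the cancellation as well as the uniqueness of the diagonal fixed point.
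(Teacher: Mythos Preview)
Your involution is aimed at the wrong set of paths. In the summation defining $\mcF_{\mcP}$ we have $J_i\in\binom{K_i}{\ell_i}$ with $\ell_i$ fixed by $\mcP\in\mcM_q$. Since $\mcQ$ is $f_1,f_2$-equivalent to $\mcP$, it has exactly $\ell_1$ face paths on $f_1$ and these consume all of $J_1$ (one sink per face path); hence every cross path of $\mcQ$ has its $f_1$ endpoint in $K_1\setminus J_1$ and its $f_2$ endpoint in $K_2\setminus J_2$. ``Flipping the orientation of a cross path'' therefore does not toggle elements of $J_1,J_2$ while preserving their sizes: it increases $|J_1|$ and $|J_2|$ by one, which moves you to a row with a different $q$ and outside the range of the sum. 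So the proposed involution does not pair terms within $\mcF_{\mcP}$ at all.

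The paper's involution acts on the \emph{face} paths instead. For a term with $\tau\not\equiv 0$, some face path $q_r$ on $f_1$ (or $f_2$) has sink $j_r\in\{\alpha_r,\beta_r\}$ together with an $i_r$ that is not the ``canonical'' value ($i_r\neq q-1$ when $j_r=\alpha_r$, or $i_r\neq 0$ when $j_r=\beta_r$). One then swaps $j_r\leftrightarrow$ the other endpoint and shifts $i_r\to i_r\pm 1$; this keeps $|J_1|=\ell_1$, preserves $\tau$ exactly (because $\delta(\alpha_r)=1,\delta(\beta_r)=0$ for a good path and $0$ is chosen to be a free terminal), and flips $(-1)^\sigma$. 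Your compensating shift in $I_1$ is the right idea, but it must be paired with a face-path endpoint swap, not a cross-path flip.

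Your diagonal analysis is also off. There is no single fixed point: the surviving terms are exactly those where every $r$ has $(j_r,i_r)\in\{(\alpha_r,q-1),(\beta_r,0)\}$ (and similarly on $f_2$). Each of these has $2i_r+2\delta(j_r)\equiv 0\pmod q$ and, crucially because $q$ is odd (so $q-1$ and $0$ share parity), the contribution $i_r+j_r+\delta(j_r)$ to $\sigma$ has the same parity in both choices. Thus all $2^{\ell_1+\ell_2}$ surviving terms carry the same sign and $\bmF[\mcP,\mcP]$ is a nonzero power of $2$, not $\pm 1$. The case split on $q\bmod 4$ you anticipated is a symptom of the wrong setup; with $q$ odd throughout, no such analysis is needed.
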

\begin{proof}
	Let $\mcP,\mcQ$ be such \textit{path configurations}. We will show that $\mcQ$ gets canceled in $\mcF_{\mcP}(x)$. Note that the axis-crossing of $\mcQ$ is not $O_{\mcP}$ (since otherwise $\mcP=\mcQ$ from \cref{lemma:axise}). Say $\mcQ\in \mcM_q(J_1,J_2,\tau+O_{\mcP})$ for some $\tau\ne 0$ ; where $I_1=(i_1,\ldots,i_{l_1})$, $J_1=(j_1,\ldots,j_{l_1}),I_2=(i_1',\ldots,i_{l_2}')$, $J_2=(j_1',\ldots,j_{l_2}')$, and the paths due to $J_1,J_2$ are $(q_1,\ldots,q_{l_1}),(q_1',\ldots,q_{l_2}')$. Say the end points of $q_r$ are $\alpha_r\prec \beta_r$, and that for $q_r'$ are $\alpha_r'\prec \beta_r'$. Note that $\delta(\alpha_r)=1,\delta(\beta_r)=0$ since $\mcQ$ is \textit{good} w.r.t $\mcP$ and therefore $\alpha_r,\beta_r$ encloses the corresponding \emph{pivot pair} $(\aA_r-1,\aA_r)$ and therefore $\alpha_r\in \{0,1,\ldots,\aA_r-1\}$ and $\beta_r\in\{\aA_r,\ldots,k_1-1\}$. 
	
	Since $\tau\ne 0\mod q$, which implies there exists some $q_r$ (or $q_r'$; wlog assume $q_r$) such that either $j_r=\alpha_r$ and $i_r\ne q-1$, or $j_r=\beta_r$ and $i_r\ne 0$ (since otherwise it can be shown that $\tau=0\mod q$). Let $q_r$ be the first such path. Wlog assume $j_r=\alpha_r,i_r\ne q-1$, and set $\wt j_r=\beta_r,\wt i_r=i_r+1$. Set $\wt I_1=(i_1,i_2,\ldots,i_{r-1},\wt i_r,i_{r+1},\ldots,i_{l_1}),\wt J_1=(j_1,j_2,\ldots,j_{r-1},\wt j_r,j_{r+1},\ldots,j_{l_1}),\wt I_2=I_2,\wt J_2=J_2$. Observe that $\delta(\wt j_r)=0,\delta(j_r)=1$, therefore $\delta(\wt J_1)=\delta(J_1)-1$. Thus, we have $\wt \tau=\tau$, which implies that $\mcQ$ also occurs due to $\wt I_1,\wt J_1,\wt I_2,\wt J_2$. Now all we have to show is that it occurs with the opposite sign. Since $k_1-1,0$ do not lie in the even interval of $\wt j_r,j_r$ (as $0$ is a free terminal), $j_r$ and $\wt j_r$ have opposite parity. Further the parity of $\delta(\wt J_1),\delta(J_1)$ is opposite and so is that of $\wt i_r,i_r$ 
	 . Therefore $\wt \sigma$ and $\sigma$ have the opposite parity.
	 
	  We need to show that $\mcP$ appears with a nonzero coefficient in $\mcF_{\mcP}(x)$. Consider a term in $\mcF_{\mcP}(x)$ arising from $I_1=(i_1,\ldots,i_{l_1})$, $J_1=(j_1,\ldots,j_{l_1}),I_2=(i_1',\ldots,i_{l_2}')$, $J_2=(j_1',\ldots,j_{l_2}')$. Each sink/source corresponds to some path of $\mcP$. Let $\mcPo=(p_1,\ldots,p_{l_1}),\mcPt=(p_1',\ldots,p_{l_2}')$ in the order of $J_1,J_2$ i.e. sinks and sources. Say $p_r$ has endpoints $\alpha_r\prec \beta_r$, and $p_r'$ has endpoints $\alpha_r'\prec\beta_r'$. We know that if for all $r$, $j_r\in\{\alpha_r,\beta_r\},j_r'\in\{\alpha_r',\beta_r'\}$ and $\tau=0$, then $\mcP$ is compatible with the current term $\perm(J_1,J_2)$ (from \cref{prop:comp})because the endpoints and axis-crossing match. We know that if for any index $r$, either $j_r=\alpha_r$ and $i_r\ne q-1$, or $j_r=\beta_r$ and $i_r\ne 0$, the term gets canceled in $\mcF_{\mcP}(x)$ (from the above proof). And observe that in the remaining two cases, namely, either $j_r=\alpha_r$ and $i_r=q-1$ or $j_r=\beta_r$ and $i_r=0$, we have $2i_r+2\delta(j_r)=0\mod q$, i.e. these terms have $\tau=0$; also note that $i_r+j_r+\delta(j_r)$ (contribution to $\sigma$) has the same parity, hence these terms appear with the same sign. Therefore $ \mcP\in\mcF_{\mcP}(x)$. Notice that $\bmF[\mcP,\mcP]$ is a power of two. (Note how we used the fact that $q-1\text{ and }0$ have the same parity which is only true when both faces have an odd number of terminals\footnote{This is the fourth place we assume both the faces have an odd number of terminals}).
\end{proof}
\subsection{Invertibility of $\bmM$}\label{sec:inv}
We now show that $\bmM$ is invertible  by showing that $\bmF$ is triangular
with non-zero diagonal and hence is invertible.
For a configuration $\mcP$, define $\Rp^{(1)}(\mcP)\subseteq \A(\mcP)$ to be the set of \emph{rightmost pivots} of $\mcPo$ as follows: $\sa_i\in\Rp^{(1)}(\mcP)$ if for each paths $p$ in $\mcPo$ enclosing $\sa_i$ in its even interval, it is the clockwise rightmost pivot enclosed by $p$. Refer to \cref{fig:4(a)} for an example. Similarly define $\Rp^{(2)}(\mcP)\subseteq \B(\mcB)$ w.r.t $\mcPt$. Here  $\A(\mcP),\B(\mcP)$ and $\N(\mcP),\M(\mcP)$ are from $\Phi(\mcP)$.
\begin{lemma}\label{lemma:rp}
For configurations $\mcP,\mcQ\in\mcM_q$ with a fixed $q$, if $\bmF[\mcP,\mcQ]\ne 0$, then either:
\begin{itemize}
	\item $\A(\mcQ)\subseteq\A(\mcP),\B(\mcQ)\subseteq\B(\mcP)$ with at least one of the inclusions being strict, barring which
	\item $\Rp^{(1)}(\mcQ)\subseteq\Rp^{(1)}(\mcP),\Rp^{(2)}(\mcQ)\subseteq\Rp^{(2)}(\mcP)$ with at least one of the inclusions being strict, barring which
	\item $\forall$ \emph{rightmost pivots} $ \sa_i\in\Rp^{(1)}(\mcQ)$, $\n_i(\mcQ)\le\n_i(\mcP)$ and $\forall \sab_i\in\Rp^{(2)}(\mcQ)$ $\m_i(\mcQ)\le\m_i(\mcP)$.
\end{itemize}
\end{lemma}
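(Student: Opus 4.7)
The plan proceeds by a cascading case analysis aligned with the three bulleted conclusions. By \cref{cor:bad} we unconditionally have $\A(\mcQ)\subseteq\A(\mcP)$ and $\B(\mcQ)\subseteq\B(\mcP)$; if at least one of these inclusions is strict we are in Case~(i) and done. Henceforth assume $\A(\mcQ)=\A(\mcP)=:\A$ and $\B(\mcQ)=\B(\mcP)=:\B$.

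For Case~(ii) I show $\Rp^{(1)}(\mcQ)\subseteq\Rp^{(1)}(\mcP)$ by contrapositive, with the $\Rp^{(2)}$ statement being symmetric. Suppose some pivot $\sa\in\Rp^{(1)}(\mcQ)\setminus\Rp^{(1)}(\mcP)$. Since $\sa$ is a pivot of both $\mcP$ and $\mcQ$, the innermost path $(\sa-1,\sa)$ is present in both $\mcPo$ and $\mcQo$. By the failure of rightmostness in $\mcP$ some outer path $p\in\mcPo$ encloses $\sa$ together with a strictly larger pivot $\sa'\in\A$; by the rightmostness in $\mcQ$ no outer path of $\mcQo$ encloses both $\sa$ and $\sa'$. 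The plan is to construct a sign-reversing involution on the tuples $(I_1,I_2,J_1,J_2)$ contributing to $\bmF[\mcP,\mcQ]$ by swapping the sink/source role between pairs of endpoints inside the interval bounded by $\sa'$, compensating with a unit shift in one coordinate of $I_1$ to preserve the axis-crossing congruence $\tau\equiv O_{\mcQ}-O_{\mcP}\pmod q$. The presence of the pivot $\sa'$ (in $\mcP$'s pivot data governing $\delta$) sandwiched between the swapped endpoints forces $\delta(J_1)$ to toggle, flipping $(-1)^\sigma$, so that the paired tuples cancel and $\bmF[\mcP,\mcQ]=0$, contradicting our hypothesis.

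For Case~(iii), assume also $\Rp^{(i)}(\mcQ)=\Rp^{(i)}(\mcP)$ for $i=1,2$. I must show $\n_i(\mcQ)\le\n_i(\mcP)$ at every rightmost pivot $\sa_i$ on $f_1$, and $\m_i(\mcQ)\le\m_i(\mcP)$ on $f_2$. The plan is again by contrapositive via involution: if $\n_i(\mcQ)>\n_i(\mcP)$, then some terminal $z$ serves as an $\alpha$-endpoint of $\mcQo$ contributing to $\sa_i$ but does not contribute to $\sa_i$ in $\mcP$. The swap now acts on the sink/source assignment at the two endpoints of the $\mcQo$-path through $z$. Rightmostness of $\sa_i$ in both configurations keeps the enclosing even intervals away from any larger pivot, localising the change so that $\delta(J_1)$ toggles once, while a compensating shift in $I_1$ keeps $\tau$ invariant. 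The resulting pairing cancels all contributions to $\bmF[\mcP,\mcQ]$, forcing the desired inequality.

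The main obstacle is the detailed construction of the involutions in Cases~(ii) and~(iii). Although they are modelled on the involutions of \cref{lemma:bcancel} and \cref{lemma:faceeq}, here the swap acts on paths of $\mcQ$ while $\delta$ is defined by $\mcP$'s pivot data, so one must verify that exactly one sign flip is induced and that the compensating shift in $I_1$ remains in $\bbZ_q^{\ell_1}$. Confirming that no additional ``collateral'' sign changes appear, by carefully exploiting the structural hypothesis (rightmost-pivot violation in Case~(ii), or the extra $\alpha$-endpoint in Case~(iii)) together with the non-crossing matching property, is the delicate step.
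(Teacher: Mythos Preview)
Your approach diverges substantially from the paper's, and the proposed involutions have a genuine gap.

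The paper does \emph{not} build any new involution in this lemma. It observes that $\bmF[\mcP,\mcQ]\ne 0$ already forces $\mcQ$ to be a \emph{good} configuration with respect to $\mcP$ by \cref{lemma:bcancel}: every $p\in\mcQo$ whose sink sits in position $r$ of $J_1$ must enclose the pivot pair $(\aA_r-1,\aA_r)$. The three bullets then follow by direct combinatorial and pigeonhole reasoning from goodness. For~(ii), assuming $\sa_i\in\Rp^{(1)}(\mcQ)\setminus\Rp^{(1)}(\mcP)$, the non-rightmostness in $\mcP$ produces a path of $\mcPo$ (corresponding to some $\sa_j$ left of or equal to $\sa_i$) that encloses both $\sa_i$ and $\sa_{i+1}$; a count on how many $\mcQo$-paths must, by goodness, enclose the pivot pair $(\sa_{i+1}-1,\sa_{i+1})$ versus how many terminals are available then forces some $\mcQo$-path to enclose both $\sa_i$ and $\sa_{i+1}$, contradicting rightmostness of $\sa_i$ in $\mcQ$. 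For~(iii), if $\n_i(\mcQ)>\n_i(\mcP)$ at a rightmost pivot $\sa_i$, one of the extra $\alpha$-endpoints of $\mcQo$ lies at a terminal whose position in $\aA(\mcP)$ corresponds to a different pivot $\sa_j$ to the right of $\sa_i$; goodness of $\mcQ$ then forces that $\mcQo$-path to enclose $\sa_j$ as well, again contradicting rightmostness of $\sa_i$.

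Your plan instead seeks fresh sign-reversing involutions in Cases~(ii) and~(iii), and this is where it breaks. First, the involution is never actually defined: ``swapping the sink/source role between pairs of endpoints inside the interval bounded by $\sa'$'' does not say which path's sink is toggled, nor why the chosen sink's index $r$ has pivot $\aA_r$ positioned so that $\delta$ flips exactly once. Second, and more seriously, once badness is excluded every path of $\mcQo$ is good, and for a good path the swap $\alpha_p\leftrightarrow\beta_p$ together with a unit shift in $i_r$ is precisely the move of \cref{lemma:faceeq}. That move \emph{fixes} the tuples with $i_r\in\{0,q-1\}$ (those contributing $2i_r+2\delta(j_r)\equiv 0$) rather than cancelling everything, so it cannot by itself force $\bmF[\mcP,\mcQ]=0$; you would need a separate argument that no such fixed terms survive, and your hypotheses (a rightmost-pivot mismatch, or an $\n_i$-surplus) do not obviously supply one. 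The ``delicate step'' you flag is not a bookkeeping detail but the entire content, and the paper sidesteps it by exploiting goodness combinatorially instead.
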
	
\begin{proof}
	  From \cref{cor:bad}, it is clear that $\A(\mcQ)\subseteq\A(\mcP),\B(\mcQ)\subseteq\B(\mcP)$.	Assume that both are equal.
	
 Let $\A=\A(\mcP)=\A(\mcQ)=\{\sa_1,\ldots,\sa_r\}$ for some $r$. Therefore, for all pivots $\sa_i$ of $\mcP$ and therefore of $\mcQ$, there is a path from $\sa_i-1$ to $\sa_i$ in $\mcQ$. Now we show $\Rp^{(1)}(\mcQ)\subseteq\Rp^{(1)}(\mcP)$. Note that $\mcQ$ must be a good configuriation from \cref{lemma:bcancel}.  Say there is a pivot $\sa_i\in\A(\mcP)\setminus\Rp^{(1)}(\mcP)$ which is a \emph{rightmost pivot} in $\mcQo$. Since $\sa_i$ is not a \emph{rightmost pivot} in $\mcP$, $\exists$ a path $p$ which encloses $\sa_i$ and $\sa_{i+1}$ where $\sa_{i+1}$ is the pivot clockwise right of $\sa_i$. This path corresponds to some $\sa_j$ where $\sa_j$ is clockwise left to $\sa_i$. Since $\sa_i$ is a \emph{rightmost pivot} in $\mcQ$, the paths that correspond to $\{\sa_j,\sa_j+1,\ldots,\sa_i\}$ do not enclose $\sa_{i+1}$ because $\sa_{i+1}$ is to the right of $\sa_i$. But one of these paths must enclose $\{\sa_j,\sa_j+1,\ldots,\sa_i\}$ in their even interval from \emph{good}ness of $\mcQ$. Hence at least one of the $\beta$ endpoints of these paths (say a path $p$) includes a terminal from $\{\sa_{i+1}-2,\ldots,\sa_{i+1}-\n_{i+1}\}$. But $\n_{i+1}$ paths have to enclose the pivot pair $\{\sa_{i+1}-1,\sa_{i+1}\}$ in their even interval (this follows not from the \emph{good}ness of $\mcP$, but from the \emph{good}ness of $\mcQ$). But the number of terminals strictly between $\beta_p$ and $\sa_{i+1}$ is strictly less than $\n_i$. Hence there exists a path that has its $\alpha$ endpoint to the left of $\alpha_p$, i.e. this path encloses both $\sa_j$ and $\sa_{i+1}$, and hence also $\sa_i$. But $\sa_{i+1}$ is rightward of $\sa_i$, thus $\sa_i$ cannot be a rightmost pivot in $\mcQo$ and we reached a contradiction. A similar proof holds for the face $f_2$. Therefore we assume  $\Rp^{(1)}(\mcQ)\subseteq\Rp^{(1)}(\mcP),\Rp^{(2)}(\mcQ)\subseteq\Rp^{(2)}(\mcP)$.	Hence, assume that both are equal.
	
 For the third point, assume that for some terminal $\sa_i\in\A$, we have $\n(\mcQ)>\n(\mcP)$. Therefore some path corresponding to some $\sa_j$ in $\mcP$ contributes to $\n_i(\mcQ)$ i.e corresponds to $\sa_i$ in $\mcQ$ . Here $\sa_j$ is to the clockwise right of $\sa_i$ since otherwise this path would have been \emph{bad}. Since this corresponds to $\sa_j$ in $\mcP$, from \emph{good}ness of $\mcQ$, we know that the path encloses $\sa_i$ and $\sa_j$ in $\mcQ$. Hence, $\sa_i$ is not a rightmost pivot in $\mcQ$ and so in $\mcP$ (since they have same set of righmost pivots). An analogous proof holds for $f_2$. Therefore, $\forall$ \emph{rightmost pivots} $ \sa_i\in\Rp^{(1)}(\mcQ)$, $\n_i(\mcQ)\le\n_i(\mcP)$ and $\forall \sab_i\in\Rp^{(2)}(\mcQ)$ $\m_i(\mcQ)\le\m_i(\mcP)$.
\end{proof}
\begin{lemma}\label{lemma:inv}
	\begin{itemize}
		\item[(i)] For all configurations $\mcP$, $ \bmF[\mcP,\mcP]\ne 0$
		\item[(ii)] For distinct configurations $\mcP_0,\mcP_1,\ldots,\mcP_t$, if $\bmF[\mcP_{i-1},\mcP_{i}]\ne 0\forall i\in[t]$, then $\bmF[\mcP_t,\mcP_0]=0$.
	\end{itemize}
\end{lemma}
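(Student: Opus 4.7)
For part (i), I plan to simply invoke the final paragraph of the proof of \cref{lemma:faceeq}, where it was already shown that $\bmF[\mcP,\mcP]$ evaluates to a nonzero power of two arising from the $\tau=0$ contributions in which each $j_r \in \{\alpha_r,\beta_r\}$ appears with the corresponding extremal value of $i_r$ (all such terms sharing the same sign, hence not cancelling). This disposes of (i) with essentially no new work.

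For part (ii) I plan a contradiction argument. Suppose $\bmF[\mcP_t,\mcP_0]\neq 0$; combined with the hypothesis this yields a directed cycle $\mcP_0\to\mcP_1\to\cdots\to\mcP_t\to\mcP_0$ in the graph whose edges record nonzero entries of $\bmF$. I will iterate \cref{lemma:rp} around the cycle to force successive invariants to coincide across all $\mcP_i$. Applying the first bullet to each edge yields $\A(\mcP_0)\supseteq\A(\mcP_1)\supseteq\cdots\supseteq\A(\mcP_t)\supseteq\A(\mcP_0)$ and the analogous chain for $\B$, so all $\A(\mcP_i)$ coincide and so do all $\B(\mcP_i)$. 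With the $\A$'s and $\B$'s stabilized, every edge now falls in the second or third bullet; applying the second bullet and wrapping around in the same way forces all $\Rp^{(1)}(\mcP_i)$ and all $\Rp^{(2)}(\mcP_i)$ to coincide. Every edge then satisfies the third bullet, and for each $\sa_i$ in the common $\Rp^{(1)}$ we obtain $\n_i(\mcP_j)\le\n_i(\mcP_{j-1})$ on each edge; summing around the loop forces equality, so the depths $\{\n_i\}_{i\in\Rp^{(1)}}$ are constant along the cycle, and similarly $\{\m_i\}_{i\in\Rp^{(2)}}$.

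The remaining step, and the step I expect to be the main obstacle, is a combinatorial rigidity claim: the data $\bigl(\A,\Rp^{(1)},\{\n_i\}_{i\in\Rp^{(1)}}\bigr)$ together with the fixed parameter $\ell_1$ uniquely determines the entire sequence $\N$ (and hence $\mcPo$), with an analogous statement for $\mcPt$. I plan to prove this by induction on $|\A\setminus\Rp^{(1)}|$: in the base case $\A=\Rp^{(1)}$ all depths are rightmost and given directly. In the inductive step I isolate the outermost enclosing path around a nested group of pivots ending at some $\sa_i\in\Rp^{(1)}$; the "no free vertex in the even interval" property from \cref{sec:3.1} forces this path's $\alpha$ to be $\sa_j-\n_j$ at the leftmost pivot $\sa_j$ of the group, and its $\beta$ to be the unique terminal making the interval exhaust precisely the pivots and $\alpha$'s of the group. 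Peeling this path off yields a strictly smaller sub-configuration with fewer non-rightmost pivots and inherited invariants, to which the induction applies and pins down the remaining depths. Granted this rigidity, $\Phi(\mcP_0)=\Phi(\mcP_t)$, and injectivity of $\Phi$ on $\mcPi$ then gives $\mcP_0^{(i)}=\mcP_t^{(i)}$ for $i=1,2$, making $\mcP_0$ and $\mcP_t$ $f_1,f_2$-equivalent. Since they are distinct, \cref{lemma:faceeq} yields $\bmF[\mcP_t,\mcP_0]=0$, contradicting our assumption and completing (ii). The rigidity step is delicate precisely because \cref{lemma:rp} provides no direct control on non-rightmost depths, so pinning them down must rely entirely on the nested non-crossing structure.
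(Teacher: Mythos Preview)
Your argument tracks the paper's through the point where you cycle \cref{lemma:rp} around the putative loop to force $\A$, $\B$, $\Rp^{(1)}$, $\Rp^{(2)}$ and the rightmost depths to stabilise. One small omission: \cref{lemma:rp} is stated only for $\mcP,\mcQ\in\mcM_q$ with the \emph{same} $q$, so before invoking it you must show $q_0=q_1=\cdots=q_t$. The paper handles this via \cref{prop:q}: $\bmF[\mcP_{i-1},\mcP_i]\neq 0$ means $\mcP_i$ is compatible with some $G(J_1,J_2)$ sized by $q_{i-1}$, whence $q_i\ge q_{i-1}$; closing the loop forces equality.

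The substantive gap is in your rigidity step. You propose to peel the \emph{outermost} arc of a tree and locate its left endpoint as $\sa_j-\n_j$ at the leftmost pivot $\sa_j$ of that tree. But the leftmost pivot of a multi-pivot tree is never a rightmost pivot, so $\n_j$ is exactly one of the unknown depths you are trying to determine; invoking it here is circular. Nothing in the data $(\A,\Rp^{(1)},\{\n_i\}_{i\in\Rp^{(1)}},\ell_1)$ directly pins down the left boundary of a tree, and the ``no free vertex'' constraint cannot help until you already know which terminals are free --- which itself depends on the full $\N$.

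The paper avoids this by peeling in the opposite direction. For each rightmost pivot $\sa_i$ the $\n_i$ paths corresponding to it are forced to be the nested family $(\sa_i-1,\sa_i),\ldots,(\sa_i-\n_i,\sa_i+\n_i-1)$, and crucially this uses only the \emph{known} depth $\n_i$. Since this family is identical in every $\mcP_j$ along the cycle, the block $\{\sa_i-\n_i,\ldots,\sa_i+\n_i-1\}$ can be deleted uniformly from all of them, yielding a strictly smaller instance to which the entire argument (including a fresh pass through \cref{lemma:rp}) is re-applied. Thus the paper never proves a one-shot rigidity statement; it establishes $\mcP_0^{(i)}=\cdots=\mcP_t^{(i)}$ incrementally, re-deriving the invariants at each scale rather than trying to reconstruct the non-rightmost depths from the initial data alone.
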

\begin{proof}
The proof of $(i)$ is direct from \cref{lemma:faceeq}. Next we prove (ii). Assume to the contrary that $\bmF[\mcP_t,\mcP_0]\ne0$. Let $\mcP_i\in\mcM_{q_i}$ for some $q_i$'s. From \cref{prop:q}, $q_0\le q_1\le\ldots\le q_t\le q_0$. Hence, all the $q_i$'s are equal. From \cref{lemma:faceeq}, we know that $\mcPi\ne \mcQi$ for at least one of the faces $f_1$ or $f_2$,  since otherwise they are $f_1,f_2$-equivalent. Also, from \cref{lemma:rp} we observe that $\A(\mcP_0)\subseteq\A(\mcP_t)\subseteq\ldots\A(\mcP_0),\B(\mcP_0)\subseteq\B(\mcP_t)\subseteq\ldots\B(\mcP_0),\Rp^{(1)}(\mcP_0)\subseteq\Rp^{(1)}(\mcP_t)\subseteq\ldots\Rp^{(1)}(\mcP_0),\Rp^{(2)}(\mcP_0)\subseteq\Rp^{(2)}(\mcP_t)\subseteq\ldots\Rp^{(2)}(\mcP_0)$. Therefore, all of the inclusions must be equality. Let the corresponding sets be $\A,\B,\Rp^{(1)},\Rp^{(2)}$ respectively. We also know that for all rightmost pivots $\sa_i\in\Rp^{(1)}$, $\n_i(\mcP_0)\le\n_i(\mcP_t)\le\ldots\n_i(\mcP_0)$ forcing all of them to be equality. Therefore all the configurations $\mcP_0,\ldots,\mcP_t$ include paths with endpoints in $\{(\sa_i-1,\sa_i),\ldots,(\sa_i-\n_i,\sa_i+\n_i-1)\}$ for all $\sa_i\in \Rp^{(1)}$. Observe that this means, in each configuration, there are exactly $\n_i$ paths corresponding to $\sa_i$ for the rightmost pivots. Hence, we can remove the terminals $\{\sa_i-n_i,\ldots,\sa_i-1,\sa_i,\ldots,\sa_i+\n_i-1\}$ from the chain of configurations, and apply structural induction -- In each inductive step the number of terminals, and the number of paths with both endpoints on $f_1$ go down. We are forced to get a contradiction at some point since, if in each inductive step, we remove the same set of terminals, then $\mcPo_0,\mcPo_1,\ldots,\mcPo_t$ remain equal after the removal. The same argument can be applied to the face $f_2$. Thus, $\mcPi_0=\ldots=\mcPi_t$ for $i=1,2$. But then from \cref{lemma:faceeq}, we know that $\mcP_0=\mcP_1=\ldots=\mcP_t$. for at least one of the faces $f_1$ or $f_2$,  since otherwise they are $f_1,f_2$-equivalent. Hence, we reach a contradiction. 
	%	Now we prove $(i)$.
\end{proof}
\begin{corollary}
	$\bmF$ is invertible.
\end{corollary}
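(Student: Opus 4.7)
The plan is to read Lemma~\ref{lemma:inv} as a statement about a directed graph on configurations and deduce triangularizability of $\bmF$ from it. Concretely, I would introduce the directed graph $D$ whose vertices are the path configurations (indexing both rows and columns of $\bmF$), with an arc $\mcP \to \mcQ$ whenever $\mcP \ne \mcQ$ and $\bmF[\mcP,\mcQ] \ne 0$. Part~(ii) of Lemma~\ref{lemma:inv} is then precisely the statement that $D$ contains no directed cycle: a hypothetical closed walk $\mcP_0 \to \mcP_1 \to \cdots \to \mcP_t \to \mcP_0$ on distinct vertices would satisfy $\bmF[\mcP_{i-1},\mcP_i]\ne 0$ for every $i\in[t]$, but then the lemma would force $\bmF[\mcP_t,\mcP_0]=0$, contradicting the existence of the closing arc.

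Next, I would pick any topological ordering $\pi$ of the vertices of the finite DAG $D$, and let $\bmF'$ be the square matrix obtained from $\bmF$ by simultaneously permuting rows and columns according to $\pi$. Whenever $\mcP$ precedes $\mcQ$ in $\pi$, there is no arc $\mcQ\to\mcP$ in $D$, so $\bmF[\mcQ,\mcP]=0$; hence $\bmF'$ is upper triangular. Its diagonal entries are exactly the $\bmF[\mcP,\mcP]$, all of which are nonzero by part~(i) of Lemma~\ref{lemma:inv}. Therefore $\det(\bmF')=\prod_{\mcP}\bmF[\mcP,\mcP]\ne 0$, and since simultaneous row and column permutation affects the determinant only by a sign, $\det(\bmF)\ne 0$, which establishes the corollary.

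There is no serious obstacle remaining: Lemma~\ref{lemma:inv} has absorbed all the combinatorial content (the cancellation involutions of \cref{sec:cancel} together with the rightmost-pivot structural analysis of \cref{lemma:rp}), and the remaining ingredients---topological sort of a finite DAG and the determinant formula for a triangular matrix---are standard. The same conclusion, combined with the fact that $\bmM$ is square (\cref{lem:c1}) and $\bmF=\bmL\cdot\bmM$, then yields the invertibility of $\bmM$ that Algorithm~\ref{algo:1} and the proof of \cref{theorem:1} rely on, since a factorization of an invertible square matrix through a square matrix forces the latter to be invertible as well.
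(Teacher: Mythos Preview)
Your proposal is correct and follows essentially the same approach as the paper: build the directed graph on configurations with an arc for each nonzero off-diagonal entry of $\bmF$, invoke Lemma~\ref{lemma:inv}(ii) to conclude it is a DAG, topologically sort to make $\bmF$ triangular, and use Lemma~\ref{lemma:inv}(i) for the nonzero diagonal. Your extra remarks on the determinant and on deducing invertibility of $\bmM$ from $\bmF=\bmL\cdot\bmM$ are fine elaborations but add nothing new to the paper's argument.
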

\begin{proof}
	Consider the directed simple graph $\mcG$ where the vertices are path configurations. $(\mcP,\mcQ)$ is an edge in $\mcG$ if $\bmF[\mcP,\mcQ]\ne0$. From \cref{lemma:inv}(ii) we know that $\mcG$ is a directed acyclic graph. Order the rows and columns of $\bmF$ by a topological sort on the vertices of $\mcG$. Hence, it is clear that $\bmF$ is triangular and \cref{lemma:inv}(i) implies that the diagonal elements are non zero. Hence, $\bmF$ is invertible.
\end{proof}
%Hence, we have proven that $\bmM$ is invertible.

\section{Shortest $(A+B,q)$ paths}
\label{sec:abpaths}
In this section, we solve $(A+B,q)$-$\SDPP$. Let $A$ and $B$ be a set of terminals with $|A|=k_1,|B|=k_2$. The goal is to find the shortest disjoint paths where exactly $q$ paths have one endpoint in $A$ and one in $B$, and the rest of the paths are among $A$ terminals or $B$ terminals. Setting $q=0$ gives us the shortest $A+B$ paths as defined in \cite{HiraiNamba}. We always have $k_1\equiv k_2\equiv q\mod 2$ and $q\le \min\{k_1,k_2\}$. Wlog, assume $k_1\le k_2$. Let $\mcM_t$ be the set of path configurations with exactly $t$ paths across $A$ and $B$. And let $\mcH_t(x)=\sum\limits_{\mcP\in\mcM_t}h_{\mcP}(x)$.

%\begin{definition}
    Let $J_1\in \binom A {t_1}, J_2\in \binom B {t_2} $ and\footnote{This is because $\#$-sinks=$\#$-sources.} $t_1+k_2-t_2=t_2+k_1-t_1$.  Define $G(J_1,J_2)$ to be the preprocessing step on $G$ such that:%\begin{itemize}
        (1) $J_1$ is the set of sinks on $A$ and $A\setminus J_1$ is the set of sources on $A$.
        (2) $J_2$ is the set of sources on $B$ and $B\setminus J_2$ is the set of sinks on $B$.
    %\end{itemize} Also, $\perm(J_1,J_2)$ is defined as $\perm(A_{G(J_1,J_2)})$.
%\end{definition}
\begin{lemma}\label{lemma:ab}
   $\forall t\sum_{J_1\in\binom A {t_1},J_2\in\binom B {t_2} }\perm(J_1,J_2)=\sum_{i\ge t}2^{(k-2i)/2}\mcH_i(x)\text{ for $t_i=\frac {k_i-t}2$ $i=1,2$.}$
   % \[2^{ (k_2-k_1)/2}\mcH_{k_1}(x)=\sum\limits_{J_1=\phi,J_2\in\binom B {t_2} }P(J_1;J_2) \text{ where $t_2=(k_2-k_1)/2$.} \] 
    %\[2^{(k-2t)/2}\mcH_{t}(x)=\sum\limits_{J_1=\binom A {t_1},J_2\in\binom B {t_2} }P(J_1;J_2) -   \text{ where $t_1=(k_1-t)/2$ and $t_2=(k_2-t)/2$.}\]
\end{lemma}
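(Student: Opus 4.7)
The plan is a double-counting argument. Starting from \cref{prop:comp}, which expresses $\perm(J_1,J_2)$ as $\sum_{\mcP \text{ compatible with } G(J_1,J_2)} h_{\mcP}(x)$, I would swap the order of summation:
\[
\sum_{J_1,J_2} \perm(J_1, J_2) \;=\; \sum_{\mcP} h_{\mcP}(x) \cdot N(\mcP),
\]
where $N(\mcP) = |\{(J_1, J_2) : J_1 \in \binom{A}{t_1},\, J_2 \in \binom{B}{t_2},\, \mcP \text{ compatible with } G(J_1,J_2)\}|$. Because $N(\mcP)$ will turn out to depend on $\mcP$ only through its number $i$ of cross matchings, grouping by $i$ and using $\mcH_i(x) = \sum_{\mcP \in \mcM_i} h_{\mcP}(x)$ collapses the right-hand side into a sum of the claimed shape $\sum_i c_i \,\mcH_i(x)$.

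The heart of the proof is computing $N(\mcP)$ for a configuration with $i$ cross matchings. The matched edges split into $(k_1-i)/2$ bound $A$-pairs, $(k_2-i)/2$ bound $B$-pairs, and $i$ cross pairs. Compatibility makes each bound $A$-pair contribute an independent binary ``which endpoint is the sink'' choice (joint factor $2^{(k_1-i)/2}$) and each bound $B$-pair a similar ``which endpoint is the source'' choice (joint factor $2^{(k_2-i)/2}$). For each cross pair $\{a,b\}$ with $a\in A$, $b\in B$, compatibility forces either $a \in J_1$ together with $b \in J_2$ (call this ``in''), or $a \notin J_1$ together with $b \notin J_2$ (``out''). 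Writing $j$ for the number of ``in'' cross pairs gives $|J_1| = (k_1-i)/2 + j$ and $|J_2| = (k_2-i)/2 + j$; imposing $|J_1|=t_1=(k_1-t)/2$ and $|J_2|=t_2=(k_2-t)/2$ forces the \emph{single} equation $j=(i-t)/2$, which is feasible precisely when $i \ge t$ and $i \equiv t \pmod{2}$. The number of ways to designate which $j$ of the $i$ cross pairs are ``in'' contributes a factor $\binom{i}{(i-t)/2}$, so $N(\mcP) = 2^{(k-2i)/2}\binom{i}{(i-t)/2}$.

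The main subtlety is the coupling between $|J_1|$ and $|J_2|$ through the cross pairs: flipping a cross pair between ``in'' and ``out'' shifts both cardinalities by the same amount, so the two size constraints reduce to a single equation in $j$ rather than two independent ones. Summing $N(\mcP)$ over $\mcP$ grouped by $i$ then yields the identity; the bookkeeping of the parity constraint $i \equiv t \pmod 2$ and of the boundary case $i = t$ (where the binomial equals $1$ and recovers the pure $2^{(k-2t)/2}\mcH_t(x)$ term) is what establishes the triangular structure in $(t,i)$ that is needed downstream to invert the resulting linear system in the spirit of \cref{sec:proof}. Any remaining gap between this count and the displayed right-hand side must be explained by how the binomial factor $\binom{i}{(i-t)/2}$ is either absorbed into $\mcH_i(x)$ or suppressed in the statement; this is the one place where I would read the author's version carefully to match conventions.
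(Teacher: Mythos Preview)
Your approach is exactly the paper's: double-count via \cref{prop:comp}, swap sums, and compute $N(\mcP)$ by independently orienting the within-$A$ pairs, the within-$B$ pairs, and the cross pairs. Your factor $2^{(k_1-i)/2}\cdot 2^{(k_2-i)/2}=2^{(k-2i)/2}$ matches the paper verbatim.

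Where you diverge is precisely where you flagged a concern, and your suspicion is justified. The paper's proof asserts that ``for the paths across $A$ and $B$, the sinks are always on $B$,'' i.e.\ every cross pair is forced into your ``out'' state. That assertion is not correct: a cross pair $\{a,b\}$ with $a\in A$, $b\in B$ is compatible both when $a\notin J_1,\,b\notin J_2$ and when $a\in J_1,\,b\in J_2$, exactly as you wrote. Your count $|J_1|=(k_1-i)/2+j$ then forces $j=(i-t)/2$, and the missing factor $\binom{i}{(i-t)/2}$ is genuine. (Indeed, if cross paths really had no freedom, then $|J_1|=(k_1-i)/2$ would force $i=t$ and no terms with $i>t$ could appear at all, contradicting the paper's own statement.) So the displayed identity in the lemma is off by these binomial coefficients; the correct right-hand side is $\sum_{i\ge t}2^{(k-2i)/2}\binom{i}{(i-t)/2}\mcH_i(x)$.

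This does not damage the algorithm in \cref{algo:2}: at $i=t$ the binomial is $1$, so the diagonal coefficient of $\mcH_t$ is still $2^{(k-2t)/2}$, the system in $t$ remains upper triangular, and the inductive extraction of $2^{(k-2t)/2}\mcH_t\bmod 2^{k+1}$ goes through unchanged with the corrected off-diagonal coefficients. Your write-up is the right one.
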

\begin{proof}
   Let $\mcP\in \mcM_i$ for some\footnote{Notice, $i\equiv k_1 \equiv k_2\mod 2$. } $i\ge t$.
For each path in $\mcP$ with both endpoints on $A$, we can vary the sink on either endpoint leading to $2^{(k_1-i)/2}$ ways. Similarly, we can vary the endpoints of the sources for the paths with both endpoints on $B$ leading to $2^{(k_2-i)/2}$. And for the paths across $A$ and $B$, the sinks are always on $B$. Therefore we get $2^{(k_1-i)/2}\cdot 2^{(k_2-i)/2}=2^{(k-2i)/2}$ choices of $J_1\in\binom A {t_1},J_2\in\binom B {t_2}$ that lead to $\mcH_i(x)$. Since there are $t_1=(k_1-t)/2$ sinks in $A$, the number of paths with both terminals in $A$ cannot be more than $t_1$. Hence the number of paths across $A$ and $B$ is at most $t$. Thus, for $i<t$, $\mcM_i$ type \textit{path configurations} do not occur in the left sum since they are not compatible with $\G(J_1,J_2)$.
\end{proof}
\begin{algorithm}[t]
	\caption{$(A+B,q)$-$\SDPP$}\label{algo:2}
	\begin{algorithmic}[1]
		\scriptsize
		
		\item[1.] Give random weights $r_e \in [4n^2]$ to the lower order bits of the original graph $G$; i.e. edge $e$ gets weight $4w_en^2+r_e$ where $r_e$ are the random weights.
		\item[2.] $\forall t\ge q,t_i=(k_i-t)/2$ and $J_i\in \binom{K_i}{t_i}$, compute $\perm(J_1,J_2)\mod 2^{k+1}$ using \cref{lemma:dj}.
		\item[3.] Inductively compute $2^{(k-2t)}\mcH_t(x)\mod 2^{k+1} \forall t\ge q$.
		\item[4.] The least degree monomial $x^w$ of $2^{(k-2q)}\mcH_t(x)\mod 2^{k+1}$ corresponds to the unique shortest $(A+B,q)$ path instance under isolation.
		\item[5.] Perform standard decision to search in parallel. See for example \cite{DIKM}.
		
	\end{algorithmic}
\end{algorithm}
\begin{proof}[Proof of \cref{theorem:2}]
	We prove the correctness of \cref{algo:2}. Observe from \cref{lemma:ab} that we can inductively compute $2^{(k-2t)}\mcH_t(x)\mod 2^{k+1} \forall t\ge q$. This is because \cref{lemma:ab} can be re-written as $2^{(k-2t)/2}\mcH_t(x)=\sum_{J_1\in\binom A {t_1},J_2\in\binom B {t_2} }\perm(J_1,J_2)-\sum_{i>t}2^{(k-2i)/2}\mcH_{i}(x)$ where $t_i=\frac {k_i-t}2$ $\forall t$. The probability calculations remain similar to proof of \cref{theorem:1}.
\end{proof}

\section{Conclusion}
\label{sec:concl}
We give an algorithm for the Odd Two-face $k$-$\SDPP$  that is in $\RNC$.
We would like to point out that the oddness requirement is crucial for our
techniques as we have indicated in the proofs. The first and most important
open question is eliminating this condition -- we believe that a new idea
is required to accomplish this.
We also parallelise the shortest $(A+B)$-$\SDPP$ introduced by Hirai and Namba
\cite{HiraiNamba} by reducing it to inverting a matrix as well -- in fact we
solve a slighly generalised version of the original problem. This sidesteps
the attempts to parallelise the Hafnian modulo a power of $2$ and that still
remains open from \cite{DJ}.

%\pagebreak
\bibliographystyle{plainurl}% the mandatory bibstyle
\bibliography{main}

\begin{thebibliography}{10}

\bibitem{AKW}
Alok Aggarwal, Jon~M. Kleinberg, and David~P. Williamson.
\newblock Node-disjoint paths on the mesh and a new trade-off in {VLSI} layout.
\newblock {\em {SIAM} J. Comput.}, 29(4):1321--1333, 2000.
\newblock \href {https://doi.org/10.1137/S0097539796312733}
  {\path{doi:10.1137/S0097539796312733}}.

\bibitem{AroraB}
Sanjeev Arora and Boaz Barak.
\newblock {\em Computational Complexity - {A} Modern Approach}.
\newblock Cambridge University Press, 2009.
\newblock URL:
  \url{http://www.cambridge.org/catalogue/catalogue.asp?isbn=9780521424264}.

\bibitem{BH}
Andreas Bj{\"{o}}rklund and Thore Husfeldt.
\newblock Shortest two disjoint paths in polynomial time.
\newblock {\em {SIAM} J. Comput.}, 48(6):1698--1710, 2019.
\newblock \href {https://doi.org/10.1137/18M1223034}
  {\path{doi:10.1137/18M1223034}}.

\bibitem{Borradaile}
Glencora Borradaile, Amir Nayyeri, and Farzad Zafarani.
\newblock Towards single face shortest vertex-disjoint paths in undirected
  planar graphs.
\newblock In {\em Algorithms - {ESA} 2015 - 23rd Annual European Symposium,
  Patras, Greece, September 14-16, 2015, Proceedings}, volume 9294 of {\em
  Lecture Notes in Computer Science}, pages 227--238. Springer, 2015.
\newblock \href {https://doi.org/10.1007/978-3-662-48350-3\_20}
  {\path{doi:10.1007/978-3-662-48350-3\_20}}.

\bibitem{BKR}
Mark Braverman, Raghav Kulkarni, and Sambuddha Roy.
\newblock Space-efficient counting in graphs on surfaces.
\newblock {\em Comput. Complex.}, 18(4):601--649, 2009.
\newblock URL: \url{https://doi.org/10.1007/s00037-009-0266-4}, \href
  {https://doi.org/10.1007/S00037-009-0266-4}
  {\path{doi:10.1007/S00037-009-0266-4}}.

\bibitem{COO}
Kyungjin Cho, Eunjin Oh, and Seunghyeok Oh.
\newblock Parameterized algorithm for the disjoint path problem on planar
  graphs: Exponential in \emph{k}\({}^{\mbox{2}}\) and linear in \emph{n}.
\newblock In {\em Proceedings of the 2023 {ACM-SIAM} Symposium on Discrete
  Algorithms, {SODA} 2023, Florence, Italy, January 22-25, 2023}, pages
  3734--3758. {SIAM}, 2023.
\newblock URL: \url{https://doi.org/10.1137/1.9781611977554.ch144}, \href
  {https://doi.org/10.1137/1.9781611977554.CH144}
  {\path{doi:10.1137/1.9781611977554.CH144}}.

\bibitem{CKN}
Julia Chuzhoy, David H.~K. Kim, and Rachit Nimavat.
\newblock Improved approximation for node-disjoint paths in grids with sources
  on the boundary.
\newblock In {\em 45th International Colloquium on Automata, Languages, and
  Programming, {ICALP} 2018, July 9-13, 2018, Prague, Czech Republic}, volume
  107 of {\em LIPIcs}, pages 38:1--38:14. Schloss Dagstuhl - Leibniz-Zentrum
  f{\"{u}}r Informatik, 2018.
\newblock URL: \url{https://doi.org/10.4230/LIPIcs.ICALP.2018.38}, \href
  {https://doi.org/10.4230/LIPICS.ICALP.2018.38}
  {\path{doi:10.4230/LIPICS.ICALP.2018.38}}.

\bibitem{CKN2}
Julia Chuzhoy, David H.~K. Kim, and Rachit Nimavat.
\newblock New hardness results for routing on disjoint paths.
\newblock {\em {SIAM} J. Comput.}, 51(2):17--189, 2022.
\newblock URL: \url{https://doi.org/10.1137/17m1146580}, \href
  {https://doi.org/10.1137/17M1146580} {\path{doi:10.1137/17M1146580}}.

\bibitem{DIKM}
Samir Datta, Siddharth Iyer, Raghav Kulkarni, and Anish Mukherjee.
\newblock Shortest k-disjoint paths via determinants.
\newblock In {\em 38th {IARCS} Annual Conference on Foundations of Software
  Technology and Theoretical Computer Science, {FSTTCS} 2018, December 11-13,
  2018, Ahmedabad, India}, volume 122 of {\em LIPIcs}, pages 19:1--19:21.
  Schloss Dagstuhl - Leibniz-Zentrum f{\"{u}}r Informatik, 2018.
\newblock URL: \url{https://doi.org/10.4230/LIPIcs.FSTTCS.2018.19}, \href
  {https://doi.org/10.4230/LIPICS.FSTTCS.2018.19}
  {\path{doi:10.4230/LIPICS.FSTTCS.2018.19}}.

\bibitem{DJ}
Samir Datta and Kishlaya Jaiswal.
\newblock Parallel polynomial permanent mod powers of 2 and shortest disjoint
  cycles.
\newblock In {\em 46th International Symposium on Mathematical Foundations of
  Computer Science, {MFCS} 2021, August 23-27, 2021, Tallinn, Estonia}, volume
  202 of {\em LIPIcs}, pages 36:1--36:22. Schloss Dagstuhl - Leibniz-Zentrum
  f{\"{u}}r Informatik, 2021.
\newblock URL: \url{https://doi.org/10.4230/LIPIcs.MFCS.2021.36}, \href
  {https://doi.org/10.4230/LIPICS.MFCS.2021.36}
  {\path{doi:10.4230/LIPICS.MFCS.2021.36}}.

\bibitem{deVerdierSchrijver}
{\'{E}}ric~Colin de~Verdi{\`{e}}re and Alexander Schrijver.
\newblock Shortest vertex-disjoint two-face paths in planar graphs.
\newblock {\em {ACM} Trans. Algorithms}, 7(2):19:1--19:12, 2011.
\newblock \href {https://doi.org/10.1145/1921659.1921665}
  {\path{doi:10.1145/1921659.1921665}}.

\bibitem{Diestel}
Reinhard Diestel.
\newblock {\em Graph Theory, 4th Edition}, volume 173 of {\em Graduate texts in
  mathematics}.
\newblock Springer, 2012.

\bibitem{FHW}
Steven Fortune, John~E. Hopcroft, and James Wyllie.
\newblock The directed subgraph homeomorphism problem.
\newblock {\em Theor. Comput. Sci.}, 10:111--121, 1980.
\newblock \href {https://doi.org/10.1016/0304-3975(80)90009-2}
  {\path{doi:10.1016/0304-3975(80)90009-2}}.

\bibitem{HiraiNamba}
Hiroshi Hirai and Hiroyuki Namba.
\newblock Shortest (a+b)-path packing via hafnian.
\newblock {\em Algorithmica}, 80(8):2478--2491, 2018.
\newblock URL: \url{https://doi.org/10.1007/s00453-017-0334-0}, \href
  {https://doi.org/10.1007/S00453-017-0334-0}
  {\path{doi:10.1007/S00453-017-0334-0}}.

\bibitem{HoffmanK}
Kenneth Hoffman and Ray~A. Kunze.
\newblock {\em {Linear Algebra}}.
\newblock PHI Learning, second edition, 2004.
\newblock URL: \url{http://www.worldcat.org/isbn/8120302702}.

\bibitem{Karp}
Richard~M. Karp.
\newblock On the computational complexity of combinatorial problems.
\newblock {\em Networks}, 5(4):45--68, 1975.
\newblock URL: \url{https://doi.org/10.1002/net.1975.5.1.45}, \href
  {https://doi.org/10.1002/NET.1975.5.1.45}
  {\path{doi:10.1002/NET.1975.5.1.45}}.

\bibitem{KKR}
Ken{-}ichi Kawarabayashi, Yusuke Kobayashi, and Bruce~A. Reed.
\newblock The disjoint paths problem in quadratic time.
\newblock {\em J. Comb. Theory {B}}, 102(2):424--435, 2012.
\newblock URL: \url{https://doi.org/10.1016/j.jctb.2011.07.004}, \href
  {https://doi.org/10.1016/J.JCTB.2011.07.004}
  {\path{doi:10.1016/J.JCTB.2011.07.004}}.

\bibitem{KobayashiSommer}
Yusuke Kobayashi and Christian Sommer.
\newblock On shortest disjoint paths in planar graphs.
\newblock {\em Discret. Optim.}, 7(4):234--245, 2010.
\newblock URL: \url{https://doi.org/10.1016/j.disopt.2010.05.002}, \href
  {https://doi.org/10.1016/J.DISOPT.2010.05.002}
  {\path{doi:10.1016/J.DISOPT.2010.05.002}}.

\bibitem{KT}
Yusuke Kobayashi and Tatsuya Terao.
\newblock One-face shortest disjoint paths with a deviation terminal.
\newblock In {\em 33rd International Symposium on Algorithms and Computation,
  {ISAAC} 2022, December 19-21, 2022, Seoul, Korea}, pages 47:1--47:15, 2022.
\newblock URL: \url{https://doi.org/10.4230/LIPIcs.ISAAC.2022.47}, \href
  {https://doi.org/10.4230/LIPICS.ISAAC.2022.47}
  {\path{doi:10.4230/LIPICS.ISAAC.2022.47}}.

\bibitem{Lochet}
William Lochet.
\newblock A polynomial time algorithm for the \emph{k}-disjoint shortest paths
  problem.
\newblock In {\em Proceedings of the 2021 {ACM-SIAM} Symposium on Discrete
  Algorithms, {SODA} 2021, Virtual Conference, January 10 - 13, 2021}, pages
  169--178, 2021.
\newblock \href {https://doi.org/10.1137/1.9781611976465.12}
  {\path{doi:10.1137/1.9781611976465.12}}.

\bibitem{LMPSZ}
Daniel Lokshtanov, Pranabendu Misra, Michal Pilipczuk, Saket Saurabh, and
  Meirav Zehavi.
\newblock An exponential time parameterized algorithm for planar disjoint
  paths.
\newblock In {\em Proceedings of the 52nd Annual {ACM} {SIGACT} Symposium on
  Theory of Computing, {STOC} 2020, Chicago, IL, USA, June 22-26, 2020}, pages
  1307--1316. {ACM}, 2020.
\newblock \href {https://doi.org/10.1145/3357713.3384250}
  {\path{doi:10.1145/3357713.3384250}}.

\bibitem{MVV}
Ketan Mulmuley, Umesh~V. Vazirani, and Vijay~V. Vazirani.
\newblock Matching is as easy as matrix inversion.
\newblock {\em Comb.}, 7(1):105--113, 1987.
\newblock \href {https://doi.org/10.1007/BF02579206}
  {\path{doi:10.1007/BF02579206}}.

\bibitem{ORS}
Richard~G. Ogier, Vladislav Rutenburg, and Nachum Shacham.
\newblock Distributed algorithms for computing shortest pairs of disjoint
  paths.
\newblock {\em {IEEE} Trans. Inf. Theory}, 39(2):443--455, 1993.
\newblock \href {https://doi.org/10.1109/18.212275}
  {\path{doi:10.1109/18.212275}}.

\bibitem{Pilipczuk}
Marcin Pilipczuk.
\newblock Planar directed k-vertex-disjoint paths problem.
\newblock In {\em Encyclopedia of Algorithms}, pages 1567--1570. 2016.
\newblock \href {https://doi.org/10.1007/978-1-4939-2864-4\_693}
  {\path{doi:10.1007/978-1-4939-2864-4\_693}}.

\bibitem{RSvi}
Neil Robertson and Paul~D. Seymour.
\newblock Graph minors. {VI.} disjoint paths across a disc.
\newblock {\em J. Comb. Theory {B}}, 41(1):115--138, 1986.
\newblock \href {https://doi.org/10.1016/0095-8956(86)90031-6}
  {\path{doi:10.1016/0095-8956(86)90031-6}}.

\bibitem{RSXIII}
Neil Robertson and Paul~D. Seymour.
\newblock Graph minors .xiii. the disjoint paths problem.
\newblock {\em J. Comb. Theory {B}}, 63(1):65--110, 1995.
\newblock URL: \url{https://doi.org/10.1006/jctb.1995.1006}, \href
  {https://doi.org/10.1006/JCTB.1995.1006} {\path{doi:10.1006/JCTB.1995.1006}}.

\bibitem{Schrijver}
Alexander Schrijver.
\newblock Finding k disjoint paths in a directed planar graph.
\newblock {\em {SIAM} J. Comput.}, 23(4):780--788, 1994.
\newblock \href {https://doi.org/10.1137/S0097539792224061}
  {\path{doi:10.1137/S0097539792224061}}.

\bibitem{SM}
Anand Srinivas and Eytan~H. Modiano.
\newblock Finding minimum energy disjoint paths in wireless ad-hoc networks.
\newblock {\em Wirel. Networks}, 11(4):401--417, 2005.
\newblock URL: \url{https://doi.org/10.1007/s11276-005-1765-0}, \href
  {https://doi.org/10.1007/S11276-005-1765-0}
  {\path{doi:10.1007/S11276-005-1765-0}}.

\bibitem{Valiant}
Leslie~G. Valiant.
\newblock The complexity of computing the permanent.
\newblock {\em Theor. Comput. Sci.}, 8:189--201, 1979.
\newblock \href {https://doi.org/10.1016/0304-3975(79)90044-6}
  {\path{doi:10.1016/0304-3975(79)90044-6}}.

\bibitem{HP}
Hein van~der Holst and Jos{\'{e}}~Coelho de~Pina.
\newblock Length-bounded disjoint paths in planar graphs.
\newblock {\em Discret. Appl. Math.}, 120(1-3):251--261, 2002.
\newblock \href {https://doi.org/10.1016/S0166-218X(01)00294-3}
  {\path{doi:10.1016/S0166-218X(01)00294-3}}.

\bibitem{Vollmer}
Heribert Vollmer.
\newblock {\em Introduction to Circuit Complexity - {A} Uniform Approach}.
\newblock Texts in Theoretical Computer Science. An {EATCS} Series. Springer,
  1999.
\newblock \href {https://doi.org/10.1007/978-3-662-03927-4}
  {\path{doi:10.1007/978-3-662-03927-4}}.

\bibitem{WZ}
Michał Włodarczyk and Meirav Zehavi.
\newblock Planar disjoint paths, treewidth, and kernels.
\newblock In {\em 2023 IEEE 64th Annual Symposium on Foundations of Computer
  Science (FOCS)}, pages 649--662, 2023.
\newblock \href {https://doi.org/10.1109/FOCS57990.2023.00044}
  {\path{doi:10.1109/FOCS57990.2023.00044}}.

\end{thebibliography}

\appendix
\section{Appendix}
\subsection{Left out proofs from \cref{sec:3}}\label{appendix:1}
The following are the left out proofs:
\begin{proof}[Proof of \cref{lemma:axise}]%\label{proof:axise}
	For a path $p$ in $P$ that has both its endpoints on the same face, say $f_1$, its contribution to the Axis-Crossing is uniquely determined by whether $P_{axis}$ passes through the even interval defined by its endpoints and the direction of $p$ (which depends on the setting of sources and sinks that is fixed). This is because the region enclosed by $p$ and $f_1$ (that does not contain $f_2$) has to contain an even number of terminals, and $p$ either cuts the $P_{axis}$ in the clockwise or counter-clockwise direction or does not cut $P_{axis}$ at all depending on its endpoints.
	
	Let $L_P,L_Q$ be the set of paths in $P$ and $Q$ respectively across the two faces. Now in $\mcP$ say, there are $u_1$ paths on $f_1$, and $u_2$ paths on $f_2$. Therefore, there are $\ell_1-u_1$ sinks and $k_1-\ell_1-u_1=\ell_1-u_1+q$ sources on $f_1$ that are matched to $L_P$ and $L_Q$. \begin{property}\label{prop:star}
		Therefore $\ell_1-u_1$ paths are directed from $f_2$ to $f_1$, and $\ell_1-u_1+q$ paths are directed from  $f_1$ to $f_2$ in $L_P$ and $L_Q$
	\end{property}Let $\wnd(L_Q)-\wnd(L_P)=c\in \bbZ$. The paths from $f_1$ to $f_2$ in $Q$ have $\AC$ = $c+$ $\AC$ of the paths from $f_1$ to $f_2$ in $P$, since the winding number is measured clockwise from $C_1$ to $C_2$. Similarly, the paths from $f_2$ to $f_1$ in $Q$ have $\AC$ = $-c+$ $\AC$ of the paths from $f_2$ to $f_1$ in $P$.  Therefore, using \cref{prop:star}, we have $\AC(Q)=c((\ell_1-u_1+q)-(\ell_1-u_1))+\AC(P)\equiv \AC(P) \text{ }(\operatorname{mod}q)$. Therefore, $\AC(P)\equiv O_{\mcP}\text{ } (\operatorname{mod} q)$ for some $O_{\mcP}\in\bbZ_q$.
\end{proof}
\begin{proof}[Proof of \cref{lemma:Lemma axis}]
	Observe that $\mcPi=\mcQi$ implies that the contribution to $\AC$ due to $\mcPi,\mcQi$ are the same. Hence we have to show that the contribution to $\AC$ due to the $q$ paths across $f_1$ and $f_2$ in $\mcP,\mcQ$ are different $(\bmod q)$. This is follows from \cite[Lemma 21]{DIKM}. %This follows since each type of Linkage of $q$ paths across the two faces have a different $\AC\bmod q$.
	
\end{proof}
\begin{proof}[Proof of \cref{lem:c1}]
	%Will be proven in the appendix?
	We show a bijection between the columns (path configurations) and rows (tuples of the form $(J_1,J_2,\tau)$). Take any path configuration $\mcP$. Define $J_i=\{\alpha_p|p\in\mcPi\}=\alpha^{(1)}(\mcP)$ as the \emph{alpha sequences}, and $\tau=O_{\mcP}$ when the set of sinks on $f_1$ and sources on $f_2$ are fixed to $J_1,J_2$ respectively. Also, note that specifying $\alpha^{(1)}(\mcP)=J_1,\alpha^{(2)}{\mcP}=J_2$ uniquely identifies $\mcPo,\mcPt$, and setting $O_{\mcP}=\tau$ uniquely determines the paths across $f_1$ and $f_2$ in $\mcP$ (from \cref{lemma:Lemma axis}). This establishes the required bijection. Hence, the matrix is square. 
	
	Now, the number of rows in the matrix is the number of ways to choose $q\le k$, and $J_i\in\binom{K_i}{\ell_i}$, which is $\le k2^{k_1}2^{k_2}=k4^k$.
\end{proof}
\begin{proof}[Proof of \cref{lem:comp}]
	Invertibility of $\bmM$ follows from \cref{sec:proof}. From \cref{lem:c1}, $\operatorname{Det}(\bmM)\le(k4^k)!< 2^{f_0(k)}$.
	
	First, observe that computing $\bmM$ can be done in a constant size circuit. For every $(J_1,J_2,\tau)$ and $\mcP$ check whether $\mcP$ is compatible with $G(J_,J_2)$ and its axis-crossing is $\tau$ --if yes, $\bmM[(J_1,J_2,\tau),\mcP]=1$, and $0$ otherwise, and there are only $k4^k\times k4^k$ such checks. Now that we have computed $\bmM$, it is easy to compute ($\Log$ computable, in fact in constant depth) $\operatorname{Adj}(\bmM)$ since $\bmM$ is a constant ($k4^k$) size matrix.
\end{proof}
\subsection{Example}\label{ex:one}
In this section we show with an example that the matrix $\bmM$ may not be triangular even by permuting its rows and columns. Consider the example with $3$ terminals on one face and $1$ on the other. Notice that we do not need the $P_{axis}$ in this example since there can only be one path across the two faces, and the coefficient of $[y^0]_1$ will be the same polynomial.
\begin{figure}[hbt!]
	
	\begin{minipage}[c]{\linewidth}
		\centering
		\includegraphics[width=12cm]{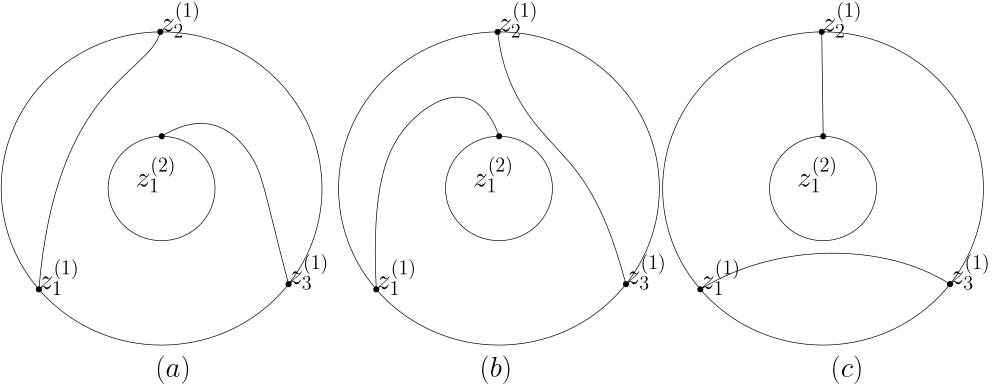}
		\caption{The configurations are $\mcP_1$,$\mcP_2$ and $\mcP_3$ respectively.}
		\label{fig:3v1}
	\end{minipage}
	
\end{figure}

Now we write the equations $\perm(J_1,J_2)$ by varying $J_i$ on the face $f_i$ (see \eqref{eq:E1}) as the set of sinks and sources on $f_1,f_2$ repectively.

\begin{minipage}{.4\textwidth}
	\begin{align*}
		\perm(\{z^{(1)}_1\},\{\})=h_{\mcP_1}(x)+h_{\mcP_3}(x)\\
		\perm(\{z^{(1)}_2\},\{\})=h_{\mcP_1}(x)+h_{\mcP_2}(x)\\
		\perm(\{z^{(1)}_3\},\{\})=h_{\mcP_2}(x)+h_{\mcP_3}(x)
	\end{align*}
\end{minipage}% This must go next to `\end{minipage}`
\begin{minipage}{.6\textwidth}
\begin{align*}
	\bmM=
	\begin{blockarray}{cccc}
		& \mcP_1 & \mcP_2 & \mcP_3  \\
		\begin{block}{c[ccc]}
			\left(\{z^{(1)}_1\},\{\}\right) & 1 & 0 & 1 \\
			\left(\{z^{(1)}_2\},\{\}\right) & 1 & 1 & 0 \\
			\left(\{z^{(1)}_3\},\{\}\right) & 0 & 1 & 1 \\
		\end{block}
	\end{blockarray}
\end{align*}
\end{minipage}

Observe that $\bmM$ is invertible but not triangular. Now we write the equations corresponding to \eqref{eq:E3},\eqref{eq:E4}:

\begin{align*}
	\mcF_{\mcP_1}(x)&=\perm(\{z^{(1)}_1\},\{\})+\perm(\{z^{(1)}_2\},\{\})-\perm(\{z^{(1)}_3\},\{\})=2h_{\mcP_1}(x)\\
	\mcF_{\mcP_2}(x)&=-\perm(\{z^{(1)}_1\},\{\})+\perm(\{z^{(1)}_2\},\{\})+\perm(\{z^{(1)}_3\},\{\})=2h_{\mcP_2}(x)\\
	\mcF_{\mcP_3}(x)&=\perm(\{z^{(1)}_1\},\{\})-\perm(\{z^{(1)}_2\},\{\})+\perm(\{z^{(1)}_3\},\{\})=2h_{\mcP_3}(x)
\end{align*}
\begin{minipage}{.4\textwidth}
\begin{align*}
	\bmF=
	\begin{blockarray}{cccc}
		& \mcP_1 & \mcP_2 & \mcP_3  \\
		\begin{block}{c[ccc]}
			\mcP_1 & 2 & 0 & 0 \\
			\mcP_2 & 0 & 2 & 0 \\
			\mcP_3 & 0 & 0 & 2 \\
		\end{block}
	\end{blockarray}
\end{align*}
\end{minipage}% This must go next to `\end{minipage}`
\begin{minipage}{.6\textwidth}
 Observe that $\bmF$ is triangular, in fact a diagonal matrix because of the simplicity in the particular example. In general, in may not even be triangular w.r.t the given ordering, hence we need to carefully choose an order to make it triangular.
\end{minipage}
\subsection{Extracting bivariate coefficients modulo $2^k$}
We want to compute the coefficients of a bivariate polynomal modulo powers of $2$. We do this by extending the univariate approach in \cite{DJ}. They consider the ring $\mfR=\bbZ[x]/(p(x))$ for some irreducible polynomial $p(x)$ in $\bbZ_2[x]$. From this they derive the ring $\mfR_k=\bbZ[x]/(2^k,p(x))$. In particular $\mfR_1=\bbZ_2[x]/(p(x))$ is a finite field of characteristic $2$. 

Our analogue of \cite[Lemmata~21,22]{DJ} is as follows:

\begin{lemma}
	Let $\bbF$ be a finite field of characteristic $2$, and order $q$.
	\[\sum_{a,b\in\bbF^*}a^{m_1}b^{m_2}=\begin{cases} 
	1 & \text{if }q-1\mid m_1\wedge q-1\mid m_2 \\
	0 & \text{ otherwise}
	\end{cases}
	\]
	Let $f(x,y)=\sum_{i,j=0}^dc_{i,j}x^iy^j$ be a polynomial with integer coefficients where the individual degrees are bounded by $d<q-1$. Then for any $t_1,t_2\le d$:
	\[\sum_{a,b\in\bbF^*}a^{q-1-t_1}b^{q-1-t_2}f(a,b)=c_{t_1,t_2} \bmod 2
	\]
\end{lemma}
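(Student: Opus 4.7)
My plan is to prove the lemma in two stages, first establishing a character-sum identity in $\bbF$ and then applying it coefficient-wise.

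First I would evaluate $\sum_{a\in\bbF^{*}} a^{m}$ using the fact that $\bbF^{*}$ is cyclic of order $q-1$. Picking a generator $g$ and writing the sum as $\sum_{k=0}^{q-2}(g^{m})^{k}$, the geometric series equals $q-1$ when $q-1\mid m$ (every summand is $1$) and otherwise telescopes to $(g^{m(q-1)}-1)/(g^{m}-1)=0$. Because $q$ is a power of $2$, the integer $q-1$ is odd, so in $\bbF$ it reduces to $1$. The bivariate identity then factorises as $\sum_{a,b\in\bbF^{*}} a^{m_{1}}b^{m_{2}} = \bigl(\sum_{a}a^{m_{1}}\bigr)\bigl(\sum_{b}b^{m_{2}}\bigr)$, which evaluates to $1$ precisely when $q-1$ divides both $m_{1}$ and $m_{2}$ and vanishes otherwise, as claimed.

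For the extraction formula, I would substitute $f(a,b)=\sum_{i,j=0}^{d} c_{i,j}a^{i}b^{j}$ and interchange finite summations to obtain
\[
\sum_{a,b\in\bbF^{*}} a^{q-1-t_{1}}b^{q-1-t_{2}} f(a,b) \;=\; \sum_{i,j=0}^{d} c_{i,j}\sum_{a,b\in\bbF^{*}} a^{q-1-t_{1}+i}b^{q-1-t_{2}+j}.
\]
By the first part, the inner sum contributes $1\bmod 2$ exactly when $q-1\mid i-t_{1}$ and $q-1\mid j-t_{2}$, and $0$ otherwise. The hypothesis $d<q-1$ combined with $0\le i,j,t_{1},t_{2}\le d$ makes $|i-t_{1}|$ and $|j-t_{2}|$ strictly smaller than $q-1$, so these divisibilities force $i=t_{1}$ and $j=t_{2}$. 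Only this term survives, contributing $c_{t_{1},t_{2}}\bmod 2$.

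This is essentially the two-variable analogue of the character-sum trick used in \cite{DJ} for the univariate case, and I do not expect any real obstacle. The one subtlety worth flagging is that the nonzero value of the character sum equals $q-1$ in $\bbZ$ and only collapses to $1$ because $q-1$ is odd in characteristic $2$; this step fails in any other characteristic and is the reason the whole approach is tied to powers of $2$. The generalisation to constantly many variables mentioned in the footnote of \cref{lemma:dj} would be carried out in exactly the same way, by iterated factorisation of the character sum.
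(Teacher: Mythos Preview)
Your proposal is correct and follows essentially the same route as the paper: factor the double sum as a product of two univariate character sums (the paper simply cites \cite[Lemma~21]{DJ} for the univariate evaluation, which you spell out via the geometric series), and then interchange summations in the extraction step so that the first identity together with the degree bound $d<q-1$ isolates $c_{t_1,t_2}$. Your remark that $q-1$ is odd and hence equals $1$ in characteristic~$2$ is exactly the point that makes the nonzero case come out to $1$ rather than $q-1$.
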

\begin{proof}
	\[\sum_{a,b\in\bbF^*}a^{m_1}b^{m_2}=\sum_{a\in\bbF^*}a^{m_1}\sum_{b\in\bbF^*}b^{m_2}=1 \text{ if both the terms are equal to $1$, and $0$ otherwise.}
	\] 
	Hence, part 1 of the Lemma follows from \cite[Lemma 21]{DJ}. Moving on two the second identity:
	\[\sum_{a,b\in\bbF^*}a^{q-1-t_1}b^{q-1-t_2}f(a,b)=\sum_{i,j}^d\sum_{a\in\bbF^*}a^{q-1-t_1+i}\sum_{b\in\bbF^*}b^{q-1-t_2+j}c_{i,j}=c_{t_1,t_2}.\]
\end{proof}
From the above lemma, we can compute the coefficients of $f$ modulo $2$. Now we lift these coefficients to higher powers of two using analogous techniques as in \cite{DJ}.
\begin{lemma}\label{lemma:extract}
	\[\sum_{\vec a,\vec b\in (\bbF^*)^{2^{k-1}}} \left(\prod_{a\in\vec a}a\right)^{m_1}\left(\prod_{b\in\vec b}b\right)^{m_2}= \sum_{a,b\in\bbF^*}(a^{m_1}b^{m_2})^{2^{k-1}}=\begin{cases}
		1 \bmod 2^k& \text{if } q-1 \mid m_1\wedge q-1\mid m_2\\ 
		0 \bmod 2^k& \text{ otherwise}
	\end{cases}
	\]
	\[\sum_{\vec a,\vec b\in (\bbF^*)^{2^{k-1}}}\left(\prod_{a\in\vec a}a\right)^{q-1-t_1}\left(\prod_{b\in\vec b}b\right)^{q-1-t_2} f\left(\left(\prod_{a\in\vec a}a\right),\left(\prod_{b\in\vec b}b\right)\right)=c_{t_1,t_2}\bmod 2^k
	\]
	where the above computations are performed in the ring $\mfR_k$.
\end{lemma}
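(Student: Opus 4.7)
The plan is to prove the key identity by factoring the iterated sum into two univariate sums, reducing each modulo $2$ via the preceding lemma, and then lifting from $\bmod\,2$ to $\bmod\,2^k$ by a standard $2$-adic argument. The coefficient-extraction identity then drops out by linearity in $f$.

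For the first display, I would expand the $2^{k-1}$-fold products and swap sum with product to get
\[
\sum_{\vec a,\vec b\in (\bbF^*)^{2^{k-1}}} \left(\prod_{a\in\vec a}a\right)^{m_1}\left(\prod_{b\in\vec b}b\right)^{m_2} = \left(\sum_{a\in\bbF^*} a^{m_1}\right)^{2^{k-1}} \left(\sum_{b\in\bbF^*} b^{m_2}\right)^{2^{k-1}}.
\]
The preceding lemma identifies each factor $S_i:=\sum_{c\in\bbF^*} c^{m_i}$ modulo $2$ as $\epsilon_i\in\{0,1\}$, with $\epsilon_i=1$ iff $q-1\mid m_i$. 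The crux is then a classical $2$-adic fact, valid in any commutative ring containing $\bbZ$: if $s\equiv 0\bmod 2$ then $s^{2^{k-1}}\equiv 0\bmod 2^k$ (writing $s=2t$ gives $s^{2^{k-1}}=2^{2^{k-1}}t^{2^{k-1}}$, and $2^{k-1}\ge k$ for all $k\ge 1$); if $s\equiv 1\bmod 2$ then $s^{2^{k-1}}\equiv 1\bmod 2^k$ (write $s=1+2t$ and induct on $k$: from $(1+2t)^{2^{j-1}}=1+2^j u$, squaring yields $(1+2t)^{2^j}=1+2^{j+1}u+2^{2j}u^2\equiv 1\bmod 2^{j+1}$). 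Applying this factor-by-factor collapses the LHS to $\epsilon_1\epsilon_2\bmod 2^k$, matching the advertised case distinction. The middle form $\sum_{a,b}(a^{m_1}b^{m_2})^{2^{k-1}}$ corresponds to the same computation organised via the exponents $m_i\cdot 2^{k-1}$; it agrees with the cases because $\gcd(q-1,2)=1$ makes $q-1\mid m_i\cdot 2^{k-1}$ equivalent to $q-1\mid m_i$.

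For the second display, I would expand $f(A,B)=\sum_{i,j=0}^{d} c_{i,j}\,A^{i}B^{j}$ with $A=\prod_{a\in\vec a}a$ and $B=\prod_{b\in\vec b}b$, pull $c_{i,j}$ out of the $(\vec a,\vec b)$-sum by linearity, and apply the first display to each inner sum with exponents $(q-1-t_1+i,\,q-1-t_2+j)$. That inner sum is $1\bmod 2^k$ exactly when $q-1\mid i-t_1$ and $q-1\mid j-t_2$; since $0\le i,t_1\le d<q-1$ we have $|i-t_1|<q-1$, forcing $i=t_1$, and similarly $j=t_2$. Thus only the $(t_1,t_2)$-term survives and the whole sum equals $c_{t_1,t_2}\bmod 2^k$.

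The main (and really only) obstacle is verifying the $2$-adic lifting in $\mfR_k$ rather than just in $\bbZ/2^k\bbZ$; this is immediate since the inductive argument relies only on commutativity and on $\bbZ\subset\mfR_k$. The bivariate extension is otherwise transparent: because $a$ and $b$ range independently, the double sum factors and the proof reduces to two parallel copies of the univariate argument of \cite[Lemma~22]{DJ}.
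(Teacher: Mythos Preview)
Your proof is correct and follows essentially the same approach as the paper, which simply defers to \cite[Lemma~23]{DJ}: factor the iterated sum, identify each factor modulo $2$ via the preceding lemma, and lift to modulo $2^k$ by the standard $2$-adic squaring argument, then derive the coefficient-extraction formula by linearity. Your write-up is in fact more explicit than the paper's one-line reference; the only imprecision is writing ``$\bbZ\subset\mfR_k$'' (it is $\bbZ/2^k\bbZ$ that embeds), but the inductive lifting argument needs only commutativity and the element $2$, so this does not affect correctness.
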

\begin{proof}
	This is a lift of the previous lemma from modulo $2$ to modulo $2^k$, the proof of which is analogous to that in \cite[Lemma 23]{DJ}. The above computations can be done in $\ParityL$ and hence in $\NC^2$ using \cite[Lemma 20]{DJ}. 
\end{proof}
See \cite{DJ} for an explicit realisation of the field $\bbF$. The above techniques generalise to constantly many variables as well.

\end{document}